\title{Approximate Search for Known Gene Clusters in New Genomes Using \PQT{s}}
\author{Galia R. Zimerman}{Ben Gurion University of the Negev, Israel}{zimgalia@gmail.com}{}{}
\author{Dina Svetlitsky}{Ben Gurion University of the Negev, Israel}{dina.svetlitsky@gmail.com}{}{}
\author{Meirav Zehavi}{Ben Gurion University of the Negev, Israel$^1$}{meiravze@bgu.ac.il}{}{}
\author{Michal Ziv-Ukelson}{Ben Gurion University of the Negev, Israel\footnote{Corresponding authors.}}{michaluz@cs.bgu.ac.il}{}{}
\authorrunning{G.\,R. Zimerman, D. Svetlitsky, M. Zehavi and M. Ziv-Ukelson} 
\keywords{PQ-Tree, Gene Cluster, Efflux Pump} 
\newcommand{\pqt}{PQ-tree}
\newcommand{\PQT}{PQ-Tree}
\newcommand{\R}{\mathbb{R}}
\newcommand{\N}{\mathbb{N}}
\newcommand{\jisp}{\textsc{Job Interval Selection} problem}
\newcommand{\JISP}[1]{\textsc{JISP#1}}
\newcommand{\A}{\mathcal{A}}
\newcommand{\xC}[1]{x^{(C #1)}}
\newcommand{\x}[1]{x^{(#1)}}
\newcommand{\xis}[1]{x_{[i#1]}}
\newcommand{\M}{\mathcal{M}}
\newcommand{\Q}{\mathcal{Q}}
\newcommand{\PP}{\mathcal{P}}
\newcommand{\D}{\mathcal{D}}
\newcommand{\Mc}{\mathcal{D}_\leq(C,k_T,k_S)}
\newcommand{\Mi}{\mathcal{D}_\leq(\xis{},k_T,k_S)}
\newcommand{\Mxc}{\mathcal{D}(\xC{},k_T,k_S)}
\newcommand{\Mxi}{\mathcal{D}(\x{i},k_T,k_S)}
\newcommand{\Mx}{\mathcal{D}(x,k_T,k_S)}
\newcommand{\Dm}{\mathcal{D}_M(x_j,i,k_T,k_S)}
\newcommand{\otom}{one-to-one mapping}
\newcommand{\up}{\textsc{\PQT{} Search}}
\newcommand{\ifff}{if and only if}
\newcommand{\xlabel}[1]{$\mathsf{label}(#1)$}
\newcommand{\dels}[1]{\mathsf{del}_S(#1)}
\newcommand{\delt}[1]{\mathsf{del}_T(#1)}
\newcommand{\len}[1]{\mathsf{len}(#1)}
\newcommand{\xspan}[1]{\mathsf{span}(#1)}
\newcommand{\children}[1]{\mathsf{children}(#1)}
\newcommand{\leaves}[1]{\mathsf{leaves}(#1)}
\newcommand{\rmdel}[2]{\mathsf{removeDel}(#1, #2)}
\newcommand{\adddel}[2]{\mathsf{addDel}(#1,#2)}
\newcommand{\remove}[2]{\mathsf{remove}(#1,#2)}
\newcommand{\add}[2]{\mathsf{add}(#1,#2)}
\newcommand{\abs}{{\em ab addition}}
\newcommand{\E}{E_{I}}
\newcommand{\dt}{del_T}
\newcommand{\ds}{del_S}
\newcommand{\alg}{PQFinder}
\newcommand{\OO}{\mathcal{O}}
\DeclareMathOperator*{\argmax}{arg\max}
\newcommand{\FPT}{\textsf{FPT}}
\newcommand{\NPH}{\textsf{NP}-hard}
\newcommand{\NPC}{\textsf{NP}-complete}
\newcommand{\para}[1]{\noindent\textbf{#1}}
\definecolor{darkred}{rgb}{0.8,0.0,0.0}
\definecolor{applegreen}{rgb}{0.5,0.81,0.0}
\begin{document}
\maketitle
\begin{abstract}
We define a new problem in comparative genomics, denoted \up{}, that takes as input a \pqt{} $T$ representing the known gene orders of a gene cluster of interest, a gene-to-gene substitution scoring function $h$, integer parameters $d_T$ and $d_S$, and a new genome $S$. The objective is to identify in $S$ approximate new instances of the gene cluster that could vary from the known gene orders by genome rearrangements that are constrained by $T$, by gene substitutions that are governed by $h$, and by gene deletions and insertions that are bounded from above by $d_T$ and $d_S$, respectively.
We prove that the \up{} problem is \NPH{} and propose a parameterized algorithm that solves the optimization variant of \up{} in $O^*(2^{\gamma})$ time, where $\gamma$ is the maximum degree of a node in $T$ and $O^*$ is used to hide factors polynomial in the input size.

The algorithm is implemented as a search tool, denoted \alg{}, and applied to search for instances of chromosomal gene clusters in plasmids, within a dataset of 1,487 prokaryotic genomes. We report on 29 chromosomal gene clusters that are rearranged in plasmids, where the rearrangements are guided by the corresponding \pqt{}. One of these results, coding for a heavy metal efflux pump, is further analysed to exemplify how \alg{} can be harnessed to reveal interesting new structural variants of known gene clusters. 

\noindent{\bf Availability}  The code for the tool as well as all the data needed to reconstruct the results are publicly available on GitHub (\url{github.com/GaliaZim/PQFinder}).
\end{abstract}

\section{Introduction}\label{sec:intro}

Recent advances in pyrosequencing techniques, combined with global efforts to study infectious diseases, yield huge and rapidly-growing databases of microbial genomes \cite{tatusova2014refseq,wattam2014patric}. This big new data statistically empowers genomic-context based approaches to functional analysis: the biological principle underlying such analysis is that groups of genes that appear together consistently across many genomes often code for proteins that interact with one another, suggesting a common functional association. Thus, if the functional association and annotation 
of the clustered genes is already known in one (or more) of the genomes, this information can be used to infer functional characterization of homologous genes that are clustered together in another genome.

Groups of genes that are co-locally conserved across many genomes are denoted {\em gene clusters}. The locations of the group of genes comprising a gene cluster in the distinct genomes are denoted {\em instances}. Gene clusters in prokaryotic genomes often correspond to (one or several) operons; those are neighbouring genes that constitute a single unit of transcription and translation. However, the order of the genes in the distinct instances of a gene cluster may not be the same. 

The discovery (i.e. data-mining) of conserved gene clusters in a given set of genomes is a well studied problem \cite{bocker2009computation, he2005identifying,winter2016finding}. However, with the rapid sequencing of prokaryotic genomes a new problem is inspired: Namely, given an already known gene cluster that was discovered and studied in one genomic dataset, to identify all the instances of the gene cluster in a given new genomic sequence. 

One exemplary application for this problem is the search for chromosomal gene clusters in plasmids. Plasmids are circular genetic elements that are harbored by prokaryotic cells where they replicate independently from the chromosome. They can be transferred horizontally and vertically, and are considered a major driving force in prokaryotic evolution, providing mutation supply and constructing new operons with novel functions \cite{norris2013plasmids}, for example antibiotic resistance \cite{he2016mechanisms}.
This motivates biologists to search for chromosomal gene clusters in plasmids, and to study structural variations between the instances of the found gene clusters across the two distinct replicons. 
However, in addition to the fact that plasmids evolve independently from chromosomes and in a more rapid pace \cite{eberhard1990evolution}, their sequencing, assembly and annotation involves a more noisy process \cite{orlek2017plasmid}. 

To accommodate all this, the proposed search approach should be an approximate one, 
sensitive enough to tolerate some amount of genome rearrangements: transpositions and inversions, missing and intruding genes, and classification of genes with similar function to distinct orthology groups due to sequence divergence or convergent evolution. Yet, for the sake of specificity and search efficiency, we consider confining the allowed variations by two types of biological knowledge:  (1) bounding the allowed rearrangement events considered by the search, based on some grammatical model trained specifically from the known gene orders of the gene cluster, and (2) governing the gene-to-gene substitutions considered by the search by combining sequence homology with functional-annotation based semantic similarity. 

{\bf (1) Bounding the allowed rearrangement events.} The \pqt{} \cite{booth1976testing} is a combinatorial data structure classically used to represent gene clusters \cite{bergeron2008formal}.
A \pqt{} of a gene cluster describes its hierarchical inner structure and the relations between instances of the cluster succinctly, aids in filtering meaningful from apparently meaningless clusters, and also gives a natural and meaningful way of visualizing complex clusters.  A \pqt{} is a rooted tree with three types of nodes: {\em P-nodes}, {\em Q-nodes} and leaves. The children of a P-node can appear in any order, while the children of a Q-node must appear in either left-to-right or right-to-left order. 
(In the special case when a node has exactly two children, it does not matter whether it is labeled as a P-node or a Q-node.)
Booth and Lueker \cite{booth1976testing}, who introduced this data structure, were interested in representing a set of permutations over a set $U$, i.e. every member of $U$ appears exactly once as a label of a leaf in the \pqt{.} We, on the other hand, allow each member of $U$ to appear as a label of a leaf in the tree any non-negative number of times. Therefore, we will henceforth use the term {\em string} rather than {\em permutation} when describing the gene orders derived from a given \pqt{}.

\begin{figure}[t]
	\centering
	\includegraphics[width=\linewidth]{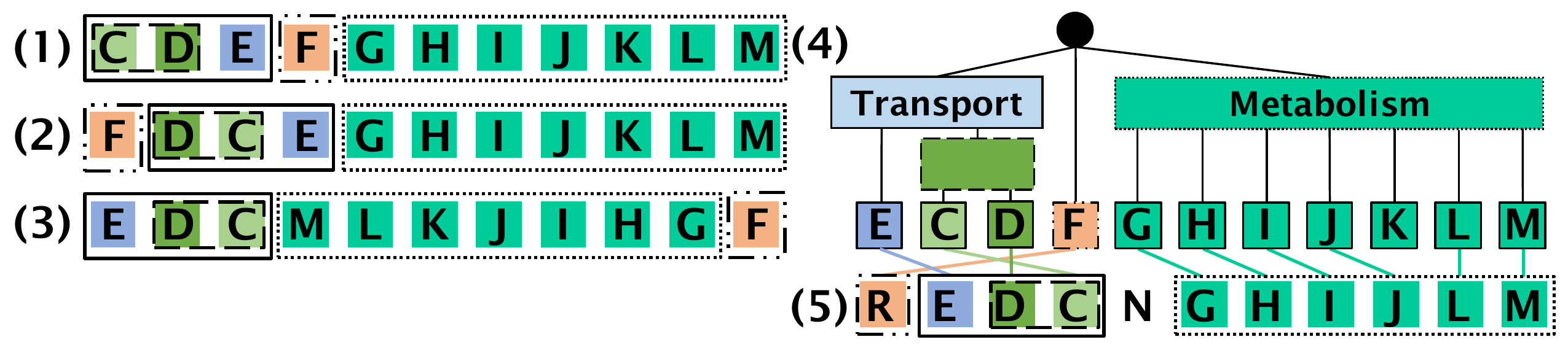}
	\caption{{\small A gene cluster containing most of the genes of the \textit{PhnCDEFGHIJKLMNOP} operon \cite{metcalf1993evidence} and the corresponding \pqt{.} The \textit{Phn} operon encodes proteins that utilize phosphonate as a nutritional source of phosphorus in prokaryotes. The genes \textit{PhnCDE} encode a phosphonate transporter, the genes \textit{PhnGHIJKLM} encode proteins responsible for the conversion of phosphonates to phosphate, and the gene \textit{PhnF} encodes a regulator. {\bf (1)-(3).} The three distinct gene orders found among 47 chromosomal instances of the $Phn$ gene cluster. 
	{\bf (4).} A \pqt{} representing the $Phn$ gene cluster, constructed from its three known gene orders shown in {1-3}. {\bf (5).} An example of a $Phn$ gene cluster instance identified by the \pqt{} shown in (4),  and the \otom{} between the leaves of the \pqt{} and the genes comprising the instance. The instance genes are rearranged differently from the gene orders shown in {1-3} and yet can be derived from the \pqt{}. In this mapping, gene $F$ is substituted by gene $R$, gene $N$ is an intruding gene (i.e., deleted from the instance string), and gene $I$ is a missing gene (i.e., deleted from the \pqt{}). 
}}
\label{fig:Phn}
\end{figure}

An example of a \pqt{} is given in \cref{fig:Phn}. It represents a $Phn$ gene cluster that encodes proteins that utilize phosphonate as a nutritional source of phosphorus in prokaryotes \cite{metcalf1993evidence}. 
The biological assumptions underlying the representation of gene clusters as \pqt{s} is that operons evolve via progressive merging of sub-operons, where the most basic units in this recursive operon assembly are colinearly conserved sub-operons \cite{fondi2009origin}. In the case where an operon is assembled from sub-operons that are colinearly dependent, the conserved gene order could correspond, e.g., to the order in which the transcripts of these genes interact in the metabolic pathway in which they are functionally associated \cite{wells2016operon}. Thus, transposition events shuffling the order of the genes within this sub-operon could reduce its fitness. On the other hand, inversion events, in which the genes participating in this sub-operon remain colinearly ordered are accepted. This case is represented in the \pqt{} by a Q-node (marked with a rectangle). In the case where an operon is assembled from sub-operons that are not colinearly co-dependent, convergent evolution could yield various orders of the assembled components \cite{fondi2009origin}. This case is represented in the \pqt{} by a P-node (marked with a circle). 
Learning the internal topology properties of a gene cluster from its corresponding gene orders and constructing a query \pqt{} accordingly, could empower the search to confine the allowed rearrangement operations so that colinear dependencies among genes and between sub-operons are preserved.

{\bf (2) Governing the gene-to-gene substitutions.} A prerequisite for gene cluster discovery is to determine how genes relate to each other across all the genomes in the dataset. In our experiment, genes are represented by their membership in Clusters of Orthologous Groups (COGs) \cite{tatusov2000cog}, where the sequence similarity of two genes belonging to the same COG serves as a proxy for homology.  
Despite low sequence similarity, genes belonging to two different COGs could have a similar function, which would be reflected in the functional description of the respective COGs. Using methods from natural language processing \cite{salton1975vector}, we compute for each pair of functional descriptions a score reflecting their semantic similarity. Combining sequence and functional similarity could increase the sensitivity of the search and promote the discovery of systems with related functions. 

\subparagraph{Our Contribution and Roadmap.}
In this paper we define a new problem in comparative genomics, denoted \up{} {\bf (in \cref{sec:prelim})}, that takes as input a \pqt{} $T$ (the query) representing the known gene orders of a  gene cluster of interest, a gene-to-gene substitution scoring function $h$, integer parameters $d_T$ and $d_S$, and a new genome $S$ (the target). The objective is to identify in $S$ a new approximate instance of the gene cluster that could vary from the known gene orders by genome rearrangements that are constrained by $T$, by gene substitutions that are governed by $h$, and by gene deletions and insertions that are bounded from above by $d_T$ and $d_S$, respectively. We prove that \up{} is \NPH{} {\bf (\cref{theorem:up-nph} in \cref{sec:NPh})}.

We define an optimization variant of \up{} and propose an algorithm {\bf (in \cref{sec:algorithm})} that solves it in $O(n \gamma {d_T}^2 {d_S}^2 (m_p \cdot 2^{\gamma} + m_q))$ time, where $n$ is the length of $S$, $m_p$ and $m_q$ denote the number of P-nodes and Q-nodes in $T$, respectively, and $\gamma$ denotes the maximum degree of a node in $T$.
In the same time and space complexities, we can also report all approximate instances of $T$ in $S$ and not only the optimal one. 

The algorithm is implemented as a search tool, denoted \alg{}. The code for the tool as well as all the data needed to reconstruct the results are publicly available on GitHub (\url{github.com/GaliaZim/PQFinder}). The tool is applied to search for instances of chromosomal gene clusters in plasmids, within a dataset of 1,487 prokaryotic genomes. In our preliminary results {\bf (given in \cref{sec:results})}, we report on 29 chromosomal gene clusters that are rearranged in plasmids, where the rearrangements are guided by the corresponding \pqt{}. One of these results, coding for a heavy metal efflux pump, is further analysed to exemplify how \alg{} can be harnessed to reveal interesting new structural variants of known gene clusters. 

\subparagraph{Previous Related Works.} Permutations on strings representing gene clusters have been studied earlier by 
\cite{bergeron2002algorithmic,eres2003combinatorial,heber2001algorithms,schmidt2004quadratic,uno2000fast}. 
\pqt{s} were previously applied in physical mapping \cite{alizadeh1995physical,christof1997branch}, as well as to other
comparative genomics problems \cite{berard2007perfect,bergeron2004reversal,landau2005gene}.

In Landau et al. \cite{landau2005gene} an algorithm was proposed for representation and detection of gene clusters in multiple genomes, using \pqt{s}: the proposed algorithm computes a \pqt{} of $k$ permutations of length $n$ in $O(kn)$ time, and it is proven that the computed \pqt{} is the one with a minimum number of possible rearrangements of its nodes while still representing all $k$ permutations.
In the same paper, the authors also present a general scheme to handle gene multiplicity and missing genes in permutations.
For every character that appears $a$ times in each of the $k$ strings, the time complexity for the construction of the \pqt{}, according to the scheme in that paper, is multiplied by an $O((a!)^k)$ factor.

Additional applications of PQ-trees to genomics were studied in \cite{adam2007common,bergeron2004reconstructing,parida2006using}, where PQ-trees were considered to represent and reconstruct ancestral genomes.

However, as far as we know, searching for approximate instances of a gene cluster that is represented as a \pqt{}, in a given new string, is a new computational problem.

\section{Preliminaries}\label{sec:prelim}
Let $\Pi$ be an NP-hard problem. In the framework of Parameterized Complexity, each instance of $\Pi$ is associated with a {\em parameter} $k$, and the goal is to confine the combinatorial explosion in the running time of an algorithm for $\Pi$ to depend only on $k$. 
Formally, $\Pi$ is {\em fixed-parameter tractable (\FPT{})} if any instance $(I, k)$ of $\Pi$ is solvable in time $f(k)\cdot |I|^{\OO(1)}$, where $f$ is an arbitrary computable function of $k$. 
Nowadays, Parameterized Complexity supplies a rich toolkit to design or refute the existence of \FPT{} algorithms \cite{DBLP:books/sp/CyganFKLMPPS15,DBLP:series/txcs/DowneyF13,fomin2019kernelization}.

\subparagraph{\PQT{:} Representing the Pattern.}
The possible reordering of the children nodes in a \pqt{} may create many equivalent \pqt{s}. Booth and Lueker \cite{booth1976testing} defined two \pqt{s} $T, \ T'$ as {\em equivalent} (denoted $T \equiv T'$) if one tree can be obtained by legally reordering the nodes of the other; namely, randomly permuting the children of a P-node, and reversing the children of a Q-node. To allow for deletions in the \pqt{s}, a generalization of their definition is given in \cref{def:quasi} below. Here, {\em smoothing} is a recursive process in which if by deleting leaves from a tree, $T$, some internal node $x$ of $T$ is left without children, then $x$ is also deleted, but its deletion is not counted (i.e. only leaf deletions are counted).
\begin{definition}[Quasi-Equivalence Between \PQT{s}] \label{def:quasi}
For any two \pqt{s}, $T$ and $T'$, the \pqt{} $T$ is {\em quasi-equivalent to $T'$ with a limit $d$}, denoted $T\succeq_d T'$, if $T'$ can be obtained from $T$ by (a) randomly permuting the children of some of the P-nodes of $T$, (b) reversing the children of some of the Q-nodes of $T$, and (c) deleting up to $d$ leaves from $T$ and applying the corresponding smoothing. (The order of the operations does not matter.)
\end{definition}
\Cref{equiv-pqts} shows two equivalent \pqt{s} (\cref{fig:pqt1}, \cref{fig:pqt2}) that are each quasi-equivalent with $d=1$ to the third \pqt{} (\cref{fig:pqt3}). 
The {\em frontier} of a \pqt{} $T$, denoted $F(T)$, is the sequence of labels on the leaves of $T$ read from left to right. For example, the frontier of the \pqt{} in \cref{fig:Phn} is $CDEFMLKJIHG$. 
It is interesting to consider the set of frontiers of all the equivalent \pqt{s}, defined in \cite{booth1976testing} as {\em consistent frontiers} and denoted by $C(T)=\{F(T'):T\equiv T'\}$. Intuitively, $C(T)$ is the set of all leaf label sequences defined by the \pqt{} structure and obtained by legally reordering its nodes. Here, we generalize the consistent frontiers definition to allow a bounded number of deletions from $T$, using quasi-equivalence.
\begin{definition}[$d$-Bounded Quasi-Consistent Frontiers]\label{def:C(T)}
$C_{d}(T)=\{F(T'):T\succeq_{d} T'\}$. 
\end{definition}
clearly $C_0(T)=C(T)$, and so in a setting where $d=0$ the latter notation is used.
For a node $x$ of a \pqt{} $T$, the subtree of $T$ rooted in $x$ is denoted by $T(x)$, the set of leaves in $T(x)$ is denoted by $\leaves{x}$, and the {\em span} of $x$ (denoted $\xspan{x}$) is defined as $|\leaves{x}|$.

\subparagraph{\PQT{} Search and Related Terminology.} \label{sec:up}
An instance of the \up{} problem is a tuple $(T, S, h, d_T, d_S)$, where $T$ is a \pqt{} with $m$ leaves, $m_p$ P-nodes, $m_q$ Q-nodes and every leaf $x$ in $T$ has a label \xlabel{x}$\in\Sigma_T$; $S=\sigma_1 \dotso \sigma_n \in {\Sigma_S^n}$ is a string of length $n$ representing the input genome; $d_T\in\mathbb{N}$ specifies the number of allowed deletions from $T$; $d_S\in\mathbb{N}$ specifies the number of allowed deletions from $S$; and $h$ is a {\em boolean substitution function}, describing the possible substitutions between the leaf labels of $T$ and the characters of the given string, $S$. 
Formally, $h$ is a function that receives a pair $(\sigma_t, \sigma_s)$, where $\sigma_t \in \Sigma_T$ is one of the labels on the leaves of $T$, and $\sigma_s \in \Sigma_S$ is one of the characters of the given string, $S$, and returns $True$ if $\sigma_t$ can be replaced with $\sigma_s$, and $False$, otherwise.
Considering the biological problem at hand, $\Sigma_T$ and $\Sigma_S$ are both sets of genes.
For $1\leq i\leq j \leq n$, $S'=S[i:j] = \sigma_i...\sigma_j$ is a substring of $S$ beginning
at index $i$ and ending at index $j$. The substring $S'$ is a {\em prefix} of $S$ if $S'=S[1:j]$ and it is a {\em suffix} of $S$ if $S'=S[i:n]$. In addition,  we denote $\sigma_i$, the $i^{\textrm{th}}$ character of $S$, by $S[i]$.
 
The objective of \up{} is to find a \otom{} $\M{}$ between the leaves of $T$ and the characters of a substring $S'$ of $S$, that comprises a set of pairs each having one of three forms: the substitution form, $(x,\sigma_s(\ell))$, where $x$ is a leaf in $T$, $\sigma_s \in \Sigma_S$, $h(\mathsf{label}(x),\sigma_s) = True$ and $\ell \in \{1,\dotso{,}n\}$ is the index of the occurrence of $\sigma_s$ in $S$ that is mapped to the leaf $x$; the character deletion form, $(\varepsilon, \sigma_s(\ell))$, which marks the deletion of the character $\sigma_s\in \Sigma_S$ from the index $\ell$ of $S$; the leaf deletion form, $(x, \varepsilon)$, which marks the deletion of $x$, a leaf node of $T$.

To account for the number of deletions of characters of $S'$ and leaves of $T$ in $\M{}$, the number of pairs in $\M{}$ of the form $(\varepsilon,\sigma)$ are marked by $\dels{\M{}}$ and the number of pairs in $\M{}$ of the form $(x,\varepsilon)$ are marked by $\delt{\M{}}$.
Applying the substitutions defined in $\M{}$ to $S'$ resulting in the string $S_{\M{}}$ is the process in which for every $(x,\sigma_s(\ell))\in \M{}$, the character $\sigma_s$ at index $\ell$ of $S$ is deleted if $x=\varepsilon$, and otherwise substituted by $x$. This process is demonstrated in \cref{fig:derivation-string}.
We say that $S'$ is {\em derived} from $T$ {\em under} $\M{}$ with $d_T$ deletions from the tree and $d_S$ deletions from the string, if $d_T=\delt{\M{}}$, $d_S=\dels{\M{}}$ and $S_{\M{}}\in C_{d_T}(T)$. Thus, by definition, there is a \pqt{} $T'$ such that $F(T')=S_{\M{}}$ and $T \succeq_{d_T} T'$. Note that the deletions of the nodes in $T$ to obtain the nodes in $T'$ are determined by $\M{}$. The conversion of $T$ to $T'$ as defined by the derivation is illustrated in \cref{fig:derivation-tree}.
The set of permutations and node deletions performed to obtain $T'$ from $T$ together with the substitutions and deletions from $S'$ specified by $\M{}$ is named the {\em derivation $\mu$} of $T$ to $S'$. We also say that $\M{}$ {\em yields} the derivation $\mu$.

For a derivation $\mu$ of $T$ to $S'=S[s:e]$, we give the following terms and notations (illustrated in \cref{fig:derivation}). The root of $T$ is {\em the node that $\mu$ derives} or {\em the root of the derivation} and it is denoted by $\mu.v$. For abbreviation, we say that $\mu$ {\em is a derivation of $\mu.v$}. The substring $S'$ is {\em the string that $\mu$ derives}. We name $s$ and $e$ the start and end points of the derivation and denote them by $\mu.s$ and $\mu.e$, respectively. The \otom{} that yields $\mu$ is denoted by $\mu.o$. The number of deletions from the tree is denoted by $\mu.\dt$. The number of deletions from the string is denoted by $\mu.\ds$.
In addition, if $x$ is a leaf node in $T$ and $(x,\sigma_s(\ell))\in \mu.o$, then $x$ is {\em mapped to} $S[\ell]$ {\em under} $\mu$. The character $S[\ell]$ is said to be {\em deleted under} $\mu$ if $(\varepsilon,\sigma_s(\ell))\in \mu.o$. If $x \in T(\mu.v)$ is a leaf for which $(x,\varepsilon)\in \mu.o$, then $x$ is {\em deleted under} $\mu$. For an internal node of $T$, $x$, if every leaf in $T(x)$ is deleted under $\mu$, then $x$ is {\em deleted under} $\mu$, and otherwise $x$ is {\em kept under} $\mu$.

We define two versions of the \up{} problem: a decision version (\cref{def:decision-up}) and an optimisation version (\cref{def:opti-up}).

\begin{definition}[Decision \PQT{} Search]\label{def:decision-up}
Given a string $S$ of length $n$, a \pqt{} $T$ with $m$ leaves, deletion limits $d_T,d_S \in \N{}$, and a boolean substitution function $h$ between $\Sigma_S$ and $\Sigma_T$, decide if there is a \otom{} $\M{}$ that yields a derivation of $T$ to a substring $S'$ of $S$ with up to $d_T$ and up to $d_S$ deletions from $T$ and $S'$, respectively.
\end{definition}

To define an optimization version of the \up{} problem it is necessary to have a score for every possible substitution between the characters in $\Sigma_T$ and the characters in $\Sigma_S$. Hence, for this problem variant assume that $h$ is a {\em substitution scoring function}, that is, $h(\sigma_t,\sigma_s)$ for $\sigma_t\in\Sigma_T, \sigma_s\in\Sigma_S$ is the score for substituting $\sigma_s$ by $\sigma_t$ in the derivation, and if $\sigma_t$ cannot be substituted by $\sigma_s$, $h(\sigma_t,\sigma_s) = -\infty$.
In addition, we need a cost function, denoted by $\delta$, for the deletion of a character of $S$ and for the deletion of a leaf of $T$ according to the label of the leaf.
The score of a derivation $\mu$, denoted by $\mu.score$, is the sum of scores of all operations (deletions from the tree, deletions from the string and substitutions) in $\mu$. 
Now, instead of deciding whether there is a \otom{} that yields a derivation of $T$ to a substring of $S$, we can search for the \otom{} that yields the best derivation (if there exists such a derivation), i.e. a \otom{} for which $\mu.score$ is the highest.

\begin{definition}[Optimization \PQT{} Search]\label{def:opti-up}
Given a string of length $n$, $S$, a \pqt{} with $m$ leaves, $T$, deletion limits $d_T,d_S\in \N{}$, a substitution scoring function between $\Sigma_S$ and $\Sigma_T$, $h$, and a deletion cost function, $\delta$, return the \otom{}, $\M{}$, that yields the highest scoring derivation of $T$ to a substring $S'$ of $S$ with up to $d_T$ deletions from $T$ and up to $d_S$ deletions from $S'$ (if such a mapping exists).
\end{definition}
\section{A Parameterized Algorithm} \label{sec:algorithm}
In this section we develop a dynamic programming (DP) algorithm to solve the optimization variant of \up{} (\cref{def:opti-up}). 
Our algorithm receives as input an instance of \up{} $(T, S, h, d_T, d_S)$, where $h$ is a substitution scoring function as defined in \cref{sec:up}.
Our default assumption is that deletions are not penalized, and therefore $\delta$ is not given as input. The case where deletions are penalized is described in \cref{sec:del-penalty}.
The output of the algorithm is a \otom{}, $\M{}$, that yields the best (highest scoring) derivation of $T$ to a substring of $S$ with up to $d_T$ deletions from $T$ and up to $d_S$ deletions from the substring, and the score of that derivation. With a minor modification, the output can be extended to include a \otom{} for every substring of $S$ and the derivations that they yield.

\subparagraph{Brief Overview.}
On a high level, our algorithm consists of three components: the main algorithm, and two other algorithms that are used as procedures by the main algorithm. Apart from an initialization phase, the crux of the main algorithm is a loop that traverses the given \pqt{,} $T$. For each internal node $x$, it calls one of the two other algorithms: P-mapping (given in \cref{subsec:p-mapping-alg}) and Q-mapping (given in \cref{sec:q-node-mapping}). These algorithms find and return the best derivations from the subtree of $T$ rooted in $x$, $T(x)$, to substrings of $S$, based on the type of $x$ (P-node or Q-node). Then, the scores of the derivations are stored in the DP table.

We now give a brief informal description of the main ideas behind our P-mapping and Q-mapping algorithms.
Our P-mapping algorithm is inspired by an algorithm described by Bevern et al. \cite{vanBevern2015} to solve the \jisp{}. Our problem differs from theirs mainly in its control of deletions.
Intuitively, in the P-mapping algorithm we consider the task at hand as a packing problem, where every child of $x$ is a set of intervals, each corresponding to a different substring. The objective is to pack non-overlapping intervals such that for every child of $x$ at most one interval is packed.
Then, the algorithm greedily selects a child $x'$ of $x$ and decides either to pack one of its intervals (and which one) or to pack none (in which case $x'$ is deleted).
Our Q-mapping algorithm is similar to the P-mapping algorithm, but simpler. It can be considered as an interval packing algorithm as well, however, this algorithm packs the children of $x$ in a specific order.

In the following sections, we describe the main algorithm, the P-mapping algorithm, and afterwards analyse the time complexity. The Q-mapping algorithm, which is also used as a procedure in the main algorithm, is described in \cref{sec:q-node-mapping}.

\subsection{The Main Algorithm}\label{sec:general-algorithm}
We now delve into more technical details. The algorithm (whose pseudocode is given in \cref{alg:up} in \cref{sec:proofs}) constructs a $4$-dimensional DP table $\A{}$ of size $m' \times n \times d_T+1 \times d_S+1$. The purpose of an entry of the DP table, $\A[j,i,k_T,k_S]$, is to hold the highest score of a derivation of the subtree $T(x_j)$ to a substring $S'$ of $S$ starting at index $i$ with $k_T$ deletions from $T(x_j)$ and $k_S$ deletions from $S'$. If no such derivation exists, $\A[j,i,k_T,k_S] = -\infty$. Addressing $\A{}$ with some of its indices given as dots, e.g. $\A[j, i,\cdot,\cdot]$, refers to the subtable of $\A{}$ that is comprised of all entries of $\A{}$ whose first two indices are $j$ and $i$.
Some entries of the DP table define illegal derivations, namely, derivations for which the number of deletions are inconsistent with the start index, $i$, the derived node and $S$. These entries are called {\em invalid entries} and their value is defined as $-\infty$ throughout the algorithm. A more detailed description of the invalid entries is given in \cref{sec:q-node-mapping}.

The main algorithm first initializes the entries of $\A{}$ that are meant to hold scores of derivations of the leaves of $T$ to every possible substring of $S$ using the following rule. For every $0\leq k_S\leq d_S$ and every $x_j\in \leaves{root}$, do: 
\begin{enumerate}
    \item $\A[j,i,1,k_S] = 0$
    \item $\A[j,i,0,k_S] = \displaystyle{\max_{\substack{i'=i,...,i+k_S}}}h(j,S[i'])$
\end{enumerate}
Afterwards, all other entries of $\A{}$ are filled as follows. Go over the internal nodes of $T$ in postorder. For every internal node, $x$, go in ascending order over every index, $i$, that can be a start index for the substring of $S$ derived from $T(x)$ (the possible values of $i$ are explained in the next paragraph). 
For every $x$ and $i$, use the algorithm for Q-mapping or P-mapping according to the type of $x$. 
Both algorithms receive the same input: a substring $S'$ of $S$, the node $x$, its children $x_{1},\dots,x_{\gamma}$, the collection of possible derivations of the children (denoted by $\D{}$), which have already been computed and stored in $\A{}$ (as will be explained ahead) and the deletion arguments $d_T,d_S$. 
Intuitively, the substring $S'$ is the longest substring of $S$ starting at index $i$ that can be derived from $T(x)$ given $d_T$ and $d_S$.
After being called, both algorithms return a set of derivations of $T(x)$ to a prefix of $S'=S[i:e]$ and their scores.
The set holds the highest scoring derivation for every $E(x_j,i,d_T,0) \leq e \leq E(x_j,i,0,d_S)$ and for every legal deletion combination $0\leq k_T\leq d_T$, $0\leq k_S \leq d_S$.

We now explain the possible values of $i$ and the definition of $S'$ more formally. To this end, note that given the node $x$ and some numbers of deletions $k_T$ and $k_S$, the length of the derived substring is $L(x,k_T,k_S) \doteq \xspan{x} - k_T + k_S$ (see \cref{par:len-derived-string}). 
Thus, on the one hand, a substring of maximum length is obtained when there are no deletions from the tree and $d_S$ deletions from the string.
Hence, $S'=S[i:E(x,i,0,d_S)]$ where $E(x,i,k_T,k_S)$ is the function for the calculation of the end point of a derivation, defined as $E(x,i,k_T,k_S) \doteq i - 1 + L(x,k_T,k_S)$.
On the other hand, a shortest substring is obtained when there are $d_T$ deletions from the tree and none from the string. Then, the length of the substring is $L(x,d_T,0) = \xspan{x}-d_T$. Hence, the index $i$ runs between $1$ and $n-(\xspan{x}-d_T)+1$.

We now turn to address the aforementioned input collection $\D{}$ in more detail. Formally, it contains the best scoring derivations of every child $x_{j}$ of $x$ to every substring of $S'$ with up to $d_T$ and $d_S$ deletions from the tree and string, respectively. It is produced from the entries $\A[j, i', k_T, k_S]$ (where each entry gives one derivation) for all $k_T$ and $k_S$, and all $i'$ between $i$ and the end index of $S'$, i.e. $i\leq i' \leq E(x_j,i,0,d_S)$.
For the efficiency of the Q-mapping and P-mapping algorithms, the derivations in $\D{}$ are arranged in descending order with respect to their end point ($\mu.e$). This does not increase the time complexity of the algorithm, as this ordering is received by previous calls to the Q-mapping and P-mapping algorithms.

In the final stage of the main algorithm, when the DP table is full, the score of a best derivation is the maximum of $\{\A[m',i,k_T,k_S] : k_T\leq d_T$, $k_S \leq d_S$, $1\leq i\leq n-(\xspan{root}-k_T)+1\}$ (remember that $x_{m'}$ is the root of $T$). We remark that by tracing back through $\A{}$ the \otom{} that yielded this derivation can be found.

\subsection{P-Node and Q-Node Mapping: Terminology} \label{sec:p-mapping-terms}
Before describing the P-mapping algorithm, we set up some terminology, which is useful both for the P-mapping algorithm and the Q-mapping algorithm (in \cref{sec:q-node-mapping}).

We first define the notion of a partial derivation.
In the Q-mapping and P-mapping algorithms, the derivation of the input node, $x$, is built by considering subsets $U$ of its children. With respect to such a subset $U$, a derivation $\mu$ of $x$ is built as if $x$ had only the children in $U$, and is called a {\em partial derivation}. Formally, $\mu$ is a partial derivation of a node $x$ if $\mu.v=x$ and there is a subset of children $U'\subseteq \children{x}$ such that the two following conditions are true. First, for every $u\in U'$ all the leaves in $T(u)$ are neither mapped nor deleted under $\mu$ - that is, there is no mapping pair $(\ell,y) \in \mu.o$ such that $\ell\in \leaves{u}$. Second, for every $v \in \children{x}\setminus U'$ the leaves in $T(v)$ are either mapped or deleted under $\mu$.
For every $u \in U'$, we say that $u$ is {\em ignored under} $\mu$.
Notice that any derivation is a partial derivation, where the set of ignored nodes ($U'$ above) is empty.
Since all derivations that are computed in a single call to the P-mapping or Q-mapping algorithms have the same start point $i$, it can be omitted (for brevity) from the end point function: thus, we denote $\E(x,k_T,k_S)\doteq L(x,k_T,k_S)$. Then, for a set $U$ of nodes, we define $L(U,k_T,k_S) \doteq \sum_{x\in U}\xspan{x}+k_S-k_T$ and accordingly $\E(U,k_T,k_S)\doteq L(U,k_T,k_S)$.

We now define certain collections of derivations with common properties (such as having the same numbers of deletions and end point).
\begin{definition}\label{def:M}
The collection of all the derivations of every node $u \in U$ to suffixes of $S'[1:\E(U,k_T,k_S)]$ with exactly $k_T$ deletions from the tree and exactly $k_S$ deletions from the string is denoted by $\D{(U,k_T,k_S)}$.
\end{definition}
\begin{definition}\label{def:M-leq}
The collection of all the best derivations from the nodes in $U$ to suffixes of $S'[1:\E(U,k_T,k_S)]$ with up to $k_T$ deletions from the tree and up to $k_S$ deletions from the string is denoted by $\D_\leq{(U,k_T,k_S)}$. 
Specifically, for every node $u \in U$, $k'_T\leq k_T$ and $k'_S\leq k_S$, the set $\D_\leq{(U,k_T,k_S)}$ holds only one highest scoring derivation of $u$ to a suffix of $S'[1:\E(U,k_T,k_S)]$ with $k'_T$ and $k'_S$ deletions from the tree and string, respectively.\footnote{$\D_\leq{(U,k_T,k_S)}$ can be defined using \cref{def:M}: $\D_\leq{(U,k_T,k_S)}=\displaystyle\bigcup_{u \in U}\displaystyle\bigcup_{k'_T\leq k_T}\displaystyle\bigcup_{k'_S\leq k_S}\displaystyle\max_{\substack{\mu \in \D{(U,k_T,k_S)} \\ \mathrm{s.t.}\\ \mu.\dt=k'_T \\ \mu.\ds=k'_S \\ \mu.v=u}}{\mu.score}$.}
\end{definition}

It is important to distinguish between these two definitions. First, the derivations in $\D{(U,k_T,k_S)}$ have {\em exactly} $k_T$ and $k_S$ deletions, while the derivations in $\D_\leq{(U,k_T,k_S)}$ have {\em up to} $k_T$ and $k_S$ deletions. 
Second, in $\D{(U,k_T,k_S)}$ there can be several derivations that differ only in their score and in the \otom{} that yields them, while in $\D_\leq{(U,k_T,k_S)}$, there is only one derivation for every node $u\in U$ and deletion combination pair $(k'_T,k'_S)$.
Note that the end points of all of the derivations are equal.

\cref{def:M} is used for describing the content of an entry of the DP table, where the focus is on the collection of all the derivations of $x$ to $S'$ with exactly $k_T$ and $k_S$ deletions, $\D{(\{x\},k_T,k_S)}$. For simplicity, the abbreviation $\D{(u,k_T,k_S)} = \D{(\{u\},k_T,k_S)}$ is used.
In every step of the P-mapping and Q-mapping algorithms, a different set of derivations of the children of $x$ is examined, thus, \cref{def:M-leq} is used for $U \subseteq \children{x}$. 
In addition, the set of derivations $\D{}$ that is received as input to the algorithms can be described using \cref{def:M-leq} as can be seen in \cref{eq:input-using-M-leq} below. 
In this equation, the union is over all $U\subseteq\children{x}$ because in this way the derivations of all the children of $x$ with {\em every possible end point} are obtained (in contrast to having only $U=\children{x}$, which results in the derivations of all the children of $x$ with the end point $\E(\children{x},k_T,k_S)$).
\begin{equation}\label{eq:input-using-M-leq}
    \D{} = \bigcup_{U \subseteq \children{x}}\bigcup_{k_T\leq d_T}\bigcup_{k_S\leq d_S} \D_\leq{(U,k_T,k_S)}
\end{equation}

In the P-mapping algorithm for $C \subseteq \children{x}$, the notation $\xC{}$ is used to indicate that the node $x$ is considered as if its only children are the nodes in $C$.
Consequentially, the span of $\xC{}$ is defined as $\xspan{\xC{}} \doteq \sum_{c\in C}\xspan{c}$, and the set $\Mxc{}$ (in \cref{def:M} where $U=\{\xC{}\}$) now refers to a set of {\em partial} derivations.

\subsection{P-Node Mapping: The Algorithm}\label{subsec:p-mapping-alg}
Recall that the input consists of an internal P-node $x$, a string $S'$, limits on the number of deletions from the tree $T$ and the string $S'$, $d_T$ and $d_S$, respectively, and a set of derivations $\D{}$ (see \cref{eq:input-using-M-leq}).
The output is $\bigcup_{k_T \leq d_T}\bigcup_{k_S \leq d_S}\argmax_{\mu \in \Mx{}}\mu.score$, which is the collection of the best scoring derivations of $x$ to every possible prefix of $S'$ having up to $d_T$ and $d_S$ deletions from the tree and string, respectively. Thus, there are $O(d_T d_S)$ derivations in the output.
The pseudocode of our algorithm is given in \cref{alg:p-node} in \cref{sec:proofs}.

The algorithm constructs a 3-dimensional DP table $\PP{}$, which has an entry for every $0\leq k_T \leq d_T$, $0\leq k_S \leq d_S$ and subset $C\subseteq \children{x}$. The purpose of an entry $\PP[C,k_T,k_S]$ is to hold the best score of a partial derivation in $\Mxc{}$, i.e. a partial derivation rooted in $\xC{}$ to a prefix of $S'$ with exactly $k_T$ deletions from the tree and $k_S$ deletions from the string. The children of $x$ that are not in $C$ are {\em ignored} (as defined in \cref{sec:p-mapping-terms}) under the partial derivation stored by the DP table entry $\PP[C,k_T,k_S]$, thus they are neither deleted nor counted in the number of deletions from the tree, $k_T$. (They will be accounted for in the computation of other entries of $\PP{}$.)
Similarly to the main algorithm, some of the entries of $\PP{}$ are invalid, and their value is defined as $-\infty$ (for more information see \cref{sec:q-node-mapping}).
For lack of space, the description of the initialization of $\PP{}$ is deferred to \cref{sec:p-mapping-init}.

After the initialization, the remaining entries of $\PP{}$ are calculated using the recursion rule in \cref{eq:p-recursion} below. The order of computation is ascending with respect to the size of the subsets $C$ of the children of $x$, and for a given $C\subseteq \children{x}$, the order is ascending with respect to the number of deletions from both tree and string.
\begin{equation} \label{eq:p-recursion}
    \PP[C,k_T,k_S] = 
            \max \begin{cases}
                \PP[C,k_T,k_S-1] \\
                \displaystyle\max_{\mu \in \Mc} \PP[C\setminus \{\mu.v\},k_T-\mu.\dt,k_S-\mu.\ds] + \mu.score\\
            \end{cases}
\end{equation}
Intuitively, every entry $\PP[C,k_T,k_S]$ defines some index $i$ of $S'$ that is the end point of every partial derivation in $\Mxc{}$. 
Thus, $S'[i]$ must be a part of any partial derivation $\mu\in\Mxc{}$, so, either $S'[i]$ is deleted under $\mu$ or it is mapped under $\mu$. The former option is captured by the first case of the recursion rule.
If $S'[i]$ is mapped under $\mu$, then due to the hierarchical structure of $T(x)$, it must be mapped under some derivation $\mu'$ of one of the children of $x$ that are in $C$. 
Thus we receive the second case of the recursion rule.
We remark that the case of a node deletion is captured by the initialization (further explanation can be found in \cref{par:p-node-deletion-intuition}).

Once the entire DP table is filled, a derivation of maximum score for every end point and deletion number combination can be found in $\PP[\children{x}, \cdot, \cdot]$. For the output derivations to be ordered with respect to their end point, they need to be extracted by traversing $\PP[\children{x}, \cdot, \cdot]$ in the order described in \cref{sec:q-node-mapping} and exemplified in \cref{tab:deletion-traversal-order}.

The time complexity analysis of the algorithm can be found in \cref{proof:p-time}, and the proof of correctness can be found in \cref{sec:p-node-proof}.
\subsection{Complexity Analysis of the Main Algorithm}
In this section we compare the time complexity of the main algorithm (in \cref{sec:general-algorithm}) to the na\"ive solution for \up{}. We note that the proof of correctness of the algorithm can be found in \cref{sec:main-alg-proof}. The proof of \cref{lemma:up-time} below is given in \cref{proof:general-time}.

\begin{lemma}\label{lemma:up-time}
The algorithm in \cref{sec:general-algorithm} runs in $O(n \gamma {d_T}^2 {d_S}^2 (m_p 2^{\gamma} + m_q))$ time and $O(d_T d_S (m n + 2^\gamma))$ space, where $\gamma$ is the maximum degree of a node in $T$.
\end{lemma}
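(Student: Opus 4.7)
The plan is to decompose the running time of the main algorithm (\cref{sec:general-algorithm}) into (a) the initialization of the DP table $\A{}$ over the leaves of $T$ and (b) the outer loop over internal nodes and admissible start indices, where each iteration invokes either the P-mapping or Q-mapping subroutine whose per-call complexity is established separately in the paper; the space bound then follows by summing the sizes of the tables held in memory simultaneously.

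For the initialization I would observe that, over the $m$ leaves, the $n$ start indices, and the $d_S+1$ values of $k_S$, rule~(2) costs $O(k_S)=O(d_S)$ time because its maximum ranges over at most $k_S+1$ positions of $S$; this yields $O(m n d_S^2)$ total, which will be absorbed into the main-loop term since $m\le\gamma(m_p+m_q)+1$. For the outer loop, each of the $m_p+m_q$ internal nodes is visited exactly once, and for each admissible start index $i\in\{1,\dots,n-(\xspan{x}-d_T)+1\}\subseteq\{1,\dots,n\}$ one call is issued either to P-mapping (at cost $O(\gamma d_T^2 d_S^2\, 2^{\gamma})$, proved in \cref{proof:p-time}) or to Q-mapping (at cost $O(\gamma d_T^2 d_S^2)$, proved in \cref{sec:q-node-mapping}). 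Summing over internal nodes and start indices then yields exactly $O(n\gamma d_T^2 d_S^2(m_p 2^{\gamma}+m_q))$. A point I would stress here is that assembling the input collection $\D{}$ is free of charge: its elements are precisely the relevant entries of $\A{}$ produced by earlier postorder iterations on the children of $x$, and since each prior subroutine call already emits its outputs in descending order of end point, $\D{}$ is available in the order the subroutines need without any extra sorting.

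For space I would add the size of the global table $\A{}$, namely $O(m n d_T d_S)$, to the size of the auxiliary table $\PP{}$ used inside each P-mapping call, namely $O(2^{\gamma} d_T d_S)$. Since each subroutine call writes only its $O(d_T d_S)$ outputs back into $\A{}$, the auxiliary $\PP{}$ can be reused across calls; the analogous auxiliary structure inside Q-mapping is smaller and absorbed. This gives the claimed $O(d_T d_S(m n+2^{\gamma}))$.

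The main obstacle I anticipate is the argument that $\D{}$ is available at no extra cost: a na\"ive reconstruction at every pair $(x,i)$ would inject a factor of at least $\gamma d_T d_S$ into the running time and spoil the stated bound. Once the reuse argument is made precise, the remainder of the proof is a routine summation using the previously established subroutine complexities.
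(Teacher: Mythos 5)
Your proposal is correct and mirrors the paper's own proof: you decompose the running time into initialization, the loop over internal nodes and start indices invoking the P-/Q-mapping subroutines (with their separately proved per-call costs), and note that assembling $\D{}$ is free because its entries are already in $\A{}$ in the required order from earlier postorder calls; the space bound follows identically by adding $O(d_T d_S m n)$ for $\A{}$ to the reusable $O(d_T d_S 2^{\gamma})$ for $\PP{}$. The only cosmetic difference is that you bound the leaf initialization by $O(m n d_S^2)$ whereas the paper computes the running maximum incrementally in $O(1)$ per entry (giving $O(d_T d_S m n)$), but either bound is absorbed by the main-loop term, so this does not affect the result.
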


Thus, it is proven that \up{} has an \FPT{} solution with the parameter $\gamma$ (\cref{theorem:up-fpt}).
\begin{theorem}\label{theorem:up-fpt}
\up{} with parameter $\gamma$ is \FPT{}. Particularly, it has an \FPT{} algorithm that runs in $O^*(2^{\gamma})$ time\footnote{The notation O* is used to hide factors polynomial in the input size.}.
\end{theorem}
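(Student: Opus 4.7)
The plan is to derive \cref{theorem:up-fpt} as a direct consequence of \cref{lemma:up-time}, which already provides the concrete running time $O(n \gamma {d_T}^2 {d_S}^2 (m_p \cdot 2^{\gamma} + m_q))$ for the algorithm of \cref{sec:general-algorithm}. What remains is essentially a bookkeeping argument: I have to verify that every factor in this expression, other than $2^{\gamma}$, is polynomial in the input size, and then repackage the bound into the $f(k)\cdot |I|^{\OO(1)}$ form required by the definition of \FPT{} from \cref{sec:prelim}.

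First I would recall the inputs: the string $S$ of length $n$, the \pqt{} $T$ with $m$ leaves, $m_p$ P-nodes and $m_q$ Q-nodes, and the deletion parameters $d_T, d_S \in \N{}$. Each of $n$, $m$, $m_p$, $m_q$ is trivially bounded by the size of the instance $|I|$, and the parameters $d_T, d_S$ are part of the input and hence also bounded by $|I|$. Moreover, since $\gamma$ is the maximum degree of a node of $T$ and $T$ has at most $m$ leaves, we have $\gamma \le m \le |I|$, so the factor $\gamma$ appearing multiplicatively is itself polynomial in $|I|$.

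Next I would isolate the exponential factor. Rewriting the running time as
\begin{equation*}
    n \gamma {d_T}^2 {d_S}^2 \bigl(m_p \cdot 2^{\gamma} + m_q\bigr) \;\le\; \bigl(n \gamma {d_T}^2 {d_S}^2 (m_p + m_q)\bigr) \cdot 2^{\gamma},
\end{equation*}
the first parenthesised factor is a polynomial in $|I|$, so the whole expression is of the form $|I|^{\OO(1)} \cdot 2^{\gamma}$. By the definition of the $O^*$ notation stated in the footnote of the theorem, this is precisely $O^*(2^{\gamma})$, and by the definition of \FPT{} with $f(\gamma) = 2^{\gamma}$ this places \up{} parameterized by $\gamma$ in the class \FPT{}.

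The only thing that could count as a subtlety is confirming that $\gamma \le |I|$ so that polynomial factors involving $\gamma$ can indeed be absorbed into $|I|^{\OO(1)}$; this is immediate from the fact that $T$ is part of the input and each internal node of $T$ is represented together with its children. Hence there is no real obstacle: the theorem is a clean corollary of \cref{lemma:up-time} once the parameterized-complexity terminology of \cref{sec:prelim} is applied.
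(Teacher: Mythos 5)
Your proof is correct and follows exactly the route the paper takes: \cref{theorem:up-fpt} is presented as an immediate corollary of \cref{lemma:up-time}, with the only work being to factor out $2^{\gamma}$ and observe that all remaining factors are polynomial in the input size. The one point you state a bit loosely is that $d_T, d_S$ are "bounded by $|I|$" -- strictly, one should note that WLOG $d_T \le m$ and $d_S \le n$ (larger values are useless), which is what makes those factors polynomial; this is implicit in the paper as well.
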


The na\"ive solution for \up{} and its time complexity analysis are given in \cref{sec:naive}. There we show that it solves \up{} in $O(2^{m_q}{(\gamma!)}^{m_p} n m(d_T+d_S)d_T d_S)$ time.
We conclude that the time complexity of our algorithm is substantially better, exemplified by considering two complementary cases. 
One, when there are only P-nodes in $T$ (i.e. $m=m_p$), the na\"ive algorithm is super-exponential in $\gamma$, and even worse, exponential in $m$, while ours is exponential only in $\gamma$, and hence polynomial for any $\gamma$ that is constant (or even logarithmic in the input size). 
Second, when there are only Q-nodes in $T$ (i.e. $m=m_q$), the na\"ive algorithm is exponential while ours is polynomial.
\section{Methods and Datasets}
\label{sec:methods}
\para{Dataset and Gene Cluster Generation.}
$1,487$ fully sequenced prokaryotic strains with COG ID annotations were downloaded from GenBank (NCBI; ver 10/2012). Among these strains, 471 genomes included a total of 933 plasmids. 

The gene clusters were generated using the tool CSBFinder-S \cite{csbfinder-s}. 
CSBFinder-S was applied to all the genomes in the dataset after removing their plasmids, using parameters $q=1$ (a colinear gene cluster is required to appear in at least one genome) and $k=0$ (no insertions are allowed in a colinear gene cluster), resulting in 595,708 colinear gene clusters. 
Next, ignoring strand and gene order information, colinear gene clusters that contain the exact same COGs were united to form the generalized set of gene clusters. The resulting gene clusters were then filtered to 26,270 gene clusters that appear in more than 30 genomes. 

\smallskip
\para{Generation of \PQT{s}.}
The generation of \pqt{s} was performed using a program \cite{lev-tool} that implements the algorithm described in \cite{landau2005gene} for the construction of a \pqt{} from a list of strings comprised from the same set of characters. In the case where a character appeared more than once in a training string, the \pqt{} with the minimum consistent frontier size was chosen.
The generated \pqt{s} varied in size and complexity. The length of their frontier ranged between $4$ and $31$, and the size of their consistent frontier ranged between $4$ and $362,880$. 
 
\smallskip
\para{Implementation and Performance.} \alg{} is implemented in Java 1.8. The runs were performed on an Intel Xeon X5680 machine with 192 GB RAM. The time it took to run all plasmid genomes against one \pqt{} ranged between $5.85$ seconds (for a \pqt{} with a consistent frontier of size $4$) and $181.5$ seconds (for a \pqt{} with a consistent frontier of size $362,880$).
In total it took an hour and 47 minutes to run every one of the $779$ \pqt{s} against every one of the $933$ plasmids.

\smallskip
\para{Substitution Scoring Function.}
The substitution scoring function reflects the distance between each pair of COGs, that is computed based on sentences describing the functional annotation of the COGs (e.g., "ABC-type sugar transport system, ATPase component"). The "Bag of Words model" was employed, where the functional description of each COG is represented by a sparse vector that is normalized to have a unit Euclidean norm. First, each COG description was tokenized and the occurrences of tokens in each description was counted and normalized using tf–idf term weighting. Then, the cosine similarity between each two vectors was computed, resulting in similarity scores ranging between 0 and 1. The sentences describing COGs are short, therefore each word largely influences the score, even after the tf–idf term weighting. Therefore, words that do not describe protein functions that were found in the top 30 most common words in the description of all COGs were used as stop-words. 
Two COGs with the same COG IDs were set to have a score of 1.1, and the substitution score between a gene with no COG annotation to any other COG was set to be -0.1. Two COGs with a zero score were penalized to have a score of -0.2 and the deletion of a COG from the query or the target string was set to have a score of zero.

\smallskip
\para{Enrichment Analysis.} For each of the four variants in \cref{fig:rnd_pqt}.C, a hypergeometric test was performed to measure the enrichment of the corresponding variant in one of the classes in which it appears. A total of 10 p-values were computed and adjusted using the Bonferroni correction; two p-values were found significant (<$0.05$), reported in \cref{sec:results}.

\smallskip
\para{Specificity Score.}
We define a specificity score for a \pqt{} $T$ of a gene cluster named S-score. Let $\tilde{T}$ be the least specific \pqt{} that could have been generated for the genes of the gene cluster based on which $T$ was constructed. Namely, a \pqt{} that allows all permutations of said genes, has height $1$, is rooted in a P-node whose children (being the leaves of the tree) are the leaves of $T$. Thus, the S-score of $T$ is $\frac{|C(\tilde{T})|}{|C(T)|}$.
For a gene cluster of permutations (i.e. there are no duplications), the computation of $|C(T)|$ is as described in \cref{eq:CT-size}, where the set of P-nodes in $T$ is denoted by $T.p$. 
\begin{equation}\label{eq:CT-size}
    |C(T)| = 2^{m_q}\cdot \prod_{x\in T.p}{|\children{x}|!}
\end{equation}
For a gene cluster that has duplications, the set $C(T)$ is generated to learn its size.
Let $\mathsf{a}(\ell,T)$ denote the number of appearances of the label $\ell$ in the leaves of $T$ and let $\mathsf{labels}(T)$ denote the set of all labels of the leaves of $T$. So, the formula for $|C(\tilde{T})|$ is as in \cref{eq:CPnode}. Clearly, for $T$ with no duplications $|C(\tilde{T})| = |F(T)|!$.
\begin{equation}\label{eq:CPnode}
    |C(\tilde{T})| = \frac{|F(T)|!}{\prod_{\ell\in \mathsf{labels}(T)}{\mathsf{a}(\ell,T)!}}
\end{equation}

\section{Results}
\label{sec:results}
\subsection{Chromosomal Gene Orders Rearranged in Plasmids}
\label{sec:results_shuffling}
The labeling of each internal node of a \pqt{} as P or Q, is learned during the construction of the tree, based on some interrogation of the gene orders from which the \pqt{} is trained \cite{landau2005gene}. As a result, the set of strings that can be derived from a \pqt{} $T$, consists of two parts: (1) all the strings representing the known gene orders from which $T$ was constructed, and (2) additional strings, denoted {\em tree-guided rearrangements}, that do not appear in the set of gene orders constructing $T$, but can be obtained via rearrangement operations that are constrained by $T$. Thus, the tree-guided rearrangements conserve the internal topology properties of the gene cluster, as learned from the corresponding gene orders during the construction of $T$, such that colinear dependencies among genes and between sub-operons are preserved in the inferred gene orders.

In this section, we used the \pqt{s} constructed from chromosomal gene clusters, to examine whether tree-guided rearrangements can be found in plasmids.
The objective was to discover gene orders in plasmids that abide abide by a \pqt{} representing a chromosomal gene cluster, and differ from all the gene orders participating in the \pqt{'s} construction. 
\pqt{s} that are constructed from gene clusters that have only one gene order or gene clusters with less than four COGs cannot generate gene orders that differ from the ones participating in their construction. Therefore, only 779 out of 26,270 chromosomal gene clusters were used for the construction of query \pqt{s} (the generation of the chromosomal gene clusters is detailed in \cref{sec:methods}). 
Using our tool \alg{} that implements the algorithm proposed for solving the \up{} problem, the query \pqt{s} were run as queries against all plasmid genomes. This benchmark was run conservatively without allowing substitutions or deletions from the \pqt{} or from the target string. 380 of the query gene clusters were found in at least one plasmid. The instances of these gene clusters in plasmids are provided in the Supplementary Materials as a session file that can be viewed using the tool CSBFinder-S \cite{csbfinder-s}.
 
Tree-guided rearrangements were found among instances of 29 gene clusters.
The \pqt{s} corresponding to these gene clusters were sorted by a decreasing S-score, where higher scores are given to a more specific tree (details in Section \ref{sec:methods}). In this setting, the higher the S-score, the smaller the number of possible gene orders that can be derived from the respective \pqt{}. Interestingly, 21 out of these 29 gene clusters code for transporters, namely 20 importers (ABC-type transport systems) and one exporter (efflux pump). The 10 top ranking results are presented in \cref{table:shuffling}.

We selected the third top-ranking \pqt{} in \cref{table:shuffling} for further analysis. This \pqt{} was constructed from 7 gene orders of a gene cluster that encodes a heavy metal efflux pump. This gene cluster was found in the chromosomes of 79 genomes (represented by the 7 distinct gene orders mentioned above) and in the plasmids of 7 genomes. The tree-guided rearrangement instance was found in the strain \textit{Cupriavidus metallidurans CH34}, isolated from an environment polluted with high concentrations of several heavy metals. This strain contains two large plasmids that confer resistance to a large number of heavy metals such as zinc, cadmium, copper, cobalt, lead, mercury, nickel and chromium. We hypothesize that the rearrangement event could have been caused by a heavy metal stress \cite{doi:10.1080/1040841X.2017.1303661}. In the following section we will focus on this \pqt{} to further study its different variants in plasmids.

\begin{table}[ht!]
\centering
\begin{tabular}{lllll}
\hline
{} &                                  PQ-tree$^1$ &   S-score & \# Genomes$^2$ &                                                                              Functional Category \\
\hline
1  &  [[0683 [[0411 0410] [0559 4177]]] 0583] &      22.5 &  5 (2) &  Amino acid transport \\
2  &  (1609 [1653 1175 0395] 3839) &      10.0 &  10 (2) &  Carbohydrate transport \\
3  &  [[1538 [3696 0845]] [0642 0745]] &       7.5 &  7 (1) &  Heavy metal efflux\\
4  &  [[2115 1070] [4213 [1129 4214]]] &       7.5 &  1 (1) &  Carbohydrate transport \\
5  &  [1960 [[2011 1135] [2141 1464]]] &       7.5 &  3 (1) &  Amino acid transport \\
6  &  [[0596 0599] [[3485 3485] 0015]] &       7.5 &  9 (1) &  Metabolism \\
7  &  [[[1129 1172 1172] 1879] 3254] &       7.5 &  6 (1) &  Carbohydrate transport \\
8  &  (1609 1869 [[1129 1172] 1879] 0524) &       7.5 &  1 (1) &  Carbohydrate transport \\
9  &  (0683 [0559 4177] [0411 0410] 0318) &       7.5 &  1 (1) &  Amino acid transport \\
10 &  (3839 0673 [[0395 1175] 1653]) &       5.0 &  10 (1) &  Carbohydrate transport \\
\hline

\end{tabular}
\caption{Ten top ranked \pqt{s} for which tree-guided rearrangements were found in plasmids. $^1$Square brackets represent a Q-node; round brackets represent a P-node. Numbers indicate the respective COG IDs. $^2$This column indicates the number of genomes harboring plasmid instances of the respective \pqt. The number in brackets indicates the number of genomes harboring a tree-guided gene rearrangement of the corresponding gene cluster. The full table can be found in~\cref{table:shuffling_full}.}
\label{table:shuffling}
\end{table}

\subsection{RND Efflux Pumps in Plasmids}
\label{sec:results_rnd}
The heavy metal efflux pump examined in the previous section (corresponding to the third top-ranking \pqt{} in \cref{table:shuffling}), was used as a \alg{} query and re-ran against all the plasmids in our dataset in order to discover approximate instances of this gene cluster, possibly encoding remotely related variations of the efflux pump it encodes. This time, in order to increase sensitivity, a semantic substitution scoring function (described in \cref{sec:methods}) was used, and the parameters were set to $d_T=1$ (up to one deletion from the tree, representing missing genes) and $d_S=3$ (up to three deletions from the plasmid, representing intruding genes).
An instance of a gene cluster is accepted if it was derived from the corresponding \pqt{} with a score that is higher than 0.75 of the highest possible score attainable by the query.
The plasmid instances detected by \alg{} are displayed in \cref{fig:pump_instances}. 

Heavy metal efflux pumps are involved in the resistance of bacteria to a wide range of toxic metal ions \cite{nies2003efflux} and they belong to the resistance-nodulation-cell division (RND) family.
In Gram-negative bacteria, RND pumps exist in a tripartite form, comprised from an outer-membrane protein (OMP), an inner membrane protein (IMP), and a periplasmic membrane fusion protein (MFP) that connects the other two proteins. In some cases, the genes of the RND pump are flanked with two regulatory genes that encode the factors of a two-component regulatory system comprising a sensor/histidine kinase (HK) and response regulator (RR) (\cref{fig:rnd_pqt}.B). This regulatory system responds to the presence of a substrate, and consequently enhances the expression of the efflux pump genes.

The \pqt{} of this gene cluster (\cref{fig:rnd_pqt}.A) shows that the COGs encoding the IMP and MFP proteins always appear as an adjacent pair, the OMP COG is always adjacent to this IMP-MFP pair, and the HK and RR COGs appear as a pair downstream or upstream to the other COGs. COG3696, which encodes the IMP protein, is annotated as a heavy metal efflux pump protein, while the other COGs are common to all RND efflux pumps. Therefore, it is very likely that the respective gene cluster corresponds to a heavy metal RND pump. The absence of an additional periplasmic protein likely indicates that this gene cluster encodes a Czc-like efflux pump that exports divalent metals such as the cobalt, zinc and cadmium exporter in \textit{Cupriavidus metallidurans} \cite{nies2003efflux} (\cref{fig:rnd_pqt}.C(1)). 

\alg{} discovered instances of this gene cluster in the plasmids of 12 genomes (Figures \ref{fig:rnd_pqt}.C(1) and \ref{fig:rnd_pqt}.D), and it is significantly enriched in the $\beta$-proteobacteria class (hypergeometric p-value= $1.09\times10^{-5}$, Bonferroni corrected p-value = $1.09\times10^{-4}$). In addition, three other variants of RND pumps were found as instances of the query gene cluster (\cref{fig:rnd_pqt}.C(2-4)). The plasmids of three genomes contained instances that were missing the COG corresponding to the OMP gene CzcC (\cref{fig:rnd_pqt}.C(2)). This could be caused by a low quality sequencing or assembly of these plasmids. An alternative possible explanation is that a Czc-like efflux pump can still be functional without CzcC; a previous study showed that the deletion of CzcC resulted in the loss of cadmium and cobalt resistance, but most of the zinc resistance was retained \cite{nies2003efflux}. 

Some instances identified by the query, found in the plasmids of six genomes, seem to encode a different heavy metal efflux pump (\cref{fig:rnd_pqt}.C(3)). This variant includes all COGs from the query, in addition to an intruding COG that encodes a periplasmic protein (CusF). This protein is a predicted copper usher that facilitates access of periplasmic copper towards the heavy metal efflux pump. Indeed, the genomic region of Cus-like efflux pumps that export monovalent metals, such as the silver and copper exporter in \textit{Escherichia coli}, include this periplasmic protein, in contrast to the Czc-like efflux pump \cite{nies2003efflux}. This variant was found in the plasmids of six bacterial genomes belonging to the class $\gamma$-proteobacteria (\cref{fig:rnd_pqt}.D). This gene cluster is significantly enriched in the $\gamma$-proteobacteria class (hypergeometric p-value= $2.13\times10^{-4}$, Bonferroni corrected p-value = $2.13\times10^{-3}$). 
Surprisingly, all of these strains, except for one, 
are annotated as human or animal pathogens. Interestingly, previous studies suggest that the host immune system exploits excess copper to poison invading pathogens \cite{fu2014copper}, which can explain why these pathogens evolved copper efflux pumps. 

Another variant of the pump, appearing in five genomes (Figures \ref{fig:rnd_pqt}.C(4) and \ref{fig:rnd_pqt}.D), resulted from a substitution of the query IMP gene (COG3696) by a different IMP gene (COG0841) belonging to the multidrug efflux pump AcrAB/TolC. 
The AcrAB-TolC system, mainly studied in \textit{Escherichia coli}, transports a diverse array of compounds with little chemical similarity \cite{du2014structure}. AcrAB/TolC is an example of an intrinsic non-specific efflux pump, which is widespread in the chromosomes of Gram-negative bacteria, and likely evolved as a general response to environmental toxins \cite{sulavik2001antibiotic}. In this case, the query gene cluster and the identified variant share all COGs, except for the COGs encoding the IMP genes. The differing COGs are responsible for substrate recognition, which naturally differs between the two pumps, as one pump exports heavy metal while the other exports multiple drugs.
When considering the functional annotation of these two COGs, we see that the query metal efflux pump COG encoding the IMP gene is annotated as "Cu/Ag efflux pump CusA", while in the multidrug efflux pump the COG encoding the IMP gene is annotated as "Multidrug efflux pump subunit AcrB".
Thus, in spite of the difference in substrate specificity, the semantic similarity measure employed by \alg{} was able to reflect their functional similarity and allowed the substitution between them, while conferring to the structure of the \pqt{}. 

\begin{figure}[t]
	\centering
	\includegraphics[width=\linewidth]{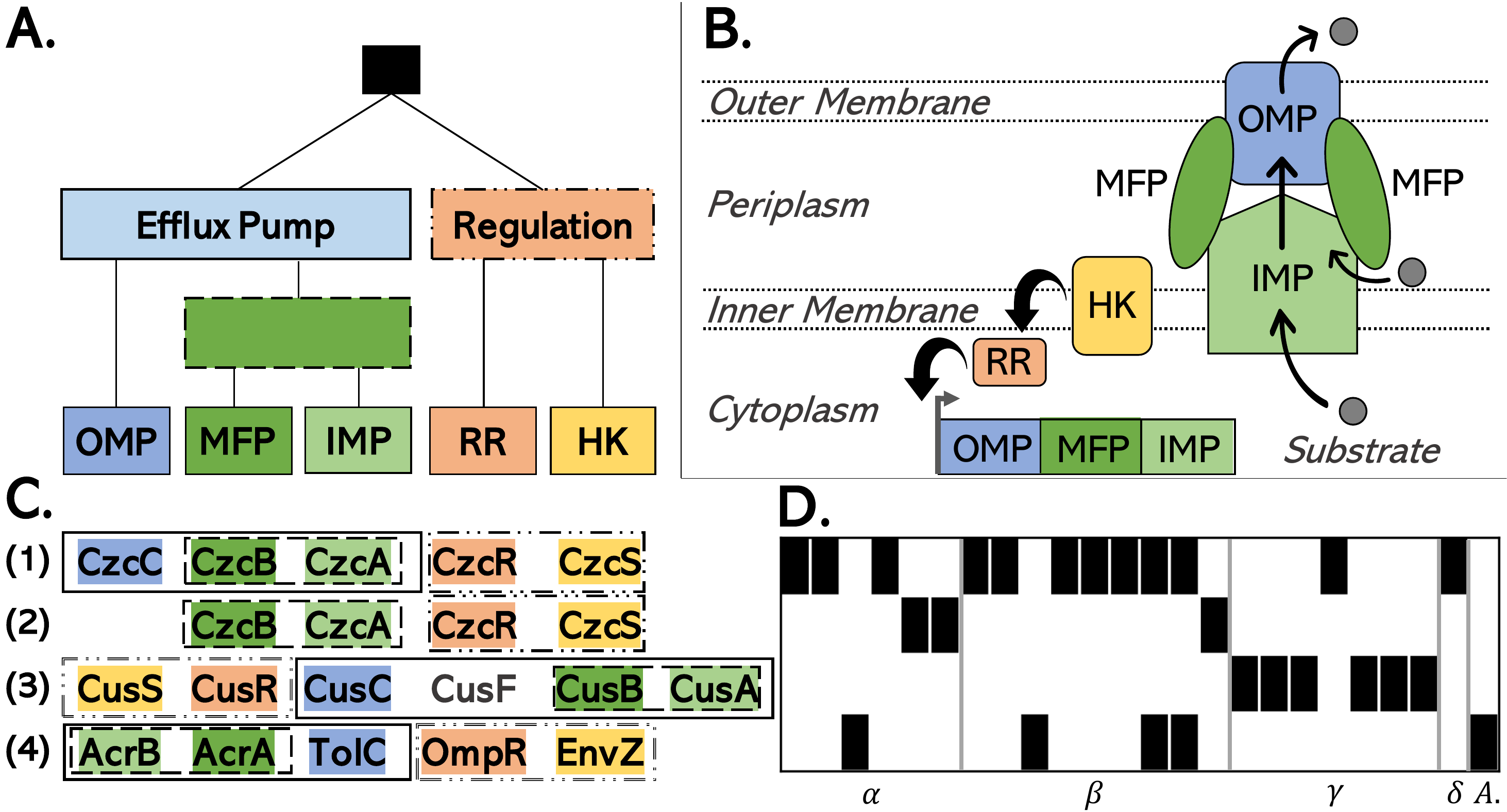}
	\caption{\textbf{A.} A \pqt{} of a heavy metal RND efflux pump, corresponding to the third top scoring result in \cref{table:shuffling}. \textbf{B.} An illustration of an RND efflux pump consisting of an outer-membrane protein (OMP), an inner membrane protein (IMP), and a periplasmic membrane fusion protein (MFP) that connects the other two proteins. In addition, a two-component regulatory system consisting of a sensor/histidine kinase (HK) and response regulator (RR) enhances the transcription of the efflux pump genes. \textbf{C.} Representatives of the three different RND efflux pumps found in plasmids. \textbf{(1)} A Czc-like heavy metal efflux pump, \textbf{(2)} A Czc-like heavy metal efflux pump with a missing OMP gene, \textbf{(3)} A Cus-like heavy metal efflux pump, \textbf{(4)} An Acr-like multidrug efflux pump. Additional details can be found in the text. \textbf{D.} The presence-absence map of the three types of efflux pumps found in the plasmids of different genomes. The rows correspond to the rows in (C), the columns correspond to the genomes in which instances were found, organized according to their taxonomic classes. A black cell indicates that the corresponding efflux pump is present in the plasmids of the genome. The labels below the map indicate the classes $\alpha,\beta,\gamma,\delta$-Proteobacteria and Acidobacteriia.}
\label{fig:rnd_pqt}
\end{figure}
\section{Conclusions}
In this paper, we defined a new problem in comparative genomics, denoted \up{}. The objective of \up{} is to identify approximate new instances of a gene cluster in a new genome $S$. In our model, the gene cluster is represented by a \pqt{} $T$, and the approximate instances can vary from the known gene orders by genome rearrangements that are constrained by $T$, by gene substitutions that are governed by a gene-to-gene substitution scoring function $h$, and by gene deletions and insertions that are bounded from above by integer parameters $d_T$ and $d_S$, respectively.

We proved that the \up{} problem is \NPH{} and proposed a parameterized algorithm that solves it in $O^*(2^{\gamma})$ time, where $\gamma$ is the maximum degree of a node in $T$ and $O^*$ is used to hide factors polynomial in the input size.

The proposed algorithm was implemented as a publicly available tool and harnessed to search for tree-guided rearrangements of chromosomal gene clusters in plasmids. We identified 29 chromosomal gene clusters that are rearranged in plasmids, where the rearrangements are guided by the corresponding PQ-tree.
One of those gene clusters, coding for a heavy metal efflux pump, was further analysed to characterize its approximate instances in plasmids. An interesting variant of the analysed gene cluster, found among its approximate instances, corresponds to a copper efflux pump. It was found mainly in pathogenic bacteria, and likely constitutes a bacterial defense mechanism against the host immune response.
These results exemplify how our tool can be harnessed to find meaningful variations of known biological systems that are conserved as gene clusters, suggesting that \up{} can be further utilized in the domain of comparative functional analysis.

One of the downsides to using \pqt{}s to represent gene clusters is that very rare gene orders taken into account in the tree construction could greatly increase the number of allowed rearrangements and thus substantially lower the specificity of the \pqt{}. 
Thus, a natural continuation of our research would be to increase the specificity of the model by considering a stochastic variation of \up{}. Namely, defining a \pqt{} in which the internal nodes hold the probability of each rearrangement, and adjusting the algorithm for \up{} accordingly. In addition, future extensions of this work could also aim to increase the sensitivity of the model by taking into account gene duplications, gene-merge and gene-split events, which are typical events in gene cluster evolution.
\bibliography{refs}

\renewcommand{\thefigure}{S\arabic{figure}}
\setcounter{figure}{0}
\renewcommand{\thetable}{S\arabic{table}}
\setcounter{table}{0}
\newpage
\appendix
\section{\PQT{} Search is NP-Hard} \label{sec:NPh}
In this section we prove \cref{theorem:up-nph} by describing a reduction from the \jisp{} (\JISP{}) to \up{.}
\begin{theorem}\label{theorem:up-nph}
\up{} is \NPH{.}
\end{theorem}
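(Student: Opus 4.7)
The plan is to polynomial-time reduce from the \jisp{} (\JISP{}), a well-known \NPH{} problem whose instance is a family of $n$ jobs, where job $J_j$ carries a set $I_j$ of time intervals on $[1, T]$, together with a threshold $k$; the question is whether some selection of at least $k$ jobs and one interval per selected job yields a family of pairwise disjoint intervals. The conceptual reason to reduce from \JISP{} is the tight analogy between its combinatorial core and the behaviour of a P-node under \up{:} the children of a P-node must be mapped to pairwise \emph{non-overlapping} substrings of $S$ (by the one-to-one property of $\M{}$), and each child is either mapped in full or deleted wholesale (the analogue of selecting at most one interval per job). This is also consistent with the remark in \cref{subsec:p-mapping-alg} that the P-mapping subroutine was inspired by a \JISP{} algorithm.

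Given a \JISP{} instance, I would build a \up{} instance as follows. Let $S$ be the length-$T$ string whose $t$-th character is a distinct symbol $\sigma_t$. The \pqt{} $T$ is rooted at a P-node with one child $c_j$ per job $J_j$. Assuming a common interval length $L$ (enforced by restricting to a uniform-length variant of \JISP{} that remains \NPH{}, or by padding all intervals to length $L_{\max}$ and absorbing the slack through $d_S$), each $c_j$ is taken to be a subtree of span $L$ whose leaves bear fresh, job-specific labels $\alpha_j^1, \ldots, \alpha_j^L$. The substitution function $h$ is then defined to allow $\alpha_j^\ell$ to match $\sigma_t$ if and only if $t = s + \ell - 1$ for some interval $[s, s + L - 1] \in I_j$. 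Setting $d_T = (n - k) L$ permits up to $n - k$ subtrees $c_j$ to be deleted in full (unscheduled jobs), while $d_S = T - k L$ permits all positions of $S$ outside the selected intervals to be deleted.

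Correctness then follows from a two-way translation. A \JISP{} schedule of size at least $k$ yields a valid derivation by mapping each scheduled $c_j$ onto the substring corresponding to its chosen interval: the non-overlap of intervals is exactly the one-to-one property of $\M{}$, and the overall deletion count falls within the budget. Conversely, any valid derivation respecting $(d_T,d_S)$ must keep at least $k$ children $c_j$ intact, and by the constraints built into $h$ each surviving $c_j$ is forced onto a window coinciding with an interval of $J_j$. The main obstacle I expect is purely technical: ensuring that the internal structure of each $c_j$ admits no spurious derivations beyond the intended ones. I plan to handle this by making each $c_j$ a P-node rather than a Q-node, so that once $h$ pins each leaf $\alpha_j^\ell$ to a specific offset within any interval of $J_j$, any internal reordering of the leaves of $c_j$ produces the same placement in $S$ and creates no new derivation. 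A secondary concern is that the deletion budgets not be exploitable across children; this is ruled out by keeping the label alphabets $\{\alpha_j^\ell\}$ disjoint across jobs, so that leaves of $c_j$ can never be mapped into substrings earmarked for a different job.
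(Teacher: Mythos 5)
Your P-node choice for the children $c_j$ is incorrect, and this is a genuine gap. You argue that ``once $h$ pins each leaf $\alpha_j^\ell$ to a specific offset within any interval of $J_j$, any internal reordering of the leaves of $c_j$ produces the same placement in $S$.'' This fails because different leaves of the same $c_j$ can be pinned to offsets of \emph{different} intervals of $J_j$, yielding a spurious combined placement. Concretely, suppose $J_j$ has intervals $[1,4]$ and $[3,6]$ (so $L=4$). Then $\alpha_j^1$ matches $\sigma_1,\sigma_3$; $\alpha_j^2$ matches $\sigma_2,\sigma_4$; $\alpha_j^3$ matches $\sigma_3,\sigma_5$; $\alpha_j^4$ matches $\sigma_4,\sigma_6$. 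With $c_j$ a P-node, the permutation $\alpha_j^2 \mapsto \sigma_2,\ \alpha_j^1 \mapsto \sigma_3,\ \alpha_j^4 \mapsto \sigma_4,\ \alpha_j^3 \mapsto \sigma_5$ is a valid derivation of $c_j$ onto positions $[2,5]$, which is not an interval of $J_j$ at all. So the $\Leftarrow$ direction of correctness does not hold. You would need $c_j$ to be a Q-node to force consecutive left-to-right placement, which is exactly what the paper does --- but then you must further eliminate the right-to-left reversal that Q-nodes also allow (if you encode the start of an interval at one end and the end of the interval at the other, a reversed reading of the Q-node can invent a non-existent interval). The paper resolves this with the interleaved $\sigma_a\sigma_b$ padding in $S$ (the ``$ab$ addition''), which forces a canonical orientation; your construction has no such mechanism.

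A second, independent concern is your reliance on deletion budgets $d_T=(n-k)L$ and $d_S=T-kL$ to model ``schedule at least $k$ jobs.'' Nothing in \cref{def:decision-up} forces the $d_T$ budget to be spent by deleting whole subtrees $c_j$: it could instead be spread as partial deletions across many $c_j$, leaving damaged fragments that map to short windows satisfying $h$ but corresponding to no interval of any job. Making this airtight would require additional gadgetry to penalize partial deletions. The paper sidesteps this entirely by reducing from the decision version of \JISP{3} in which \emph{all} jobs must be scheduled, so it can set $d_T=d_S=0$ and add filler leaves $y_r$ (substitutable by everything) to absorb the unused positions of $S$, rather than absorbing slack with $d_S$. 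Your high-level intuition --- reduce from a job-interval scheduling problem, exploit the one-to-one property of $\M{}$ to model interval disjointness --- is sound and matches the paper's, but as written the reduction has both a soundness bug (P-node permutation) and a potential soundness bug (deletion budget misuse) that require the Q-node/ab-addition machinery and the zero-deletion setting to fix.
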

\JISP{} was introduced by Nakajima and Hakimi \cite{nakajima1982complexity}. They considered one machine and a collection of non-preamble jobs, denoted $1,\dotso{},n$, that need to be executed on that machine. Each job $i$ has an execution time $t_i$ and $k_i$ possible starting times, $(s_{i_1},\dotso,s_{i_{k_i}})$. Note that every $t_i$ and $s_{i_j}$ define an interval on the real line: $[s_{i_j},s_{i_j}+t_i]$. The aim is to allocate a starting time for each job such that no two jobs will run simultaneously on the machine. The \jisp{} (\JISP{}) with $k$ intervals per job was named \JISP{$k$} \cite{spieksma1999approximability}.

Since its initial definition, the problem has seen many equivalent definitions \cite{keil1992complexity,spieksma1999approximability,spieksma1992complexity,vanBevern2015}. We use the following formulation for \JISP{$k$} based on colors. In this setting, each job $i$ is encoded as a $k$-tuple of intervals on the real line having the color $i$. Let $\gamma$ be the number of colors, hence there are $\gamma$ jobs to be executed. The notation $I_j^i$ is used to denote the interval with starting time $s_{ij}$ finishing time $f_{ij}$ (i.e. duration $[s_{ij},f_{ij}]$) and color $1\leq i\leq \gamma$ (i.e. it is a part of the $i^{\textrm{th}}$ $k$-tuple). The objective is to select exactly one interval of each color ($k$-tuple) such that no two intervals intersect.

\JISP{3} was shown to be \NPC{} by Keil \cite{keil1992complexity}. Crama et al. \cite{spieksma1992complexity} showed that \JISP{3} is \NPC{} even if all intervals are of length 2. We use these results to show that \up{} is \NPH{.}

\subparagraph{The Reduction.} Given an instance, $J$, of JISP3 where all intervals have length 2, an instance of \up{} is created. It is easy to see that shifting all intervals by some constant does not change the problem. Hence, assume that the leftmost starting interval starts at $1$. Let $L$ be the rightmost ending point of an interval, so the focus can be only on the segment $[1,L]$ of the real line. 
Now, an instance of \up{} $(T,S,h,d_T,d_S)$ is constructed (an illustrated example is given in \cref{fig:reduction} below):
\begin{itemize}
    \item \textbf{The \pqt{} $T$:} The root node, $root$, is a P-node with $3L-2-3\gamma$ children: $x_1, \dotso{,} x_{\gamma},\allowbreak y_1, \dotso{,} y_{3L-2-4\gamma}$. The children of $root$ are defined as follows: for every color $1 \leq i \leq \gamma$, create a Q-node $x_i$ with four children $x_i^s, \ x_i^a, \ x_i^b, \ x_i^f$; for every index $1\leq i\leq 3L-2-3\gamma$, create a leaf $y_i$.
    \item \textbf{The string $S$:} Define $S=\sigma_1 \sigma_a \sigma_b \sigma_2 \sigma_a \sigma_b \dotso \sigma_a \sigma_b \sigma_L$.
    \item \textbf{The substitution function $h$:} for every interval of the color $i$, $I_j^i=[s_{ij},f_{ij}]$, the function $h$ returns $True$ for the following pairs: $(x_i^s, \sigma_{s_{ij}})$, $(x_i^f, \sigma_{f_{ij}})$, $(x_i^a, \sigma_a)$ and $(x_i^b, \sigma_b)$. In addition, every leaf $y_r$ can be substituted by every letter of $S$, namely for every index $1\leq r\leq 3L-2-3\gamma$ and for every $s \in \{a,b,1,\dotso,L\}$ the function $h$ returns $True$ for the pair $(y_r, \sigma_s)$. For every other pair $h$ returns $False$. For the optimization version of the problem, define a scored substitution function $h'$, such that $h'(u,v) = 1$ if $h(u,v) = True$ and $h'(u,v) = -\infty$ if $h(u,v) = False$.
    \item \textbf{Number of deletions:} Define $d_T=0$ and $d_S=0$, i.e. deletions are forbidden from both tree and string.
\end{itemize}

An example of the reduction is shown in \cref{fig:reduction}. A collection of two $3$-tuples (one blue and one red) where each interval is of length 2, i.e a \JISP{3} instance, is in \cref{fig:JISP-instance}. Running the reduction algorithm yields the \up{} instance in \cref{fig:up-instance}. The pairs that can be substituted (i.e. the pairs for which $h$ returns $True$) are given by the lines connecting the leafs of the \pqt{} and the letters of the string $S$. The nodes and substitutable pairs created due to the blue and red intervals in the \JISP{3} instance are marked in blue and red, respectively. The substitutable pairs containing a $y$ node are marked in gray. Note that the colors given in \cref{fig:up-instance} are not a part of the \up{} instance, and are given for convenience. 

\begin{figure}
    \centering
    \begin{subfigure}[b]{0.7\textwidth}
        \centering
        \includestandalone[width=0.8\textwidth]{figures/reduction_interval}
        \caption{}
        \label{fig:JISP-instance}
    \end{subfigure}
    \par\bigskip
    \begin{subfigure}[b]{1\textwidth}
        \centering
        \includegraphics[width=0.8\textwidth]{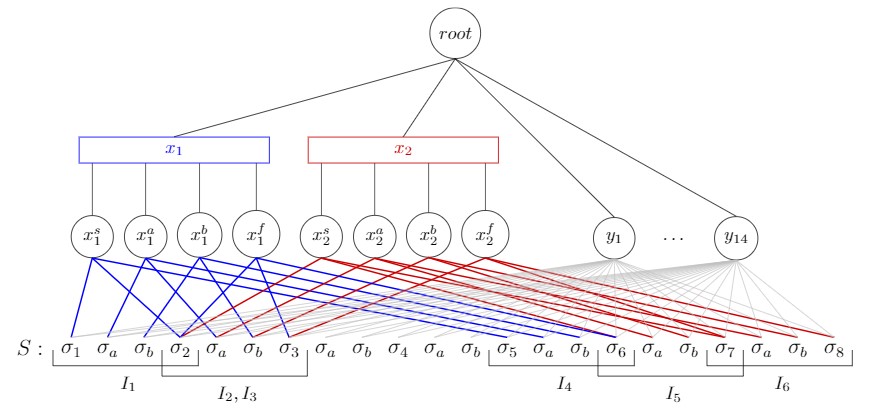}
        \caption{}
        \label{fig:up-instance}
    \end{subfigure}
    \caption{\textbf{(a)} The input of the reduction - a \JISP{3} instance $J$ with intervals of length 2. \textbf{(b)} The output of the reduction - a \up{} instance $(T,S,h,d_T,d_S)$.}
    \label{fig:reduction}
\end{figure}

\begin{proof}[Correctness]
Let $J$ be an instance of \JISP{3}, and let $(T,S,h,d_T,d_S)$ be the output of the reduction on this instance. We prove that there exists a collection of intervals that is a solution for $J$ if and only if there exists a one-to-one mapping that is a solution to $(T,S,h,d_T,d_S)$.
\subparagraph{One Direction.}
Suppose that there exists a solution to the output instance of \up{} of the reduction, $(T,S,h,d_T,d_S)$. This solution is a one-to-one mapping $\M{}$: for every $1\leq i\leq\gamma$, a set of pairs of the form $(x_i^j,\sigma_k(\ell))$ for $j\in\{s,f,a,b\}$, and for every $1\leq r\leq 3L-2-3\gamma$, pairs of the form $(y_r,\sigma_k(\ell))$ where $k\in \{1,\dotso,L,a,b\}$ and $1\leq \ell\leq 3L-2$. By the definition of \up{}, each $x_i^j$, $y_r$ and $\sigma_k(\ell)$ appear in exactly one pair.
Considering the mappings of the children of a node $x_i$, they must be the following: $(x_i^s,\sigma_k(\ell))$, $(x_i^a,\sigma_a(\ell+1))$, $(x_i^b,\sigma_b(\ell+2))$ and $(x_i^f,\sigma_{k+1}(\ell+3))$. 
To see this, observe that a node $x_i^a$ must be mapped to $\sigma_a$, because it is the only letter by which it can be substituted under $h$. 
In the same way, a node $x_i^b$ must be mapped to $\sigma_b$. Because $d_T=0$, $d_S=0$ and due to the properties of a Q-node, once $x_i^s$ is mapped to the letter in index $\ell$ (i.e. $(x_i^s,\sigma(\ell))\in \M{}$), $x_i^a$ must be mapped to the letter in index $\ell+1$ or in index $\ell-1$ (i.e. the adjacent letter to the one to which $x_i^s$ is mapped), then $x_i^b$ must be mapped to the letter in index $\ell+2$ or $\ell-2$, respectively, and $x_i^f$ to $\ell + 3$ or $\ell -3$, respectively. Since $\sigma_a$ is always the letter preceding $\sigma_b$ in $S$, $x_i^b$ must be mapped to an index larger by one than the index mapped to $x_i^a$. Hence, the children of the Q-node $x_i$ are mapped from left to right.

Now, let us derive a solution for the original \JISP{3} instance from the solution to \up{}. For every $3$-tuple of color $1\leq i\leq\gamma$, where $(x_i^s,\sigma_k(\ell))\in$ $\M{}$, choose the interval $I_k^i=[k,k+1]$ from the $3$-tuple of color $i$. For example, if a part of the solution for the \up{} instance in \cref{fig:up-instance} is $\{(x_1^s,\sigma_1(1)),$ $(x_1^a,\sigma_a(2)),$ $(x_1^b,\sigma_b(3)),$ $(x_1^f,\sigma_2(4))\} \subset$ $\M{}$, then $I_1^1$ is the interval chosen for the first color (blue) in the derived solution for the \JISP{3} instance in \cref{fig:JISP-instance}. Note that $I_k$ is indeed one of the intervals of color $i$, due to the definition of $h$,
$h(x_i^s,\sigma_k) = True$ and $h(x_i^f,\sigma_{k+1})=True$ if and only if there is an interval of color $i$ starting at $k$ and ending at $k+1$. Thanks to $\M{}$ being a one-to-one mapping, the intervals do not intersect, and for every color there is only one interval chosen.

\subparagraph{Second Direction.}
Let us prove that if there is a solution for the original instance of \JISP{3} $J$, then there is a solution for $(T,S,h,d_T,d_S)$. Let $\mathcal{I}=\{I_{j_1}^1,...,I_{j_\gamma}^\gamma\}$ be a solution of $J$ such that $I_{j_i}^i=[s_{ij_i},f_{ij_i}]$ is the interval chosen for the $3$-tuple of color $i$. First, the solution for the \up{} instance $(T,S,h,d_T,d_S)$ is constructed. For every $1\leq i\leq\gamma$, insert the following pairs into $\M{}$: $(x_i^s,\sigma_{s_{ij_i}}(3s_{ij_i}-2))$, $(x_i^a,\sigma_{a}(3s_{ij_i}-1))$, $(x_i^b,\sigma_{b}(3s_{ij_i}))$, and $(x_i^f,\sigma_{f_{ij_i}}(3f_{ij_i}-2))$.  
For example, if $I_2^2$ is the interval chosen from the second (red) $3$-tuple in the solution of the \JISP{3} instance in \cref{fig:JISP-instance}, then the solution for the \up{} instance in \cref{fig:up-instance} includes the pairs $\{(x_2^s,\sigma_2(4)), (x_2^a,\sigma_a(5)), (x_2^b,\sigma_b(6)), (x_2^f,\sigma_3(7))\}$. 
Observe that only one pair was inserted for every leaf of $T$, and since no two intervals intersect, every index of $S$ appears in only one pair in $\M{}$. Hence, a one-to-one mapping between $4\gamma$ leafs of $T$ and $4\gamma$ indices of $S$ was defined, and $3L-4\gamma-2$ additional pairs need to be inserted to $\M{}$ in order to construct a solution for the \up{} instance. According to $h$, every node $y_r$ ($1\leq r\leq 3L-2-3\gamma$) can be mapped to every letter $\sigma_k$, so arbitrarily insert the pairs $(y_r,\sigma_{k_r}(\ell_r))$ to $\M{}$, such that no index or node appear in more than one pair. (It can be done because there are $3L-4\gamma-2$ $y$ nodes and after mapping the $4$ children of every one of the $\gamma$ $x_i$ nodes, $3L-4\gamma-2$ characters of $S$ are left without a mapping). Thus, a one-to-one mapping $\M{}$ between all the leafs of $T$ and all the indices of $S$ (i.e. no deletions from $S$ and $T$) was defined, and it is left to prove that $S$ can be derived from $T$ under $\M{}$. 

The children of a Q-node $x_i$ from left to right are: $x_i^s,x_i^a,x_i^b,x_i^f$, and so, because $d_T=0$ and $d_S=0$ (no deletions from both tree and string), they have to be mapped to consecutive indices of $S$; this is indeed the case according to our definition of $\M{}$. The mapping of every $y_r$ is obviously also legal. Finally, $root$ is a P-node, so its children can be arranged in any order, and they are. This completes the proof of correctness of the reduction.
\end{proof} 

This concludes the proof of \cref{theorem:up-nph}.

\subparagraph{The Importance of $\sigma_a$ and $\sigma_b$ in the Reduction.}\label{sec:append}
In the reduction from \JISP{3} to \up{} the string $S$ was defined such that there is a character $\sigma_i$ for every $1\leq i \leq L$, and between every two such characters there is the sequence $\sigma_a\sigma_b$, i.e. $S=\sigma_1 \sigma_a \sigma_b \sigma_2 \sigma_a \sigma_b \dotso \sigma_a \sigma_b \sigma_L$. 
In addition the \pqt{} $T$ and the substitution function $h$ were defined such that for every color $i$ ($1\leq i\leq\gamma$), there are the leafs $x_i^a$ and $x_i^b$ in $T$ and $h$ returns $True$ for both $(x_i^a,\sigma_a)$ and $(x_i^b,\sigma_b)$. 
For abbreviation these leafs, the multiple appearances of $\sigma_a\sigma_b$ in $S$ and the allowed substitutions between them are named \abs{}. Here we explain why the \abs{} is important.

The necessity arises when considering the first direction of the reduction, i.e. if there exists a solution to the output instance of \up{} of the reduction $(T,S,h,d_T,d_S)$, then there is a solution to the \JISP{3} instance $J$. 
Consider the partial instance of \JISP{3} in \cref{fig:jisp-ab}. Note that it does not have a solution. Applying a reduction similar to the one defined above but without \abs{}, results in the \up{} instance $(T,S,h,d_T,d_S)$ in \cref{fig:up-ab}. The mapping lines in bold in \cref{fig:up-ab} are a solution for that instance.

\begin{figure}[t]
    \centering
    \begin{subfigure}[b]{0.7\textwidth}
        \centering
        \includestandalone[width=0.8\textwidth]{figures/JISP-ab-important}
        \caption{Partial \JISP{3} Instance}
        \label{fig:jisp-ab}
    \end{subfigure}
    \par\bigskip
    \begin{subfigure}[b]{0.5\textwidth}
        \centering
        \includegraphics[width=\textwidth]{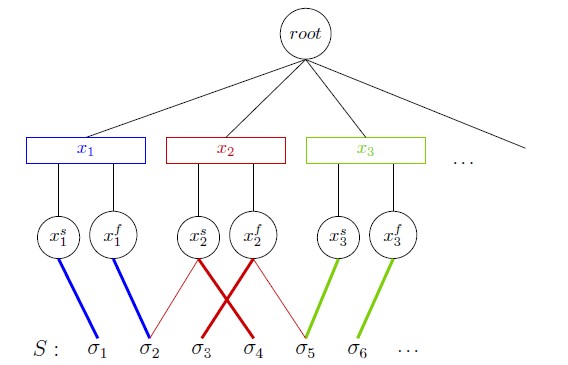}
        \caption{Partial \up{} Instance}
        \label{fig:up-ab}
    \end{subfigure}
    \caption{An example instance resulting from a reduction without the \abs{}.}
    \label{fig:ab-important}
\end{figure}

This contradiction arises because Q-node children can also be ordered from right to left. 
With \abs{} a \up{} instance is created for which only a left-to-right ordering of the children of a Q-node $x_i$ can be a part of a possible solution. 
The definition of $h$ dictates that in $\M{}$ every $x_i^a$ will be mapped to a $\sigma_a(j_i)$ and every $x_i^b$ will be mapped to a $\sigma_b(\ell_i)$. 
Because there are no deletions allowed and because of the possible reordering of the children of a Q-node, either $\ell_i = j_i+1$ (left-to-right) or $\ell_i = j_i-1$ (right-to-left). 
In $S$ the character $\sigma_a$ is always to the left of $\sigma_b$, hence there are no indices $j,\ell$ such that $\ell=j-1$, $S[j]=a$ and $S[\ell]=b$. 
So, for every $1\leq i\leq\gamma$, $\ell_i = j_i+1$. 
This means that the children of a Q-node $x_i$ are ordered form left to right as needed.

\section{The Length of the Derived String}\label{par:len-derived-string}
Given a node $x$ and the numbers of deletions, $k_T$ and $k_S$, the length of $S'$, the string derived from $T(x)$, can be calculated. If there were no deletions, the length of $S'$ is equal to the span of $x$, because every leaf of $T(x)$ is mapped to exactly one character of $S$ (see \cref{fig:string-length-no-deletion}). Consider the case in which there is one deletion from the tree (\cref{fig:string-length-node-deletion}). Every one of the leaves in $T(x)$ is mapped to one character of $S$ except for the deleted leaf which is not mapped to any character. So, in this case the derivation is to a substring of length $\xspan{x}-1$. In general, if there are $k_T$ deletions from the tree (and none from the string), then the length of the substring derived from $T(x)$ is $\xspan{x}-k_T$.
Now, consider the case in which there is one deletion from the string (\cref{fig:string-length-string-deletion}). There are $\xspan{x}$ characters of $S$ that are mapped to the leaves of $T(x)$. One more character is a part of the derived substring, but it is not mapped to any of its leaves. So, in this case $T(x)$ is derived to a substring of length $\xspan{x}+1$. In general, if there are $k_S$ deletions from the string (and none from the tree), the length of the substring derived from $T(x)$ is $\xspan{x}+k_S$.
Thus, the definition of the length function $L(x,k_T,k_S) \doteq \xspan{x} - k_T + k_S$. 

\begin{figure}[t]
    \centering
    \begin{subfigure}[b]{0.4\textwidth}
        \centering
        \includegraphics[width=0.65\textwidth]{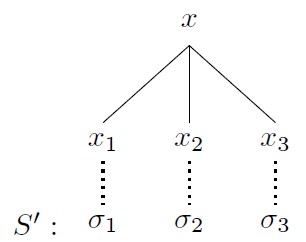}
        \caption{No deletions derives a string of length $3$ which is equal to $\xspan{x}$.}
        \label{fig:string-length-no-deletion}
    \end{subfigure}
    \hfill
    \begin{subfigure}[b]{0.4\textwidth}
        \centering
        \includegraphics[width=0.65\textwidth]{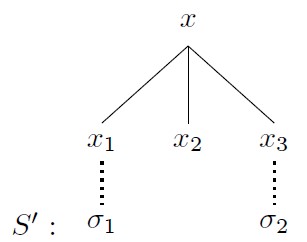}
        \caption{One deletion from the tree ($x_2$) derives a string of length $2$ which is equal to $\xspan{x} - 1$.}
        \label{fig:string-length-node-deletion}
    \end{subfigure}
    \hfill
    \begin{subfigure}[b]{0.4\textwidth}
        \centering
        \includegraphics[width=0.75\textwidth]{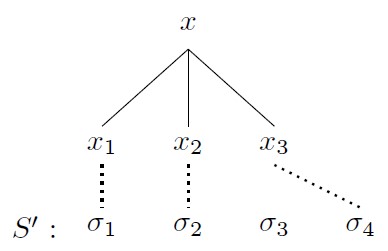}
        \caption{One deletion from the string ($\sigma_3$) derives a string of length $4$ which is equal to $\xspan{x} + 1$.}
        \label{fig:string-length-string-deletion}
    \end{subfigure}
    \hfill
    \begin{subfigure}[b]{0.4\textwidth}
        \centering
        \includegraphics[width=0.65\textwidth]{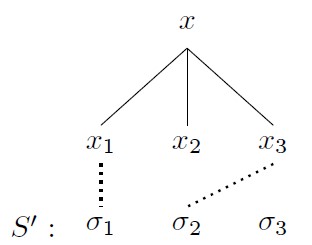}
        \caption{One deletion from the string ($\sigma_3$) and one from the tree ($x_2$) derives a string of length $3$ which is equal to $\xspan{x}$.}
        \label{fig:string-length-both-deletions}
    \end{subfigure}
    \caption{An example of the effect the number of deletion from the tree and string have on the length of the derived string. In this example the node, $x$, has a span of $3$ and the \otom{} between the children of $x$ and the characters of the string are denoted by dotted lines.}
    \label{fig:string-length}
\end{figure}

\section{The Initialization of the DP Table in the P-Mapping Algorithm}\label{sec:p-mapping-init}

The P-mapping algorithm initializes $\PP{}$ using the following two rules:
\begin{enumerate}
    \item If $L(C,k_T,k_S)=0$ and $k_S=0$, then $\PP[C,k_T,k_S]=0$.
    \item If $C=\emptyset$ and $k_T=0$, then $\PP[C,k_T,k_S]=0$.
\end{enumerate}
The first rule refers to a case where $L(C,k_T,k_S)=0$, which means that the derived substring is the empty string and thus no character can be deleted from it; hence, $k_S$ must equal $0$ (and any other value of $k_S$ is invalid). From the definition of $L(C,k_T,k_S)$, if $L(C,k_T,0)=0$, then  $k_T = \sum_{x \in C}{\xspan{x}}$, i.e. all nodes $x \in C$ are deleted. So, the score in $\PP[C,k_T,k_S]$ is $0$.
The second rule refers to a case where $C=\emptyset$, i.e. all the children of $x$ are ignored. Similarly to the first rule, a value for $k_T$ other than $0$ is invalid, and will have a $-\infty$ value. From the definition of $L$, if $k_T=0$, then $L(\emptyset,0,k_S)=k_S$, so all characters from the substring are deleted, and the score is $0$.

\section{Deleting a Child of a P-Node}\label{par:p-node-deletion-intuition}
In the P-mapping algorithm (\cref{subsec:p-mapping-alg}) it was claimed that there is no need to add to the recursion rule (\cref{eq:p-recursion}) a third case for the deletion of a child of the input node, $x$, because that case is captured in the initialization rules.
In the following example it is shown that the first initialization rule (given in \cref{sec:p-mapping-init}) is enough to enable the algorithm to find the best derivation even if it includes a node deletion, and that adding the option of deleting a node in the recursion rule is therefore redundant.
Consider the P-node $x$ in \cref{fig:p-node-deletion-intuition}, which has three leaf children ($x_1, x_2, x_3$). Assume the only derivations of the children of $x$ that have a score different than $-\infty$ are the derivation $\mu_1$ of $x_1$ to $S'[2:2]$ with no deletions, and the derivation $\mu_2$ of $x_2$ to $S'[1:1]$ with no deletions. Clearly, the best derivation of $x$ to $S'$ is the derivation that deletes $x_3$ and maps $x_1$ and $x_2$ to $S'[2]$ and $S'[1]$, respectively (denoted by dotted lines in \cref{fig:p-node-deletion-intuition}). At the end of the algorithm it is expected that this derivation can be found in the DP table entry $\PP[\{x_1,x_2,x_3\},1,0]$. Thus, let us use the recursion rule in \cref{eq:p-recursion} to compute $\PP[\{x_1,x_2,x_3\},1,0]$. The best score for $\PP[\{x_1,x_2,x_3\},1,0]$ is achieved by choosing to keep $x_1$: $\PP[\{x_1,x_2,x_3\},1,0] = \PP[\{x_2,x_3\},1,0] + \mu_1.score$. Now let us compute the best score for $\PP[\{x_2,x_3\},1,0]$. It is achieved by choosing to keep $x_2$: $\PP[\{x_2,x_3\},1,0] = \PP[\{x_3\},1,0] + \mu_2.score$. To construct the derivation $x_3$ needs to be deleted. This deletion adds $0$ to the score, and indeed, from the first initialization rule $\PP[\{x_3\},1,0] = 0$. Note that at this point it is possible to delete $x_3$ because $\xspan{x_3}=1=k_T$. Thus, we receive the score of the computed derivation is $\mu_1.score + \mu_2.score + 0$, as required.
\begin{figure}
    \centering
    \includegraphics[width=0.2\textwidth]{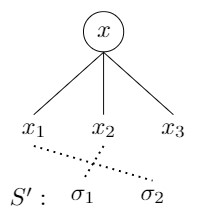}
    \caption{A P-node $x$ with three leaf children $x_1$, $x_2$ and $x_3$. The only derivations of the children of $x$ that have a score different than $-\infty$ are depicted in dashed lines.}
    \label{fig:p-node-deletion-intuition}
\end{figure}

\section{A Comparison with the Na\"ive Solution}\label{sec:naive}
In this section a na\"ive, alternative, algorithm for the \up{} problem is described and its time complexity is analyzed. It is shown that the time complexity of our algorithm is substantially smaller than that of the na\"ive algorithm.

Solving the \up{} problem requires a search for a \otom{} that yields a derivation of a \pqt{} $T$ to a substring of the input string $S$. That is, a substring $S'$ of $S$, such that the deletion of up to $d_S$ characters from $S'$ and the substitution of some of its characters yields a new string $S''\in C_{d_T}(T)$ (see \cref{def:C(T)}). Hence, a na\"ive way to solve the problem is to go over every string in $C_{d_T}(T)$ and try to find an alignment between it and every substring of $S$, when only $d_S$ deletions are allowed from $S$. Equivalently, it is possible to search for an alignment between every substring of $S$ and every string $S_T \in C_0(T)$ with up to $d_S$ deletions from $S$ and up to $d_T$ deletions from $S_T$. 

Naturally, sequence alignment can be used, but in order to bound the number of deletions, the basic algorithm needs to be modified. The usual $2$-dimensional DP-table needs to be extended with two additional dimensions that correspond to the numbers of deletions from $S$ and $S_T$. This way, when filling the table, the best scoring alignment considered so far for every deletion numbers combination can be stored. At the end of the algorithm, the score of the best alignment is the maximum between the entries of the DP-table corresponding to an alignment between $S_T$ and a prefix of $S$ that has a length between $m-d_T$ and $m+d_S$.
Thus, the outline of the na\"ive algorithm is as follows. For every string $S_T \in C(T)$ and every possible start index $i$, preform sequence alignment with a bounded number of deletions. Then, find the index $i$ that resulted in the highest scoring alignment.

The size of the DP table is $O(m(m+d_S)d_T d_S)$, but in the first two dimensions only a diagonal with a width of $O(d_T + d_S)$ entries needs to be computed. The computation of each entry takes $O(1)$ time and finding the best alignment takes $O(d_T d_S)$. Thus, every run of the sequence alignment with a bounded number of deletions and a specific start index $i$ takes $O(m(d_T+d_S)d_T d_S)$ time. As seen in \cref{sec:general-algorithm}, there are $O(n)$ possible values for $i$.

Finally, let us bound the number of strings in $C(T)$ which is equal to the number of \pqt{s} that are equivalent to $T$.
By definition, every legal permutation of the children of an internal node of $T$ results in a new \pqt{} $T'$ that is equivalent to $T$, i.e. $T \equiv T'$ (equivalence and not quasi-equivalence is used here because $C_0(T)$ is considered, i.e. there are no deletions from the tree).
The children of a Q-node can be permuted only in one of two ways (left-to-right or right-to-left) and the children of a P-node can be arranged in any order. So, for an internal node $x$ of $T$, for every string resulting from the rearrangement of the children of all the other nodes in $T$, $x$ contributes $2$ strings to $C(T)$ if it is a Q-node, and ${\gamma}!$ strings if it is a P-node.
Thus, $|C(T)|= O(2^{m_q}{(\gamma!)}^{m_p})$.
In total, the na\"ive solution for \up{} takes $O(2^{m_q}{(\gamma!)}^{m_p} n m(d_T+d_S)d_T d_S)$ time.

Both algorithms have a factor of $O(n d_T d_S)$, so it can be ignored and more concise time complexities can be compared: the na\"ive $O(2^{m_q}{(\gamma!)}^{m_p} m(d_T+d_S))$ versus our $O(d_T d_S (m_p \gamma 2^\gamma + m_q \gamma))$.
In both algorithms the non-polynomial factors in the time complexity are dependent on the number of P-nodes and the number of Q-nodes, so let us consider two complementary cases. 
First, assume there are only P-nodes in the \pqt{} (i.e. $m=m_p$). In this case, the na\"ive algorithm has a ${(\gamma!)}^{m_p}={[2^{O(\gamma \log\gamma)}]}^{m}$ factor, which is super-exponential in $\gamma$, and even worse, exponential in $m$, while our algorithm has a $m_p \gamma 2^\gamma=m \gamma 2^\gamma$ factor which is exponential only in $\gamma$, and in particular polynomial for any $\gamma$ that is constant (or even logarithmic in the input size). 
Second, assume there are only Q-nodes in the \pqt{} (i.e. $m=m_q$). In this case, the na\"ive algorithm has a $2^{m_q}=2^m$ factor, which is exponential and our algorithm has a $m_q \gamma = m \gamma$ factor, which is polynomial.

\section{Q-Node Mapping} \label{sec:q-node-mapping}
In this section we describe the Q-mapping algorithm called by the main algorithm described in \cref{sec:general-algorithm}
\subparagraph{Objective.} \label{par:q-input-output} 
As already mentioned in \cref{sec:algorithm}, the Q-mapping algorithm receives the following as input.
\begin{enumerate}
    \item An internal node $x$ that is a Q-node and has $\gamma$ children: $x_1,\dots,x_\gamma$.
    \item A string $S'$ (which is a substring of the original $S$).
    \item A collection of derivations $\D{}$ of the children of $x$ to substrings of $S'$. 
    The derivations are grouped by their root nodes $\mu.v$, and ordered by their end points, $\mu .e$.
    \item The maximum number of deletions from the tree and string, $d_T$ and $d_S$, respectively.
\end{enumerate}

The output of the algorithm is the set $\bigcup_{k_T \leq d_T}\bigcup_{k_S \leq d_S}\max_{\mu \in \Mx{}}\mu.score$, which is a set of derivations of $x$ to prefixes of $S'$. The set holds the best scoring derivation for every possible deletion number combination $k_T,k_S$ where $0 \leq k_T \leq d_T$ and $0 \leq k_S \leq d_S$. The set is ordered by the end points of the derivations and it is of size $O(d_T d_S)$.
Note that the input and output of this algorithm is the same as the input and output of the P-mapping algorithm (\cref{subsec:p-mapping-alg}), except for the type of the node received as input.

As a start, a solution assuming that the children of the Q-node $x$ can only be arranged in a left-to-right order is demonstrated. The fact that they can also be arranged in a right-to-left order is addressed at the end of this section.
The Q-mapping algorithm is a DP algorithm that uses a 3-dimensional DP table, $\Q{}$. The pseudocode of our algorithm can be found in \cref{alg:q-node} ahead.

\subparagraph{Notations.}
At different stages of the algorithm the node $x$ is considered as if it has only its first $i$ children.
For an index $i$ such that $1\leq i\leq \gamma$ (namely, $i$ is an index of a child of $x$), two definitions are given. The first, $\xis{}$, denotes the set of the first $i$ children of $x$. Formally, $\xis{} \doteq \{x_1,\dots ,x_i\}$. The second, $\x{i}$, denotes the node $x$ considering only its children in $\xis{}$. Consequentially, the span of $\x{i}$ is defined as $\sum_{j=1}^i{\xspan{x_j}}$ and the set $\Mxi{}$ (in \cref{def:M} where $U=\{\x{i}\}$) now refers to a set of {\em partial} derivations.
To use $\x{i}$ to describe the base cases of our algorithm, let us define $\x{0}$ ($\x{i}$ for $i=0$) as a tree with no labeled leaves to map.

\subparagraph{The DP Table.}
The purpose of an entry $\Q[i,k_T,k_S]$ is to hold the score of the best partial derivation of $\x{i}$ to a prefix of $S'$ with exactly $k_T$ deletions from the tree and exactly $k_S$ deletions from the string. Namely, only the first $i$ children of $x$, $x_1,\dots,x_i$, are considered and the rest are ignored. The other children of $x$, that is $\children{x}\setminus\xis{}$, are accounted for in the computation of other entries of $\Q{}$. Formally, $\Q[i,k_T,k_S] = \max_{\mu \in \Mxi{}}\mu.score$.

Similarly to the main algorithm (\cref{sec:general-algorithm}) and the P-mapping algorithm (\cref{subsec:p-mapping-alg}), some of the entries of the DP table are invalid, and their value is defined as $-\infty$ throughout the algorithm. 
Here we give a more detailed description of these entries for all three algorithms and their DP tables.
For a given DP table, the invalid entries are the ones that their indices define an illegal derivation. Namely, derivations that have more deletions from the tree than there are leaves in the subtree of $T$ rooted in the derived node, derivations that have more deletions from the string than there are characters in the derived string, derivations that derive a string that by definition ends in an index larger than the end index of the input string, or derivations that by definition derive a string with a negative length. 
Thus, an entry $\Q[i,k_T,k_S]$ is invalid if one of the following is true:
$k_T > \sum_{c \in \xis{}}\xspan{c}$, $k_S > L(\xis{},k_T,k_S)$, $L(\xis{},k_T,k_S) > \len{S'}$, or $L(\xis{},k_T,k_S) < 0$.
Similarly, an entry $\PP[C,k_T,k_S]$ is invalid if one of the following is true: $k_T > \sum_{c \in C}\xspan{c}$, $k_S > L(C,k_T,k_S)$, $L(C,k_T,k_S) > \len{S'}$, or $L(C,k_T,k_S) < 0$. 
Lastly, an entry $\A[j,i,k_T,k_S]$ is invalid if one of the following is true: $k_T > \xspan{x_j}$, $k_S > L(j,i,k_T,k_S)$, $E(j,i,k_S,k_T) > n$, or $L(j,i,k_T,k_S) < 0$. 

\subparagraph{Filling the DP Table.} The algorithm initializes $\Q{}$ as follows. For every $0 \leq k_S \leq \len{S'}$, $\Q[0,0,k_S]=0$. These entries of the DP table capture the cases in which a prefix of $S'$ of length $L(\emptyset,0,k_S) = k_S$ is derived, i.e. there are no leaves to map. Thus, all the characters in $S'[1:k_S]$ must be deleted under this partial derivation. This is possible because the allowed number of deletions from the string is exactly the number of characters in the derived substring. Note that $k_S \leq \len{S'}$ because otherwise $\Q[0,0,k_S]$ is an invalid entry and its value should remain $-\infty$.

Afterwards, the remaining entries of $\Q{}$ are calculated using the recursion rule in \cref{eq:q-recursion} ahead. The order of computation is ascending with respect to $i$ (i.e. $i = 1,\dots,\gamma$), for a given $i$, the order of computation is ascending with respect to the number of deletions from the string (i.e. $k_S=0,1,\dots,d_S$), and for a given $i$ and $k_S$, the order of computation does not matter. Nonetheless, an ascending order with respect to the number of deletions from the tree (i.e. $k_T=0,1,\dots,d_T$) is set.
\begin{equation} \label{eq:q-recursion}
    \Q[i,k_T,k_S] = 
            \max \begin{cases}
                    \Q[i,k_T,k_S-1] \\
                    \Q[i-1,k_T-\xspan{x_i},k_S] \\
                    \displaystyle\max_{\substack {\mu \in \Mi{} \\ \mathrm{s.t.}\ \mu .v = x_i}} \Q[i-1, k_T-\mu .\dt,k_S-\mu .\ds] + \mu .score\\
                \end{cases}
\end{equation}
The intuition behind \cref{eq:q-recursion} is that given a partial derivation, $\mu \in \Mxi$, one of the three cases of the rule must be true. The end point of $\mu$ is $\E(\xis{},k_T,k_S)$, and thus, by definition, $S'[\E(\xis{},k_T,k_S)]$ is either deleted under $\mu$ (the first case) or it is mapped under $\mu$ (the third case). The partial derivation $\mu$ does not ignore $x_i$, so either it is kept under $\mu$ (the third case), or it is deleted under $\mu$ (the second case).

In the first case, $S'[\E(\xis{},k_T,k_S)]$ is deleted under $\mu$. Removing the deletion of $S'[\E(\xis{},k_T,k_S)]$ from $\mu$ (formally defined in \cref{def:rm-char-del}) results in a partial derivation, $\mu'$, that considers the same subset of children of $x$ with the same number of deletions from the tree and one less deletion from the string. That is, $\mu' \in \D(\x{i},k_T,k_S-1)$, and the score of the best partial derivation with the same properties as $\mu'$ is in $\Q[i,k_T,k_S-1]$.
In the second case, $x_i$ is deleted under $\mu$. Removing the deletion of $x_i$ from $\mu$ (formally defined in \cref{def:rm-node-del}) results in a partial derivation of $\x{i-1}$ with $k_T-\xspan{x_i}$ deletions from the tree. That is, a derivation in $\D(\xis{-1},k_T-\xspan{x_i},k_S)$, and the score of the best one is in $\Q[i-1,k_T-\xspan{x_i},k_S]$.
In the third case there is a derivation, $\mu'$, of one of the children of $\x{i}$ such that $S'[\E(\xis{},k_T,k_S)]$ is mapped under it. Because $x_i$ is kept under $\mu$ (in derivation this entry holds) and it is the last child of $\x{i}$, then $\mu'$ is a derivation of $x_i$ (i.e. $\mu'.v = x_i$). Otherwise, the ordering of the children of the Q-node $x$ is illegal. The score of $\mu$ in this case is equal to the score of $\mu'$ plus the score of a partial derivation of $\x{i-1}$ with $k_T-\mu' .\dt$ and $k_S-\mu' .\ds$ deletions from the tree and string, respectively. The best score of a partial derivation in $\D(\x{i-1},k_T-\mu' .\dt,k_S-\mu' .\ds)$ is in $\Q[i-1, k_T-\mu .\dt,k_S-\mu .\ds]$. Thus we try to find the derivation, $\mu'$, of $x_i$ with the end point $\E(\xis{},k_T,k_S)$ that maximises $\mu'.score + \Q[i-1, k_T-\mu' .\dt,k_S-\mu' .\ds]$.

\subparagraph{Finding the Solution.}
Our goal is to find a derivation of $x$ according to all of its children, hence once the entire DP table is filled our solution is in $\Q[\gamma,\cdot,\cdot]$. The best derivation for every deletion combination should be ordered with respect to the end point of the derivation. Here we explain how this can be done by simply traversing $\Q{}$ in a predefined order and without any further calculation.
First, note that there could be more than one derivation per end point. For example, the second smallest end point (the end point of the second shortest substring derived from $x$) is generated by the deletion combination $(d_T,1)$, thereby $e = \E(\gamma,d_T,1) = \sum_{k=1}^\gamma \xspan{x_k} -d_T + 1 = \xspan{x} - d_T + 1$. The deletion combination $(d_T-1,0)$ also yields $e$, $\E(\gamma,d_T-1,0) = \sum_{k=1}^\gamma \xspan{x_k} -(d_T - 1) + 0 = \xspan{x} - d_T + 1 = e$.
In fact, only the smallest and largest end points have just one derivation each. The deletion combination $(d_T,0)$ yields the smallest end point and $(0,d_S)$ yields the largest.
Thus, given the $(d_T+1)\times (d_S+1)$ sized table $\Q[\gamma,\cdot,\cdot]$, the ordered list of derivations can be generated by traversing the table in the order specified in \cref{tab:deletion-traversal-order}.

\begin{table}[t]
    \centering
    \begin{tabular}{|c|*{3}{c|}}\hline
        \backslashbox[3em]{$k_T$}{$k_S$}
        &\makebox[3em]{$0$}&\makebox[3em]{$1$}&\makebox[3em]{$2$}\\\hline
        $0$ & $e_4$ & $e_5$ & $e_6$ \\\hline
        $1$ & $e_3$ & $e_4$ & $e_5$ \\\hline
        $2$ & $e_2$ & $e_3$ & $e_4$ \\\hline
        $3$ &$e_1$ & $e_2$ & $e_3$ \\\hline
    \end{tabular}
    \caption{An example of the ordering between the end points induced by the different deletion combinations. In this example, $d_T = 3$, $d_S = 2$ and the end points are indexed from first to last ($e_i = 1 + e_{i-1}$ for $2\leq i \leq 6$).}
    \label{tab:deletion-traversal-order}
\end{table}

\begin{algorithm}[t]
\SetAlgoLined
\SetKwComment{Comment}{//}{}
\KwIn{$x, S', \D{}, d_T, d_S$}
\KwOut{The best derivation of $x$ to every prefix of $S'$}
$\gamma \gets |\children{x}|$\;
build $\Q{}$ with dimensions $\gamma+1 \times d_T+1 \times d_S+1$\;
\For{$k_S=0$ \KwTo $\len{S'}$}
{\Comment{initialization}
    $\Q[0,0,k_S] \gets 0$\;
}
\For{$i=1$ \KwTo $\gamma$}{
    \For{$k_S=0$ \KwTo $d_S$}{
        \For{$k_T=0$ \KwTo $d_T$}{
            compute $\Q[i,k_T,k_S]$ according to \cref{eq:q-recursion} \;
        }
    }
 }
 \KwRet{$\Q[\gamma,\cdot,\cdot]$} \;
\caption{Q-Mapping}
\label{alg:q-node}
\end{algorithm}

\subparagraph{A Second Ordering of the Children.}\label{par:q-node-second-order} Previously an algorithm to find a \otom{} for a tree rooted in a Q-node assuming its children can only be arranged in a left-to-right order was described. To consider also a right-to-left arrangement, the following minor modification to the algorithm is required. Run the first two parts of the algorithm described above twice, each run fills a different DP table. The first run of the algorithm will receive the children of $x$ from left to right (i.e. $x_1$ is the leftmost child of $x$ and $x_\gamma$ is the rightmost child), and will produce a DP table, $\Q_\ell$, holding the best scores of the partial derivations of $x$ that order its children from left to right. The second run will receive the children of $x$ from right-to-left (i.e. $x_1$ is the rightmost child of $x$ and $x_\gamma$ is the leftmost child), and will produce a DP table, $\Q_r$, holding the best scores of the partial derivations of $x$ that order its children from right to left.
To find the solution, go over both DP tables as described above (\cref{tab:deletion-traversal-order}), but for every deletion combination $k_T,k_S$, return the maximum between $\Q_\ell[\gamma,kT,kS]$ and $\Q_r[\gamma,kT,kS]$.

\subparagraph{Time and Space Complexity.}
The DP table is the most space consuming data structure in the described algorithm. Its dimensions are $\gamma+1 \times d_T+1 \times d_S+1$, and the algorithm uses two of them. 
The computation of an entry of the DP table, $\Q[i,k_T,k_S]$, includes two $O(1)$ calculations (the first and second cases of the recursion rule) and going over every derivation of $x_i$ in $\Mi{}$.
All those derivations have the same root and the same end point, but a different number of deletions. In fact, there are no two derivations of $x_i$ in $\Mi{}$ that have the same deletion combination ($k'_T,k'_S$). Hence, the number of such derivations is equal to the number of deletion combinations, $k_T\cdot k_S$, and so the calculation of an entry of the DP table takes $O(k_T k_S) = O(d_T d_S)$ time.
Thus, the time complexity of the algorithm is $O(\gamma {d_T}^2 {d_S}^2)$.

In the previous paragraph the calculation of $\E{}$ for every entry, which yields the relevant end point for the entry, was ignored. The most time consuming part of that calculation is the summation of spans ($\sum_{k=1}^i \xspan{x_k}$) which takes $O(\gamma)$ time. To prevent the wasteful repetition, these summations are calculated once and then saved in a table of size $\gamma$. These summations are calculated twice - once for each possible children ordering (left-to-right and right-to-left). This is negligible with respect to the time it takes to fill the DP table.
\section{Penalizing Deletions}\label{sec:del-penalty}
To assign deletions a penalization cost (and not only limit them), the algorithm should receive as input a deletion penalty function, $\delta:\Sigma_T \cup \Sigma_S \rightarrow \R{}$. The function defines the penalty of deleting a character from $S$ or a leaf from $T$ according to its label. Then, let us expand $\delta$, and define the deletion penalty of a node $x$ in $T$ as the summation of the deletion penalty of all the leaves in the subtree rooted in $x$. Thus, the set of nodes in $T$ is denoted by $T.nodes$, and a new function $\Delta: T.nodes \cup \Sigma_S \rightarrow \R{}$ is defined in \cref{eq:Delta} below. Note that the $\Delta$ function can be calculated in advance, by going over $T$ in postorder. This calculation takes $O(m') = O(m)$ time.
\begin{equation} \label{eq:Delta}
    \Delta(x) = 
        \begin{cases}
            \delta(x), & if \ x \in \Sigma_S \\
            \displaystyle\sum_{\ell \in \leaves{x}}{\delta(\mathsf{label}(\ell))}, & if \ x \in T.nodes
        \end{cases}
\end{equation}
In addition, the following changes to the main algorithm and to the P-mapping and Q-mapping algorithms are needed.
First, the initialization of the main DP table $\A{}$ should change and add to the score of every leaf entry (i.e. $\A[j,i,k_T,k_S]$ such that $x_j$ is a leaf) the cost of the deleted nodes and characters. Namely, in \cref{alg:up} (given in \cref{sec:main-alg-proof}) lines 8-9 should be replaced with \cref{eq:alg-init-penalty} below. 
Second, the $\Delta$ function in \cref{eq:Delta} should be sent from the main algorithm to the Q-mapping and P-mapping algorithms.
\begin{equation} \label{eq:alg-init-penalty}
\begin{split}
	&\A[j,i,1,k_S] \gets -\Delta(x_j) - \sum_{\ell=i}^{i+k_S-1}{\Delta(S[\ell])}\\
	&\A[j,i,0,k_S] \gets \max_{i'=i,...,i+d_S}{h(j,i') - \sum_{\ell=i}^{i+k_S}{\Delta(S[\ell])} + \Delta(S[i'])}
\end{split}
\end{equation}

Third, the initialization of the DP table, $\PP{}$, and the recursion rule of the P-mapping algorithm need to change.
The first initialization rule, where $L(C,k_T,k_S)=0$ and $k_S=0$, depicts the case in which every node in $C$ is deleted, hence, the rule should be $\PP[C,k_T,k_S] = -\sum_{c\in C}{\Delta(c)}$. The second rule, where $C=\emptyset$ and $k_T=0$, concerns the case in which every character in $S'[1:k_S]$ is deleted, so the rule should be $\PP[C,k_T,k_S] = -\sum_{i=1}^{k_S}{\Delta(S'[i])}$.
The recursion rule should be changed to the one in \cref{eq:p-recursion-penalized}. Note that the change is only in the first case where the cost of deleting the $i^{\textrm{th}}$ character of $S'$ is subtracted from the score.

\begin{equation}\label{eq:p-recursion-penalized}
    \PP[C,k_T,k_S] = 
        \max \begin{cases}
            \PP[C,k_T,k_S-1] - \Delta(S[i]) \\
            \displaystyle\max_{\mu \in \Mc{}} \PP[C\setminus \{\mu.v\},k_T-\mu.\dt,k_S-\mu.\ds] + \mu.score\\
        \end{cases}
\end{equation}

Lastly, in the Q-mapping algorithm the initialization of the DP table, $\Q{}$, and the recursion rule also need to change. In the initialization, for every $0\leq k_S\leq d_S$, $\Q[0,0,k_S] = -\sum_{i=0}^{k_S}\Delta(S'[i])$. That is because this is the case in which every character in $S'[1:k_S]$ is deleted. When filling the DP table, the recursion rule in \cref{eq:q-recursion-penalty} should be used. Note the change is in the first and second cases. In the first the cost of deleting the $i^{\textrm{th}}$ character of $S'$ is subtracted from the score, and in the second the score for deleting $x_i$ is subtracted from the score.

\begin{equation} \label{eq:q-recursion-penalty}
    \Q[i,k_T,k_S] = 
            \max \begin{cases}
                    \Q[i,k_T,k_S-1] - \Delta(S[\E(i,k_T,k_S)])\\
                    \Q[i-1,k_T-\xspan{x_i},k_S] - \Delta(x_i) \\
                    \displaystyle\max_{\substack {\mu \in \Mi{} \\ \mathrm{s.t.}\ \mu .v = x_i}} \Q[i-1, k_T-\mu .\dt,k_S-\mu .\ds] + \mu .score\\
                \end{cases}
\end{equation}
\section{Correctness and Runtime Analysis of Our Algorithms}\label{sec:proofs}
In this section we prove the correctness of the \up{} algorithm (\cref{sec:main-alg-proof}), the P-mapping algorithm (\cref{sec:p-node-proof}) and the Q-mapping algorithm (\cref{sec:q-node-proof}), and prove the time complexity of the \up{} algorithm and the P-mapping algorithm. First, some definitions that are used in the proofs are given.

\subparagraph{Addition and Removal of a Derivation.}
Given a partial derivation, $\mu$, which derives an internal node, $x$, let us define the removal and addition of another derivation $\eta$: $\remove{\mu}{\eta}$ and $\add{\mu}{\eta}$. Both operations are defined only for a derivation $\eta$ whose root is a node $x' \in \children{x}$.
\begin{definition}\label{def:remove-derivation}
The operation $\remove{\mu}{\eta}$ is defined only if $\eta$ is the derivation of $\eta.v$ under $\mu$ and if at least one among $\eta.e = \mu.e$ or $\eta.s=\mu.s$ is true. 
The operation returns a new partial derivation $\mu'$ of $\mu.v$ that ignores the subtree of $T$ rooted in the child node $\eta.v$. If $\eta.e = \mu.e$, then $\mu'$ derives the string $S[\mu.s:\eta.s-1]$, and if $\eta.s=\mu.s$, then $\mu'$ derives the string $S[\eta.e+1:\mu.e]$. In any case the number of deletions from the tree is $\mu'.\dt = \mu.\dt - \eta.\dt$ and from the string it is $\mu'.\ds = \mu.\ds - \eta.\ds$.
Furthermore, $\mu.o \setminus \eta.o$ is the \otom{} that yields $\mu'$.
\end{definition}
\begin{definition}\label{def:add-derivation}
The operation $\add{\mu}{\eta}$ is defined only if either $\eta.s = \mu.e+1$ or $\eta.e=\mu.s-1$ is true and if the node $\eta.v$ is ignored under $\mu$.
The operation returns a new partial derivation $\mu'$ of $\mu.v$. The derivation of $\eta.v$ under $\mu'$ is $\eta$, and the mapping or deletion of every other leaf or character in the string is defined the same as it was in $\mu$. Consequentially, if $\eta.s = \mu.e+1$, then $\mu'$ derives the string $S[\mu.s:\eta.e]$, and if $\eta.e=\mu.s-1$, then $\mu'$ derives the string $S[\eta.s:\mu.e]$. Furthermore, $\mu'.\dt = \mu.\dt + \eta.\dt$, $\mu'.\ds = \mu.\ds + \eta.\ds$ and the \otom{} that yields $\mu'$ is $\mu.o \cup \eta.o$.
\end{definition}

\subparagraph{Addition and Removal of a Deleted Character.}
Given a partial derivation $\mu$, which derives a string $S$, and an index $i$ of $S$ let us define the removal and addition of a deleted character: $\rmdel{\mu}{i}$ and $\adddel{\mu}{i}$.
\begin{definition} \label{def:rm-char-del}
The operation $\rmdel{\mu}{i}$ is defined only if $i=\mu.e$ or $i=\mu.s$, and if $S[i]$ is deleted under $\mu$. The operation returns a partial derivation $\mu'$ with $\mu.\ds-1$ deletions from the string. If $i=\mu.e$, then $\mu'$ derives the string $S[\mu.s,\mu.e-1]$, and if $i=\mu.s$, then $\mu'$ derives the string $S[\mu.s+1,\mu.e]$.
The \otom{} that yields $\mu'$ is $\mu.o \setminus \{(\varepsilon , S[i](i))\}$.
\end{definition}
\begin{definition}\label{def:add-char-del}
The operation $\adddel{\mu}{i}$ is defined only if $i=\mu.e+1$ or $i=\mu.s-1$. The operation returns a partial derivation $\mu'$ with $\mu.\ds+1$ deletions from the string. If $i=\mu.e+1$, then $\mu'$ derives the string $S[\mu.s,\mu.e+1]$, and if $i=\mu.s-1$, then $\mu'$ derives the string $S[\mu.s-1,\mu.e]$.
The \otom{} that yields $\mu'$ is $\mu.o \cup \{(\varepsilon , S[i](i))\}$.
\end{definition}

\subparagraph{Addition and Removal of a Deleted Node.}
Given a partial derivation $\mu$, which derives a string $S$, let us define the removal and addition of a deleted node: $\rmdel{\mu}{x}$ and $\adddel{\mu}{x}$.
\begin{definition}\label{def:rm-node-del}
The operation $\rmdel{\mu}{x}$ is defined only if $x$ is deleted under $\mu$. The operation returns a partial derivation $\mu'$ with $\mu.\dt-\xspan{x}$ deletions from the tree and $\mu.\ds$ deletions from the string. The derivation $\mu'$ ignores $x$ and derives the same substring derived by $\mu$.
The \otom{} that yields $\mu'$ is $\mu.o \setminus \{(\ell, \varepsilon) : \ell \in \leaves{x}\}$.
\end{definition}
\begin{definition}\label{def:add-node-del}
The operation $\adddel{\mu}{x}$ is defined only if $x$ is ignored under $\mu$. The operation returns a partial derivation $\mu'$ with $\mu.\dt+\xspan{x}$ deletions from the tree. The substring derived by $\mu'$ is equal to the substring derived by $\mu$.
The \otom{} that yields $\mu'$ is $\mu.o \cup \{(\ell, \varepsilon) : \ell \in \leaves{x}\}$.
\end{definition}
\subsection{The Main Algorithm}\label{sec:main-alg-proof}
In this section we give the pseudocode of the \up{} algorithm presented in \cref{sec:general-algorithm} (\cref{alg:up}) and prove its correctness. 
In this proof, the correctness of the Q-mapping algorithm (\cref{sec:q-node-mapping}) and of the P-mapping algorithm (\cref{subsec:p-mapping-alg}) is assumed. Their correctness will be proven in \cref{sec:q-node-proof} and \cref{sec:p-node-proof}, respectively.

\begin{algorithm}[t]
\SetAlgoLined
\SetKwComment{Comment}{//}{}
\KwIn{$T, S, h, d_T, d_S$}
\KwOut{The score of the best derivation of $T$ to a substring of $S$ with up to $d_T$ and $d_S$ deletions from $T$ and $S$, respectively}
build $\A{}$ with dimensions $m' \times n \times d_T+1 \times d_S+1$ and initial value $-\infty$\;
\For{$j=1$ \KwTo $m'$}{
 \Comment{for each node of $T$ in postorder}
    \For{$i=1$ \KwTo $n$}{
        \If {$x_j$ is a Leaf}{
        	\Comment{initialization}
        	\For{$k_S=0$ \KwTo $d_S$}{
                $\A[j,i,1,k_S] \gets 0$\;
                   $\A[j,i,0,k_S] \gets \displaystyle{\max_{\substack{i'=i,...,i+k_S}}}h(j,S[i'])$\;
            }
        }
        $e \gets E(x_j, i, 0, d_S)$\;
        \If {$x_j$ is a Q-node} {
            $\A[j,i,\cdot,\cdot] \gets$ Q-Mapping($x_j,S[i,e], \{\A[x_{j_k},i,\cdot,\cdot] : x_{j_k}\in\children{x_j}\}, d_T, d_S$)\;
        }
        \If {$x_j$ is a P-node} {
            $\A[j,i,\cdot,\cdot] \gets$ P-Mapping($x_j, S[i,e], \{\A[x_{j_k},i,\cdot,\cdot] : x_{j_k}\in \children{x_j}\}, d_T, d_S$)\;
        }
    }
}
 \KwRet{$\displaystyle\max_{\substack{
            0 \leq k_T \leq d_T \\ 0 \leq k_S \leq d_S \\ 1 \leq i \leq (n-(\xspan{root}-d_T)+1)}} \A[m',i,k_T,k_S]$} \;
\caption{\up{}}
\label{alg:up}
\end{algorithm}

For this proof  \cref{def:M-general} below is used to represent the set of derivations whose score might be in $\A[j,i,k_T,k_S]$, similarly to the notation in \cref{def:M}.
\begin{definition}\label{def:M-general}
    The set of all derivations to $S[i,E(x_j,i,k_T,k_S)]$ rooted in $x_j$ that have exactly $k_T$ deletions from the tree and exactly $k_S$ deletions from the string is denoted by $\Dm{}$.
\end{definition}

\begin{lemma}\label{lemma:main-proof}
    At the end of the algorithm every entry $\A[j,i,k_T,k_S]$ of the DP-table $\A{}$ holds the highest score of a derivation of $S[i,E(x_j,i,k_T,k_S)]$ rooted in $x_j$ that has $k_S$ deletions from the string and $k_T$ deletions from the tree, i.e. $\A[j,i,k_T,k_S] = \max_{\mu \in \Dm{}}\mu.score$
\end{lemma}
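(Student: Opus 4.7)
My plan is to prove \cref{lemma:main-proof} by induction on the postorder position of the node $x_j$ in $T$, invoking the (to-be-established) correctness of the Q-Mapping and P-Mapping subroutines in the inductive step. By the outer loop of \cref{alg:up}, all children of $x_j$ occupy strictly smaller postorder indices, so their subtables $\A[x_{j_k}, i', \cdot, \cdot]$ are already fully computed for every $i' \in \{1, \ldots, n\}$ by the time we begin processing $x_j$ at any start index $i$. Throughout, I will use the convention that invalid entries retain their default value $-\infty$, since by construction the corresponding $\mathcal{D}_M$-set is empty.

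For the base case, $x_j$ is a leaf. A derivation $\mu \in \mathcal{D}_M(x_j, i, k_T, k_S)$ either (i) deletes the single leaf, which forces $k_T = 1$ and dictates that every character of the derived string $S[i, i+k_S-1]$ is deleted, contributing score $0$ under the no-penalty assumption; or (ii) maps the leaf to some character $S[i']$ with $i \leq i' \leq i + k_S$ while deleting the remaining $k_S$ characters of $S[i, i+k_S]$, which forces $k_T = 0$ and yields score $h(j, S[i'])$. Taking the maximum over $i'$ in case (ii) is precisely the initialization formula of \cref{alg:up}, so both rules are justified; all other combinations of $k_T, k_S$ either violate $k_T \leq \xspan{x_j} = 1$ or push the end point past $n$, hence the set is empty and the entry stays $-\infty$ as required.

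For the inductive step, $x_j$ is an internal node. By the inductive hypothesis, each entry $\A[x_{j_k}, i', k_T, k_S]$ of every child $x_{j_k}$ of $x_j$ equals $\max_{\mu \in \mathcal{D}_M(x_{j_k}, i', k_T, k_S)} \mu.score$ (and $-\infty$ when this set is empty). Consequently, the collection $\mathcal{D}$ assembled by \cref{alg:up} from these entries over $i' \in [i, E(x_j, i, 0, d_S)]$ coincides with the union expressed in \cref{eq:input-using-M-leq} relative to the node $x_j$ and substring $S' = S[i, E(x_j, i, 0, d_S)]$, and inherits from the prior invocation that produced those entries the descending-by-end-point ordering demanded by the subroutines. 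The input thus satisfies the specification of Q-Mapping (if $x_j$ is a Q-node) or P-Mapping (if $x_j$ is a P-node); by the assumed correctness of the invoked subroutine, the returned slice overwrites $\A[j, i, \cdot, \cdot]$ with $\max_{\mu \in \mathcal{D}_M(x_j, i, k_T, k_S)} \mu.score$ for each $(k_T, k_S)$, which completes the step.

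The main obstacle is bookkeeping rather than conceptual: I must carefully verify that the invalid-entry convention is preserved through both the leaf initialization and the subroutine calls, so that those entries never contribute spurious finite scores to the maxima in \cref{eq:input-using-M-leq} or in the subroutines' recursions; and I must confirm that for every child $x_{j_k}$ and every index $i'$ appearing in the reconstruction of $\mathcal{D}$, the earlier pass of the outer loop already populated the full three-dimensional slice $\A[x_{j_k}, i', \cdot, \cdot]$, which follows from the fact that the $i$-loop in \cref{alg:up} iterates through all of $\{1, \ldots, n\}$ for the smaller postorder index before moving on to $x_j$.
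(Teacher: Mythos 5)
Your proposal is correct and follows essentially the same route as the paper's own proof: induction on the postorder position of the node (the paper phrases this as induction on entry order, which is equivalent), a base case on leaves split by $k_T\in\{0,1\}$, and an inductive step that verifies the input collection $\D{}$ matches \cref{eq:input-using-M-leq} before invoking the assumed correctness of the Q-Mapping and P-Mapping subroutines. The additional bookkeeping you flag about invalid entries and the descending end-point ordering is sound and mirrors remarks the paper makes elsewhere.
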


\begin{proof}
We prove \cref{lemma:main-proof} by induction on the entries of $\A{}$ in the order described in the algorithm. 
Namely, for two entries $\A[j_1,i_1, k_{T_1},k_{S_1}]$ and $\A[j_2,i_2,k_{T_2},k_{S_2}]$, $\A[j_1,i_1, k_{T_1},k_{S_1}] < \A[j_2,i_2,k_{T_2},k_{S_2}]$ \ifff{} $j_1<j_2$ or both $j_1=j_2$ and $i_1<i_2$. 
If $j_1=j_2$ and $i_1=i_2$, then the order between the entries is chosen arbitrarily.

\subparagraph{Base Case.}
The base case of the algorithm is the initialization of the DP table, where the entries $\A[j,i,k_T,k_S]$ for $x_j\in \leaves{root}$ and $k_T \in \{0,1\}$ are computed. 
When $k_T=0$, there are no deletions from the tree. So, $x_j$ must be mapped to some character $S[\ell]$ ($i\leq \ell \leq E(x_j,i,0,k_S)$). In this version of the algorithm the deletion of a character does not change the score of the derivation, so the maximal score of a derivation in $\D_M(x_{j},i,0,k_S)$ is the maximum score of a mapping of $x_j$ to some character $S[\ell]$ ($i\leq \ell \leq E(x_j,i,0,k_S)$), which is the initialization value of the entry $\A[j,i,0,k_S]$.
When $k_T=1$, there is one deletion from the tree. The derived subtree $T(x_j)$ has one leaf, $x_j$, and so it must be the deleted leaf. All characters in the derived string, $S[i:E(x_j,i,1,k_S)]$, must also be deleted. Deletions do not add to the score of the derivation, and so all the derivations in $\D_M(x_{j},i,1,k_S)$ have a score of $0$, which is the initialization value of $\A[j,i,1,k_S]$.

\subparagraph{Induction Assumption.}
Assume that every entry $\A[j',i',k'_T,k'_S]$ such that $\A[j',i',k'_T,k'_S]$ $< \A[j,i,k_T,k_S]$ holds the best score of a derivation from $\D_M(x_{j'},i',k'_T,k'_S)$. Namely, $\A[j',i',k'_T,k'_S] = \max_{\mu\in \D_M(x_{j'},i',k'_T,k'_S)}{\mu.score} = OPT(j',i',k'_T,k'_S)$.

\subparagraph{Induction Step.}
For every internal node $x_j$ and possible start index $i$, the algorithm fills the DP table entry $\A[j,i,k_T,k_S]$ according to the values returned from the Q-mapping and P-mapping algorithms according to the type of $x_j$. The correctness of these algorithms is proven in \cref{sec:q-node-proof} and \cref{sec:p-node-proof}, respectively. Hence, it is only necessary to prove that the input the algorithms expect to receive is sent correctly from the main algorithm.

Both the Q-mapping and P-mapping algorithms expect to receive the internal node which should be the root of all the output derivations, a substring $S'$ of $S$, the deletion limits $d_T$ and $d_S$, and a collection of the best scoring derivations of every child of $x$ to every substring of $S'$ with up to $d_T$ and $d_S$ deletions from the tree and string, respectively.
By definition an entry in $\A[j,i,\cdot,\cdot]$ concerns the derivations of $x_j$ with a start point $i$. The end point of the longest derivation of those derivations is $E(j,i,0,d_S)$. Hence, the internal node sent to the Q-mapping or P-mapping algorithm is $x_j$ and the substring $S'$ equals $S[i,E(j,i,0,d_S)]$. The deletion limits $d_T$ and $d_S$ are given as input to the main algorithm.
Lastly, the best derivations of the children of $x_j$ are stored in $\A{}$. Because the nodes of $T$ are indexed in postorder, if $x_c$ is a child of $x_j$, then $c < j$. Hence, for every $i', k'_T, k'_S$, it holds that $\A[c,i',k'_T,k'_S] < \A[j,i,k_T,k_S]$, and from the induction assumption $\A[c,i',k'_T,k'_S]= OPT(c,i',k'_T,k'_S)$. So, indeed the expected input to the Q-mapping and P-mapping algorithms is correct.
This completes the proof.
\end{proof}
\subsection{P-Node Mapping}\label{sec:p-node-proof}
In this section we give the pseudocode of the P-mapping algorithm presented in \cref{subsec:p-mapping-alg} (\cref{alg:p-node}) and prove its correctness.

\begin{algorithm}[t]
\SetAlgoLined
\SetKwComment{Comment}{//}{}
\KwIn{$x, S', \D{}, d_T, d_S$}
\KwOut{The best derivation of $x$ to every prefix of $S'$}
$\gamma \gets |\children{x}|$\;
Build $\PP{}$ with dimensions $2^\gamma \times d_T+1 \times d_S+1$\;
\For{$size=0$ \KwTo $\gamma$}{
    \ForEach{$C\subseteq \children{x}$ s.t. $|C|=size$}{
        \For{$k_S=0$ \KwTo $d_S$}{
    	    \For{$k_T=0$ \KwTo $d_T$}{
        	    \eIf{($L(C,k_T,k_S) = 0$ and $k_S=0$) or ($size = 0$ and $k_T=0$)} 
        	    {\Comment{initialization}
        	    $\PP[C,k_T,k_S] \gets 0$\;}
            	{compute $\PP[C,k_T,k_S]$ according to \cref{eq:p-recursion}\;}
	        }
    	}
    }
 }
 \KwRet{$\PP[\children{x},\cdot,\cdot]$} \;
\caption{P-Mapping.}
\label{alg:p-node}
\end{algorithm}

\begin{lemma}\label{lemma:p-node}
    At the end of the algorithm every entry of the DP-table, $\PP[C,k_T,k_S]$, holds the best score for a derivation of $\xC{}$ to a prefix of $S'$ with $k_T$ deletions from the tree and $k_S$ deletions from the string, i.e. $\PP[C,k_T,k_S] = \max_{\mu \in \Mxc{}}\mu.score$
\end{lemma}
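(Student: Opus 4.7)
\textbf{Proof proposal for \cref{lemma:p-node}.} The plan is to prove the claim by strong induction over the entries of $\PP{}$ in the order in which the algorithm fills them: first by increasing $|C|$, and for fixed $C$ by increasing $k_T$ and $k_S$. For each entry I will show both inequalities: $\PP[C,k_T,k_S] \leq \max_{\mu \in \Mxc{}}\mu.score$ (every term the algorithm considers corresponds to a valid partial derivation of $\xC{}$), and $\PP[C,k_T,k_S] \geq \max_{\mu \in \Mxc{}}\mu.score$ (an optimal partial derivation decomposes into one of the cases the algorithm considers).

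For the base cases I would verify the two initialization rules from \cref{sec:p-mapping-init}. When $L(C,k_T,k_S)=0$ and $k_S=0$, the identity $k_T = \sum_{c\in C}\xspan{c}$ forces every leaf of every $c\in C$ to be deleted, and the unique such partial derivation has score $0$. When $C=\emptyset$ and $k_T=0$, the only partial derivation is the one that deletes every character of $S'[1:k_S]$, again with score $0$. For every other "initialization-shaped" combination the entry is invalid by the definitions in \cref{sec:q-node-mapping}, so the default value $-\infty$ is correct.

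For the induction step, fix $C$, $k_T$, $k_S$ and let $i=\E(C,k_T,k_S)$ be the common end point of every $\mu\in\Mxc{}$. The upper-bound direction is routine: each branch of \cref{eq:p-recursion} is built from an entry whose value equals the optimum of the corresponding subfamily by the induction hypothesis, and the operations $\adddel{\cdot}{i}$ and $\add{\cdot}{\cdot}$ (formalised in \cref{sec:proofs}) turn that derivation into an element of $\Mxc{}$ with exactly the summed score. The lower-bound direction is the crux of the argument: given any optimal $\mu^\ast\in\Mxc{}$, consider the fate of $S'[i]$ under $\mu^\ast$. If $S'[i]$ is deleted, then $\rmdel{\mu^\ast}{i}$ lies in $\D(\xC{},k_T,k_S-1)$ and its score equals $\mu^\ast.score$, matching the first case of the recursion via the induction hypothesis applied to $\PP[C,k_T,k_S-1]$. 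Otherwise, by the hierarchical structure of $T(\xC{})$, the character $S'[i]$ is mapped under a unique sub-derivation $\eta$ with $\eta.v=c$ for some $c\in C$ and $\eta.e=i$; then $\remove{\mu^\ast}{\eta}\in \D_\leq(C\setminus\{c\},k_T-\eta.\dt,k_S-\eta.\ds)$, so by the induction hypothesis $\PP[C\setminus\{c\},k_T-\eta.\dt,k_S-\eta.\ds]+\eta.score \geq \mu^\ast.score$, which is precisely the term achieved in the second case with $\mu=\eta\in\Mc{}$.

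The main obstacle will be the case in which a child $c\in C$ is deleted entirely under $\mu^\ast$ (i.e.\ every leaf of $T(c)$ is mapped to $\varepsilon$). This case is not explicitly represented in the recursion, so the proof must show, as sketched in \cref{par:p-node-deletion-intuition}, that the "full deletion of $c$" is absorbed by a chain of recursive calls that eventually reach a base entry $\PP[\{c\},\xspan{c},0]=0$ initialised by the first rule; this requires iterating the argument above, at each step peeling off one non-deleted child via case two until only fully-deleted children remain, and then invoking the base case. Once this chain is shown to preserve scores, the two inequalities combine to give the claimed equality and the induction is complete.
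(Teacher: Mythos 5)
Your proposal matches the paper's proof in all essentials: the same induction order, the same two base cases, both inequalities, and the same key dichotomy in the crucial direction (the fate of $S'[\E(C,k_T,k_S)]$ under the optimal $\mu^\ast$: deleted versus mapped under a sub-derivation $\eta$ with $\eta.v\in C$), with $\rmdel{\cdot}{\cdot}$, $\remove{\cdot}{\cdot}$, $\adddel{\cdot}{\cdot}$, $\add{\cdot}{\cdot}$ doing exactly the bookkeeping you describe.

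The one place you over-complicate things is the ``main obstacle'' discussion of a fully deleted child $c\in C$. You frame this as requiring a separate argument that unrolls a chain of recursive calls down to a base entry. That unrolling is not needed in the proof: the dichotomy on the fate of the last character is already exhaustive regardless of whether any child of $x$ is deleted under $\mu^\ast$. If $c$ is deleted, it simply remains in the set $C'$ passed to the induction hypothesis (either $C'=C$ in the character-deletion branch or $C'=C\setminus\{\eta.v\}$ with $\eta.v\neq c$ in the mapping branch), and the induction hypothesis — not an explicit chain you must trace — guarantees that $\PP[C',\cdot,\cdot]$ already equals the corresponding optimum. The base case with $L(C',k_T',0)=0$ is where the surviving deleted children get scored as $0$, but that is just the terminus of the induction, not an extra lemma. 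The observation you make is closer in spirit to the algorithmic intuition in \cref{par:p-node-deletion-intuition} (explaining why no third ``delete a child'' branch is needed in \cref{eq:p-recursion}) than to a step that belongs inside the induction proof itself. Also, as a small imprecision, the terminal base entry is not always of the form $\PP[\{c\},\xspan{c},0]$: if several children are fully deleted it is $\PP[C',\sum_{c\in C'}\xspan{c},0]$ for the set $C'$ of all such children. Neither point affects correctness; they just suggest you would spend effort on a sub-argument the induction hypothesis already subsumes.
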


\begin{proof}
We prove \cref{lemma:p-node} by induction on the entries of $\PP{}$ in the order described in the algorithm. Namely, for two entries $\PP[C_1,k_{T_1},k_{S_1}]$ and $\PP[C_2,k_{T_2},k_{S_2}]$, $\PP[C_1,k_{T_1},k_{S_1}] < \PP[C_2,k_{T_2},k_{S_2}]$ \ifff 
\begin{itemize}
    \item $|C_1| < |C_2|$, or
    \item $|C_1| = |C_2|$ and $k_{S_1} < k_{S_2}$, or 
    \item $|C_1| = |C_2|$ and $k_{S_1} = k_{S_2}$ and $k_{T_1} < k_{T_2}$ 
\end{itemize}
If $C_1 \neq C_2$, $|C_1| = |C_2|$, $k_{S_1} = k_{S_2}$ and $k_{T_1} = k_{T_2}$ are all satisfied, then the order between the entries is chosen randomly.

\subparagraph{Base Cases.} 
There are two types of base cases, as described in the initialization of the DP table. 
\begin{enumerate}
    \item $L(C,k_T,k_S)=0$ and $k_S=0$: Let $\mu$ be a derivation of $\xC{}$ with $k_T$ and $k_S$ deletions. By definition, $\mu$ derives an empty string, i.e. there are no characters to map to the leaves of the subtrees rooted in the nodes in $C$. Hence, every child of $x$ that is considered (the nodes in $C$) must be deleted under $\mu$. All the nodes in $C$ can be deleted if the sum of their spans is equal to the allowed number of deletions in $\mu$ (that is, $k_T$). From the definition of $L(C,k_T,k_S)=0$ and the fact that $k_S=0$, we receive that indeed $k_T = \sum_{c\in C}\xspan{c}$. Every child node of $x$ that is kept under $\mu$ adds to the score of the derivation of $x$, but there are none in this case. In addition, every deletion from the subtree $T(x)$ adds nothing to the score (in the penalization-free version of the algorithm). Hence, the score of $\mu$ must equal $0$.
    \item $C=\emptyset$ and $k_T=0$: In this case all of the children of $x$ are ignored, so there are no leaves to map. Hence, every character of the derived string should be deleted. Note that, the derived string is $S'[1:\E(C,k_T,k_S)]$, and its length is $L(C,k_T,k_S) = \sum_{c\in C}\xspan{c} - k_T + k_S = \sum_{c\in \emptyset}\xspan{c} - 0 + k_S = k_S$. So, the number of deletions from the string in this state is exactly the number needed to delete the derived string.
\end{enumerate}

\subparagraph{Induction Assumption.}
Assume that every table entry $\PP[C',k'_T,k'_S]$ such that $\PP[C',k'_T,k'_S]$ $< \PP[C,k_T,k_S]$ holds the best score of a derivation in $\D(x^{(C')},k'_T,k'_S)$. Namely, $\PP[C',k'_T,k'_S]$ $= \max_{\mu\in \D(\xC{'},k'_T,k'_S)}{\mu.score} = OPT(C',k'_T,k'_S)$.

\subparagraph{Induction Step.}
Towards the proof of the step, we prove the following \cref{eq:p-induction-step}:
\begin{equation}\label{eq:p-induction-step}
\begin{split}
   OPT(C,k_T,k_S) = &\max(OPT(C,k_T,k_S-1),\\
   &\displaystyle\max_{\mu \in \Mc{}} {OPT(C\setminus \{\mu.v\},k_T-\mu.\dt,k_S-\mu.\ds) + \mu.score}) 
\end{split}
\end{equation}
\begin{itemize}
    \item[$\leq$:]
    Let $\mu^* \in \Mxc$ be a derivation such that $\mu^*.score = OPT(C,k_T,k_S)$.
    By definition, $\mu^*$ is a derivation of $\xC{}$ to the string $S'[1:\E(C,k_T,k_S)]$. In a derivation every character of the derived string is either deleted or it is a part of a substring derived from one of the children of $x$. So, either $S'[\E(C,k_T,k_S)]$ is deleted under $\mu^*$, or it is mapped under some derivation of a child of $x$, $y \in C$, to a substring $S'[i:\E(C,k_T,k_S)]$ (for an index $0<i\leq\E(C,k_T,k_S)$).
    
    First, if the former is true, then by removing the deletion of $S'[\E(C,k_T,k_S)]$ from $\mu^*$, $\rmdel{\mu^*}{\E(C, k_T, k_S)}$, a derivation $\mu' \in \D(x^{(C)},k_T,k_S-1)$ is received. The derivation $\mu'$ derives the string $S'[1:\E(C,k_T,k_S-1)] = S'[1:\E(C,k_T,k_S)] - 1$. So, the following \cref{eq:leq-string-deletion} is true.
    \begin{equation} \label{eq:leq-string-deletion}
    \begin{split}
        \mu^*.score &= \mu'.score\\
        &\leq OPT(C,k_T,k_S-1)\\
        &\leq \max (OPT(C,k_T,k_S-1),\\ &\displaystyle\max_{\mu \in \Mc{}}
        {OPT(C\setminus \{\mu.v\},k_T-\mu.\dt,k_S-\mu.\ds) + \mu.score})
    \end{split}
    \end{equation}
    Note that even if there is a penalization cost for deletions, the cost for the deletion of $S'[\E(C,k_T,k_S)]$ (i.e. $-\Delta(S'[\E(C,k_T,k_S)])$) is constant in this setting. So, for two derivations $\eta, \eta' \in \D(k_T,k_S-1,x^{(C)})$ if $\eta.score \leq \eta'.score$ then $\eta.score -\Delta(S'[\E(C,k_T,k_S)]) \leq \eta'.score -\Delta(S'[\E(C,k_T,k_S)])$. Hence, the conclusion from \cref{eq:leq-string-deletion} is still true.

    Second, if the latter is true, then there is a node $y \in C$ for which there is a derivation $\mu_y \in \D{}$ such that $\mu_y.e = \E(C,k_T,k_S)$ and $\mu.y$ is the derivation of $y$ under $\mu^*$.
    For $\mu^*$ to be a legal derivation, $\mu_y$ must be in $\Mc{}$. Hence, $\mu_y.score \leq \max_{\mu \in \Mc{}}{\mu.score}$.
    
    Furthermore, by removing $\mu_y$ from $\mu^*$, $\remove{\mu^*}{\mu.y}$, the received partial derivation, $\mu'$, is of $\xC{\setminus\{y\}}$ to $S'[1:\mu_y.s-1]$ with $k_T-\mu_y.\dt$ deletions from the tree and $k_S-\mu_y.\ds$ from the string. Thus, $\mu' \in \D(k_T-\mu_y.\dt, k_S-\mu_y.\ds, \xC{\setminus\{y\}})$, and so $\mu'.score \leq OPT(\xC{\setminus\{y\}}, k_T-\mu_y.\dt, k_S-\mu_y.\ds)$.
    Note that, indeed $\mu_y.s = 1 + \E(C\setminus\{y\},k_T-\mu_y.\dt, k_S-\mu_y.\ds)$, as can be seen in the following \cref{eq:proof-start-end-points-match}.
    \begin{equation} \label{eq:proof-start-end-points-match}
    \begin{split}
        \mu_y.s &= \E(C,k_T,k_S) - L(y,\mu_y.\dt, \mu_y.\ds) + 1 \\
        &= \sum_{c\in C}{\xspan{c}} + k_S - k_T - (\mu_y.\ds - \mu_y.\dt + \xspan{y}) + 1 \\ 
        &= \sum_{c\in C\setminus\{y\}}{\xspan{c}} + k_S-\mu_y.\ds - (k_T-\mu_y.\dt) + 1 \\
        &= \E(C\setminus\{y\},k_T-\mu_y.\dt, k_S-\mu_y.\ds) + 1
    \end{split}
    \end{equation}
    By combining our conclusions about $\mu_y$ and $\mu'$ together, we receive the following \cref{eq:mapping-leq-opt}. 
    \begin{equation} \label{eq:mapping-leq-opt}
        \begin{split}
            \mu^*.score &= \mu'.score + \mu_y.score \\
            &\leq OPT(C\setminus\{y\},k_T-\mu_y.\dt, k_S-\mu_y.\ds) + \max_{\mu \in \Mc{}}{\mu.score} \\
            &\leq \displaystyle\max_{\mu \in \Mc{}}
            {OPT(C\setminus \{\mu.v\},k_T-\mu.\dt,k_S-\mu.\ds) + \mu.score} \\
            &\leq \max (OPT(C, k_T, k_S-1), \\ 
            &\displaystyle\max_{\mu \in \Mc{}}
            {OPT(C\setminus \{\mu.v\},k_T-\mu.\dt,k_S-\mu.\ds) + \mu.score})
        \end{split}
    \end{equation}
    
    \item[$\geq$:]
    Let $\mu^*$ be a derivation such that \cref{eq:p-proof-geq} holds.
    \begin{equation} \label{eq:p-proof-geq}
    \begin{split}
        \mu^*.score = \max(&OPT(C, k_T, k_S-1), \\
        &\max_{\mu \in \Mc{}} OPT(C\setminus \{\mu.v\},k_T-\mu.\dt,k_S-\mu.\ds) + \mu.score)
    \end{split}
    \end{equation}
    So, either $\mu^*.score = OPT(C, k_T, k_S-1)$, 
    or $\mu^*.score = \displaystyle\max_{\mu \in \Mc{}}$
            $OPT(C\setminus\{\mu.v\},$ $k_T-\mu.\dt,k_S-\mu.\ds) + \mu.score$.
    
    First, if the former is true, let $\eta \in \D(\xC{},k_T,k_S-1)$ be a derivation with $\eta.score = OPT(C, k_T, k_S-1)$. By definition, $\eta$ derives the substring $S'[1:\E(C,k_T,k_S-1)]$. Adding to $\eta$ the deletion of $S'[\E(C,k_T,k_S)]$, $\adddel{\eta}{\E(C,k_T,k_S)}$, results in a derivation $\eta'$ of $\xC{}$ to the string $S'[1:\E(C,k_T,k_S)]$ with $k_T$ deletions from the tree and $k_S$ deletions from the string. The string $S'[1:\E(C,k_T,k_S)]$ is equal to the concatenation of $S'[1:\E(C,k_T,k_S-1)]$ and $S'[\E(C,k_T,k_S)]$. So, $\eta' \in \Mxc{}$, and thus $\eta'.score \leq OPT(C, k_T, k_S)$. The derivation $\eta'$ was constructed such that $\mu^*.score = \eta'.score$, so $\mu^*.score \leq OPT(C, k_T, k_S)$.
    
    Second, if the latter is true, then let $\eta^* = \argmax_{\mu \in \Mc{}}OPT(C\setminus \{\mu.v\}, k_T - \mu.\dt, k_S - \mu.\ds) + \mu.score$. Adding $\eta^*$ to a partial derivation $\eta \in \D(\xC{\setminus \{\eta^*.v\}},k_T - \eta^*.\dt, k_S - \eta^*.\ds)$, $\add{\eta}{\eta^*}$, results in a partial derivation, $\eta'$, with $k_T-\eta^*.\dt + \eta^*.\dt = k_T$ deletions from the tree and $k_S-\eta^*.\ds + \eta^*.\ds = k_S$ deletions from the string, that takes into account the children of $x$ that are in $C\setminus \{\eta^*.v\} \cup \{\eta^*.v\} = C$. It is a legal partial derivation since $\eta^*$ derives the node $\eta^*.v$ that is not in $C\setminus \{\eta^*.v\}$ to a string that does not intersect with the string derived by $\eta$. 
    The string that is derived by $\eta$ is $S'[\eta.s:\eta.e]$ and it does not intersect with the string derived by $\eta^*$ ($S'[\eta^*.s:\eta^*.e]$). That is because $\eta.e + 1 = \eta^*.s$, as can be seen similarly to \cref{eq:proof-start-end-points-match}. So, $\eta' \in \Mxc$, and thus $\eta'.score \leq OPT(C, k_T, k_S)$. The partial derivation $\eta'$ was constructed such that $\mu^*.score = \eta'.score$, so $\mu^*.score \leq OPT(C, k_T, k_S)$.
\end{itemize}
 From the induction assumption, $\PP[C, k_T, k_S-1] = OPT(C, k_T, k_S-1)$ and for every $\mu \in \Mc$, $\PP[C\setminus \{\mu.v\}, k_T - \mu.\dt, k_S - \mu.\ds] = OPT(C\setminus \{\mu.v\}, k_T - \mu.\dt, k_S - \mu.\ds)$. Thus from \cref{eq:p-induction-step}, it follows that $\PP[C, k_T, k_S] = OPT(C, k_T, k_S)$. This completes the proof.
\end{proof}
\subsection{Q-Node Mapping}\label{sec:q-node-proof}
In this section we prove the correctness of the Q-mapping algorithm presented in \cref{sec:q-node-mapping}. We prove the algorithm for the case where the children of $x$ can only be arranged in a left-to-right order. The proof of the right-to-left order is similar.
\begin{lemma}\label{lemma:q-node}
    At the end of the algorithm every entry of the DP-table $\Q{}$, $\Q[i,k_T,k_S]$, holds the best score of a derivation of $\x{i}$ and a prefix of $S'$ with $k_T$ deletions from the tree and $k_S$ deletions from the string, i.e. $\Q[i,k_T,k_S] = \max_{\mu \in \Mxi{}}\mu.score$.
\end{lemma}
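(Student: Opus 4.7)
\medskip
\noindent\textbf{Proof proposal.} The plan is to mimic the induction used for \cref{lemma:p-node}, but over a simpler, linearly ordered index set. I will induct on the entries of $\Q{}$ in the order in which they are computed, namely lexicographically on $(i, k_S, k_T)$. The base case is the initialization row $\Q[0, 0, k_S]$ for $0 \leq k_S \leq \len{S'}$: here $\x{0}$ has no labels to map, so any partial derivation must delete exactly the $k_S$ characters of $S'[1:k_S]$; since deletions carry no cost in the penalty-free setting, the unique score is $0$, which matches the initialization. All other entries of the form $\Q[0, k_T, k_S]$ with $k_T > 0$ are invalid and correctly left at $-\infty$.

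For the induction step on a valid entry $\Q[i, k_T, k_S]$, I would prove the two directions of \cref{eq:q-recursion} separately, just as was done in the proof of \cref{lemma:p-node}. For the $\leq$ direction, take any $\mu^* \in \Mxi{}$ achieving the optimum. Because $\mu^*$ does not ignore any of $x_1, \dots, x_i$, its last child $x_i$ is either deleted under $\mu^*$ or kept; and the character $S'[\E(\xis{},k_T,k_S)]$ is either deleted under $\mu^*$ or mapped under $\mu^*$. This yields three sub-cases exactly matching the three branches of \cref{eq:q-recursion}: (a) if $S'[\E(\xis{},k_T,k_S)]$ is deleted, apply $\rmdel{\mu^*}{\E(\xis{},k_T,k_S)}$ to obtain a partial derivation in $\D(\x{i}, k_T, k_S-1)$ of the same score and invoke the induction hypothesis on $\Q[i, k_T, k_S-1]$; (b) if $x_i$ is deleted, apply $\rmdel{\mu^*}{x_i}$ to obtain a partial derivation in $\D(\x{i-1}, k_T - \xspan{x_i}, k_S)$ and invoke the induction hypothesis on $\Q[i-1, k_T - \xspan{x_i}, k_S]$; (c) otherwise, extract from $\mu^*$ the sub-derivation $\mu_{x_i}$ of $x_i$, use $\remove{\mu^*}{\mu_{x_i}}$ to split $\mu^*$ into $\mu_{x_i}$ and a partial derivation of $\x{i-1}$, and bound the score using the induction hypothesis on $\Q[i-1, k_T - \mu_{x_i}.\dt, k_S - \mu_{x_i}.\ds]$ together with $\mu_{x_i} \in \Mi{}$.

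For the $\geq$ direction, I reverse each of the three constructions: given an optimal witness for one of the summands on the right-hand side of \cref{eq:q-recursion}, I would invoke $\adddel{\cdot}{\cdot}$ or $\add{\cdot}{\cdot}$ (from \cref{def:add-char-del}, \cref{def:add-node-del}, \cref{def:add-derivation}) to produce a partial derivation in $\Mxi{}$ whose score equals the summand. For the $\add{\cdot}{\cdot}$ step in case (c), I would verify by an index computation analogous to \cref{eq:proof-start-end-points-match} that the end point of the derivation of $\x{i-1}$ sits exactly one position to the left of $\mu_{x_i}.s$, so gluing them produces a valid partial derivation with end point $\E(\xis{},k_T,k_S)$.

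The main obstacle, as in the P-mapping proof, is case (c) of the $\leq$ direction, but here the obstacle is actually reduced by the Q-node's rigidity: I must argue that if $x_i$ is kept and $S'[\E(\xis{},k_T,k_S)]$ is mapped, then $S'[\E(\xis{},k_T,k_S)]$ must be mapped by a leaf of $T(x_i)$ specifically, so $\mu_{x_i}.e = \E(\xis{},k_T,k_S)$ and $\mu_{x_i} \in \Mi{}$. This follows from the left-to-right ordering of the kept children of the Q-node $x$: since $x_i$ is the rightmost non-ignored child of $\x{i}$, no leaf of $T(x_j)$ for $j < i$ can be mapped to a string position to the right of $\mu_{x_i}.e$. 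Finally, the right-to-left ordering is handled by the symmetric remark in the paragraph titled ``A Second Ordering of the Children'' (\cref{par:q-node-second-order}), where the same argument is applied to the DP table $\Q_r$ built on the reversed child list.
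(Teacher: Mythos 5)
Your proposal is correct and follows essentially the same approach as the paper's proof: induction in lexicographic order on $(i,k_S,k_T)$, the same initialization argument for the base case, the same three-way case split in the $\leq$ direction (character deleted, $x_i$ deleted, or both kept/mapped with the Q-node's left-to-right rigidity forcing the last position to be mapped under a derivation rooted at $x_i$), the reverse constructions via $\mathsf{addDel}$/$\mathsf{add}$ in the $\geq$ direction, and the boundary-index computation analogous to \cref{eq:proof-start-end-points-match}. The treatment of the right-to-left ordering by a second DP table also matches the paper.
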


\begin{proof}
We prove \cref{lemma:q-node} by induction on the entries of $\Q{}$ in the order described in the algorithm. Namely, for two entries $\Q[i_1, k_{T_1},k_{S_1}]$ and $\Q[i_2,k_{T_2},k_{S_2}]$, $\Q[i_1,k_{T_1},k_{S_1}] < \Q[i_2,k_{T_2},k_{S_2}]$ \ifff
\begin{itemize}
    \item $i_1 < i_2$, or
    \item $i_1 = i_2$ and $k_{S_1} < k_{S_2}$, or
    \item $i_1 = i_2$ and $k_{S_1} = k_{S_2}$ and $k_{T_1} < k_{T_2}$.
\end{itemize}

\subparagraph{Base Case.}
The base case is the initialization of the DP table entries $\Q[0,0,k_S]$ for $0\leq k_S \leq \len{S'}$, with a value of $0$. Each of these entries holds the score of some derivation $\mu$ of $\x{0}$, i.e. $\mu$ is a partial derivation that ignores all nodes in $T(x)$. In addition $\mu$ derives the substring $S'[1:L(\x{0},0,k_S)] = S'[1:k_S]$. Hence, all the characters in $S'[1:k_S]$ must be deleted under $\mu$. Each deletion does not add to the score of the derivation and there are no mappings under $\mu$ either, so the score of such a derivation is $0$.

\subparagraph{Induction Assumption.}
Assume that every table entry $\Q[i',k'_T,k'_S]$ such that $\Q[i',k'_T,k'_S] < \Q[i,k_T,k_S]$ holds the best score of a derivation from $\D(\x{i'},k'_T,k'_S)$. Namely, $\Q[i',k'_T,k'_S]= \max_{\mu\in \D(\x{i'},k'_T,k'_S)}{\mu.score} = OPT(i',k'_T,k'_S)$.

\subparagraph{Induction Step.}
Towards the proof of the step, we prove the following \cref{eq:q-induction-step}:
\begin{equation}\label{eq:q-induction-step}
\begin{split}
   OPT(i,k_T,k_S) = \max(&OPT(i,k_T, k_S-1),\\ &OPT(i-1,k_T-\xspan{x_i}, k_S),\\
   &\displaystyle\max_{\substack {\mu \in \Mi{} \\ \mathrm{s.t.}\ \mu .v = x_i}} OPT(i-1, k_T-\mu .\dt,k_S-\mu .\ds) + \mu .score) 
\end{split}
\end{equation}

\begin{itemize}
    \item [$\leq$:]
    Let $\mu^* \in \Mxi$ be a derivation such that $\mu^*.score = OPT(i,k_T,k_S)$.
    By definition, $\mu^*$ is a derivation of $\x{i}$ to the string $S'[1:\E(i,k_T,k_S)]$. Under a derivation every child of the root of the derivation is either deleted or kept and every character of the derived string is either deleted or mapped. Thus, $x_i \in \children{\x{i}}$ is either deleted or kept under $\mu^*$, and the character $S'[\E(i,k_T,k_S)]$ is either deleted or mapped under $\mu^*$. Now, let us consider every case.
    
    First, consider a case in which $x_i$ is deleted under $\mu^*$. By removing the deletion of $x_i$ from $\mu^*$ ($\rmdel{\mu^*}{x_i}$) a partial derivation, $\mu'$, that ignores $x_i$ is received, therefore the root of $\mu'$ is $\x{i-1}$. By \cref{def:rm-node-del}, $\mu'$ has $\mu^*.\dt-\xspan{x_i} = k_T - \xspan{x_i}$ deletions from the tree and $\mu^*.\ds=k_S$ deletions from the string. Hence, $\mu' \in \D(\xis{-1},k_T - \xspan{x_i},k_S)$ and \cref{eq:leq-node-deletion} below is true (remember that a deletion of a node does not change the score of a derivation).
    \begin{equation} \label{eq:leq-node-deletion}
    \begin{split}
        \mu^*.score = \mu'.score 
        &\leq OPT(i-1,k_T-\xspan{x_i}, k_S)\\
        &\leq \max (OPT(i,k_T, k_S-1), OPT(i-1,k_T-\xspan{x_i}, k_S), \\
        &\displaystyle\max_{\substack {\mu \in \Mi{} \\ \mathrm{s.t.}\ \mu .v = x_i}}
        {OPT(i-1,k_T-\mu.\dt,k_S-\mu.\ds) + \mu.score})
    \end{split}
    \end{equation}
    
    Second, consider a case in which $S'[\E(i,k_T,k_S)]$ is deleted under $\mu^*$. By removing the deletion of $S'[\E(i,k_T,k_S)]$ from $\mu^*$, $\rmdel{\mu^*}{\E(i,k_T,k_S)}$ (see \cref{def:rm-char-del}), the partial derivation received, $\mu'$, has $k_T$ and $k_S-1$ deletions from the tree and string, respectively, and its root is $\x{i}$. Hence, $\mu' \in \D(\x{i},k_T,k_S-1)$ and \cref{eq:q-leq-string-deletion} below is true (remember that a deletion of a character does not change the score of a derivation).
    \begin{equation} \label{eq:q-leq-string-deletion}
    \begin{split}
        \mu^*.score = \mu'.score 
        &\leq OPT(i,k_T, k_S-1)\\
        &\leq \max (OPT(i,k_T, k_S-1), OPT(i-1,k_T-\xspan{x_i}, k_S), \\
        &\displaystyle\max_{\substack {\mu \in \Mi{} \\ \mathrm{s.t.}\ \mu .v = x_i}}
        {OPT(i-1,k_T-\mu.\dt,k_S-\mu.\ds) + \mu.score})
    \end{split}
    \end{equation}
    
    Lastly, if neither is true, then $S'[\E(i,k_T,k_S)]$ is mapped under $\mu^*$ and $x_i$ is kept under $\mu^*$. Let $\mu_i$ be the derivation of $x_i$ under $\mu^*$ (there is one because $x_i$ is kept under $\mu^*$). Because $S'[\E(i,k_T,k_S)]$ is mapped, then it is a part of a substring of $S'[1:\E(i,k_T,k_S)]$ that is derived by some derivation, $\mu_j$, such that $\mu_j$ is the derivation of the child node $\mu_j.v \in \children{\x{i}}$ under $\mu^*$. 
    Since $x_i$ is the rightmost child of $\x{i}$ and the children of the Q-node $x$ can only be arranged from left to right (in this proof), $\mu_j.v$ must be $x_i$. Otherwise, the left-to-right ordering is defied. Every child of $x$ can only have one derivation under $\mu^*$, so $\mu_i=\mu_j$.
    Note that $\mu_i$ must have up to $k_T$ and $k_S$ deletions from the tree and string, respectively, else $\mu^*$ is not a legal derivation. In addition, the end point of $\mu_i$ is $\E(i,k_T,k_S)$. Let $\mu_i^*$ be the highest scoring derivation of $x_i$ with up to $k_T$ and $k_S$ deletions which has the endpoint $\E(i,k_T,k_S)$, i.e $\mu_i.score \leq \mu_i^*.score$. By definition, $\mu_i^* \in \D_\leq(\x{i},k_T,k_S)$, hence $\mu_i.score \leq \max_{\mu \in \D_\leq(\x{i},k_T,k_S)}\mu.score$.
    Now, removing $\mu_i$ from $\mu^*$, $\remove{\mu^*}{\mu_i}$ (see \cref{def:remove-derivation}), results in a derivation, $\mu'$, with $\mu^*.\dt-\mu_i.\dt = k_T - mu_i.\dt$ deletions from the tree and $\mu^*.\ds-\mu_i.\ds = k_S - mu_i.\ds$ deletions from the string. In addition $\mu'$ ignores $x_i$, and so its root is $\x{i-1}$. Hence, similarly to $\mu_i$, $\mu'.score \leq \max_{\mu \in\D(i-1,k_T - mu_i.\dt,k_S - mu_i.\ds)}\mu.score = OPT(i-1,k_T- mu_i.\dt,k_S- mu_i.\ds)$. 
    Putting the conclusions on $\mu_i$ and $\mu'$ together we receive \cref{eq:q-leq-max-derivation} below.
    \begin{equation}\label{eq:q-leq-max-derivation}
    \begin{split}
        \mu^*.score &= \mu_i.score + \mu'.score \\
        &\leq \max_{\mu \in \D_\leq(\x{i},k_T,k_S)}\mu.score + OPT(i-1,k_T- \mu_i.\dt,k_S- \mu_i.\ds) \\
        &\leq \max_{\substack {\mu \in \Mi{} \\ \mathrm{s.t.}\ \mu .v = x_i}}
        {OPT(i-1,k_T-\mu.\dt,k_S-\mu.\ds) + \mu.score}) \\
        &\leq \max (OPT(i,k_T, k_S-1), OPT(i-1,k_T-\xspan{x_i}, k_S), \\
        &\displaystyle\max_{\substack {\mu \in \Mi{} \\ \mathrm{s.t.}\ \mu .v = x_i}}
        {OPT(i-1,k_T-\mu.\dt,k_S-\mu.\ds) + \mu.score})
    \end{split}
    \end{equation}
    
    In any case \cref{eq:q-leq} below is true.
    \begin{equation} \label{eq:q-leq}
        \begin{split}
            OPT(i,k_T,k_S) &= \mu^*.score \\
            &\leq \max (OPT(i,k_T, k_S-1), OPT(i-1,k_T-\xspan{x_i}, k_S), \\ &\displaystyle\max_{\substack {\mu \in \Mi{} \\ \mathrm{s.t.}\ \mu .v = x_i}}
        {OPT(i-1,k_T-\mu.\dt,k_S-\mu.\ds) + \mu.score})
        \end{split}
    \end{equation}
    
    \item [$\geq$:]
    Let $\mu^*$ be a derivation such that $\mu^*.score = \max(OPT(i,k_T, k_S-1), OPT(i-1,k_T-\xspan{x_i}, k_S),
   \displaystyle\max_{\substack {\mu \in \Mi{} \\ \mathrm{s.t.}\ \mu .v = x_i}} OPT(i-1, k_T-\mu .\dt,k_S-\mu .\ds) + \mu .score)$. Hence, $\mu^*.score = OPT(i,k_T, k_S-1)$ or $\mu^*.score = OPT(i-1,k_T-\xspan{x_i}, k_S)$ or $\mu^*.score = \displaystyle\max_{\substack {\mu \in \Mi{} \\ \mathrm{s.t.}\ \mu .v = x_i}} OPT(i-1, k_T-\mu .\dt,k_S-\mu .\ds) + \mu .score$. 
   
   First, assume $\mu^*.score = OPT(i,k_T, k_S-1)$.
   Let $\eta \in \D(\x{i},k_T,k_S-1)$ be a derivation with $\eta.score = OPT(i, k_T, k_S-1)$. By definition, $\eta$ derives the substring $S'[1:\E(\xis{},k_T,k_S-1)]$. From \cref{def:add-char-del}, Adding to $\eta$ the deletion of $S'[\E(\xis{},k_T,k_S)]$ ($\adddel{\eta}{\E(\xis{},k_T,k_S)}$) results in a derivation, $\eta'$ that derives $\x{i}$ to the string $S'[1:\E(\xis{},k_T,k_S)]$ with $k_T$ deletions from the tree and $k_S$ deletions from the string. The string $S'[1:\E(\xis{},k_T,k_S)]$ is equal to the concatenation of $S'[1:\E(\xis{},k_T,k_S-1)]$ and $S'[\E(\xis{},k_T,k_S)]$. So, $\eta' \in \Mxi{}$, and thus $\eta'.score \leq OPT(i,k_T, k_S)$. We have thus built $\eta'$ such that $\mu^*.score = \eta'.score$, so $\mu^*.score \leq OPT(k_T, k_S, C)$.
   
   Second, assume $\mu^*.score = OPT(i-1,k_T-\xspan{x_i}, k_S)$. Let $\eta \in \D(\x{i-1},k_T-\xspan{x_i},k_S)$ be a derivation with $\eta.score = OPT(i-1, k_T-\xspan{x_i}, k_S)$. By definition, $\eta$ derives the substring $S'[1:\E(\xis{-1},k_T-\xspan{x_i},k_S)]$. From \cref{def:rm-node-del}, adding the deletion of the node $x_i$ to $\eta$ ($\adddel{\eta}{x_i}$) results in a derivation $\eta'$ that derives $\x{i}$ to the string $S'[1:\E(\xis{},k_T,k_S)]$ with $k_T$ deletions from the tree and $k_S$ deletions from the string. So, $\eta' \in \Mxi{}$, and thus $\eta'.score \leq OPT(i,k_T, k_S)$. We built $\eta'$ such that $\mu^*.score = \eta'.score$, so $\mu^*.score \leq OPT(k_T, k_S, C)$.
   
   Lastly, assume $\mu^*.score = \displaystyle\max_{\substack {\mu \in \Mi{} \\ \mathrm{s.t.}\ \mu .v = x_i}} OPT(i-1, k_T-\mu .\dt,k_S-\mu .\ds) + \mu .score$. 
   Let $\eta^*$ be a derivation of $x_i$ that is in $\Mi{}$ that by setting $\eta^*$ to $\mu$ yields the highest value for $OPT(i-1, k_T-\mu .\dt,k_S-\mu .\ds) + \mu .score$ of all the derivations of $x_i$ in $\Mi{}$. Formally, $\eta^* = \argmax_{\substack {\mu \in \Mi{} \\ \mathrm{s.t.}\ \mu .v = x_i}} OPT(i-1, k_T-\mu .\dt,k_S-\mu .\ds) + \mu .score$. 
   From \cref{def:add-derivation}, adding $\eta^*$ to a partial derivation $\eta \in \D(\xis{-1},k_T - \eta^*.\dt, k_S - \eta^*.\ds)$ ($\add{\eta}{\eta^*}$) results in a partial derivation, $\eta'$, with $k_T-\eta^*.\dt + \eta^*.\dt = k_T$ deletions from the tree and $k_S-\eta^*.\ds + \eta^*.\ds = k_S$ deletions from the string, that takes into account the children of $x$ that are in $\xis{-1} \cup \{x_i\} = \xis{}$. It is a legal partial derivation since $\eta^*$ derives the node $\eta^*.v$ that is not in $\xis{-1}$ to a string that does not intersect with the string derived by $\eta$. 
   The string that is derived by $\eta$ is $S'[\eta.s:\eta.e]$ and it does not intersect with the string derived by $\eta^*$ ($S'[\eta^*.s:\eta^*.e]$). That is because $\eta.e + 1 = \eta^*.s$, as can be seen in \cref{eq:q-proof-start-end-points-match} below. So, $\eta' \in \Mxi{}$, and thus $\eta'.score \leq OPT(i,k_T, k_S)$. We built $\eta'$ such that $\mu^*.score = \eta'.score$, so $\mu^*.score \leq OPT(i,k_T, k_S)$.
    \begin{equation}\label{eq:q-proof-start-end-points-match}
        \begin{split}
            \eta^*.s &= \E(\xis{},k_T,k_S) - L(x_i,\eta^*.\dt, \eta^*.\ds) + 1 \\
                &= \sum_{x_j\in \xis{}}{\xspan{x_j}} + k_S - k_T - (\eta^*.\ds - \eta^*.\dt + \xspan{x_i}) + 1 \\ 
                &= \sum_{x_j\in \xis{-1}}{\xspan{x_j}} + k_S-\eta^*.\ds - (k_T-\eta^*.\dt) + 1 \\
                &= \E(\xis{-1},k_T-\eta^*.\dt, k_S-\eta^*.\ds) + 1 = \eta.e + 1
        \end{split}
    \end{equation}
\end{itemize}
 From the induction assumption, $\Q[i,k_T,k_S-1] = OPT(i,k_T,k_S-1)$, $\Q[i-1,k_T-\xspan{x_i},k_S] = OPT(i-1,k_T-\xspan{x_i},k_S)$ and for every $\mu \in \Mxi{}$ such that $\mu.v=x_i$, $\Q[i-1,k_T - \mu.\dt, k_S - \mu.\ds] = OPT(i-1,k_T - \mu.\dt, k_S - \mu.\ds)$. Thus from \cref{eq:q-induction-step}, it follows that $\Q[i,k_T,k_S] = OPT(i,k_T,k_S)$. This completes the proof.
\end{proof}
\subsection{Time and Space Complexity of the \PQT{} Search Algorithm}
Here we prove \cref{lemma:up-time}.
\begin{proof}\label{proof:general-time}
The number of leaves in the PQ-tree $T$ is $m$, hence there are $O(m)$ nodes in the tree, i.e the size of the first dimension of the DP table, $\A{}$, is $O(m)$. In the algorithm description a bound for the possible start indices of substrings derived from nodes in $T$ is given. The node with the largest span in $T$ is the root which has a span of $m$. The root is mapped to the longest substring when there are $d_S$ deletions from the string. Hence, the size of the second dimension of $\A{}$ is $\Omega(n-(m+d_S)+1) = \Omega(n)$ (given that $d < m << n$). The nodes with the smallest spans are the leaves, which have a span of $1$, hence the size of the second dimension of $\A{}$ is $O(n)$. The third and fourth dimensions of $\A{}$ are of size $d_T+1$ and $d_S+1$, respectively. In total, the DP table $\A{}$ is of size $O(d_T d_S m n)$.

In the initialization step $O(d_T d_S m n)$ entries of $\A{}$ are computed in $O(1)$ time each. This holds because there are $m$ leaves and $n$ possible start indices for strings of length $1$. The $d_T$ and $d_S$ factors come from the initialization of entries with $-\infty$.
The P-mapping algorithm is called for every P-node in $T$ and every possible start index $i$, i.e. the P-mapping algorithm is called $O(n m_p)$ times. Similarly, the Q-mapping algorithm is called $O(n m_q)$ times. Thus, it takes $O(n\ (m_p \ \cdot \textrm{Time(P-mapping)} + \ m_q \ \cdot \textrm{Time(Q-mapping)})))$ time to fill the DP table.
In the final stage of the algorithm (line $21$ in \cref{alg:up}) the maximum over the entries corresponding to every combination of deletion number and start index ($0\leq k_T\leq d_T$, $0\leq k_S \leq d_S, 1\leq i\leq n-(\xspan{x}-d_T)+1\}$) is computed. So, it takes $O(d_T d_S n)$ time to find the maximum score of a derivation. Tracing back through the DP table to find the actual mapping a does not increase the time complexity.

In \cref{sec:q-node-mapping} it is shown that our Q-mapping algorithm takes $O(\gamma {d_T}^2 {d_S}^2)$ time and $O(d_T d_S \gamma)$ space. From \cref{lemma:p-time} our P-mapping algorithm takes $O(\gamma 2^{\gamma} {d_T}^2 {d_S}^2)$ time and $O(d_T d_S 2^\gamma)$ space. Thus, in total, our algorithm runs in $O(n (m_p \cdot \gamma 2^{\gamma} {d_T}^2 {d_S}^2 + m_q \cdot \gamma {d_T}^2 {d_S}^2)) = O(n \gamma {d_T}^2 {d_S}^2 (m_p \cdot 2^{\gamma} + m_q))$ time. Adding to the space required for the main DP table the space required for the P-mapping algorithm (the space needed for the Q-mapping algorithm is insignificant with respect to the P-mapping algorithm) results in a total space complexity of $O(d_T d_S m n) + O(d_T d_S 2^\gamma) = O(d_T d_S (m n + 2^\gamma))$.
This completes the proof.
\end{proof}

\subsection{Time and Space Complexity of the P-Mapping Algorithm}
Here we prove \cref{lemma:p-time} below.
\begin{lemma}\label{lemma:p-time}
The P-mapping algorithm takes $O({d_T}^2 {d_S}^2 \gamma 2^\gamma)$ time and $O(d_T d_S 2^\gamma)$ space.
\end{lemma}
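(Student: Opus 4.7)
The plan is to bound the two resources separately, directly from the structure of \cref{alg:p-node} and the recursion in \cref{eq:p-recursion}.

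First I would handle the \emph{space}, which is the easier half. The only nontrivial data structure allocated by the algorithm is $\PP{}$, whose dimensions are $2^{\gamma} \times (d_T+1) \times (d_S+1)$ by construction (line~2). Each entry stores a single score (and, if we wish to trace back, a constant-size back-pointer), so the total space is $O(d_T d_S 2^{\gamma})$, matching the claim.

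For the \emph{time}, I would account for initialization and main-loop work separately. The outer three loops of \cref{alg:p-node} (over $C$, $k_S$, $k_T$) jointly enumerate exactly the $(d_T+1)(d_S+1) 2^{\gamma}$ entries of $\PP{}$; each initialization branch does $O(1)$ work, contributing $O(d_T d_S 2^{\gamma})$. The main cost is the per-entry evaluation of \cref{eq:p-recursion}: the first case is an $O(1)$ table lookup, and the second case maximizes over $\mu \in \Mc{}$. The key observation I would make explicit is that by \cref{def:M-leq}, the set $\Mc{}$ contains, for each child $u \in C$ and each deletion pair $(k'_T, k'_S)$ with $k'_T \le k_T$ and $k'_S \le k_S$, exactly one derivation. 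Hence $|\Mc{}| = O(|C| \cdot d_T d_S) = O(\gamma d_T d_S)$. Since the input collection $\D{}$ is already supplied in the order promised by the main algorithm (so locating the relevant derivations requires no extra sorting), each $\mu$ is processed in $O(1)$ time, giving $O(\gamma d_T d_S)$ per entry.

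Multiplying the per-entry cost by the number of entries yields
\begin{equation*}
    O(2^{\gamma} \cdot d_T d_S) \cdot O(\gamma d_T d_S) \;=\; O({d_T}^2 {d_S}^2 \gamma 2^{\gamma}),
\end{equation*}
which dominates the initialization cost and establishes the stated time bound. The only real obstacle is convincing the reader that $|\Mc{}|$ is tight at $O(\gamma d_T d_S)$ rather than something larger (e.g.\ one might worry about many derivations per $(u, k'_T, k'_S)$ triple); this is resolved by appealing directly to \cref{def:M-leq}, which stipulates a \emph{single} highest-scoring representative per triple. After that, the extraction of the ordered output from $\PP[\children{x}, \cdot, \cdot]$ via the traversal in \cref{tab:deletion-traversal-order} adds only $O(d_T d_S)$ additional work and does not affect the asymptotic bounds.
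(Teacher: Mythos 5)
Your proof is correct and follows essentially the same route as the paper's: bound the DP table to $O(d_T d_S 2^{\gamma})$ entries, bound the per-entry work by bounding $|\Mc{}|$ via \cref{def:M-leq} to $O(\gamma d_T d_S)$, and multiply. One small imprecision worth flagging: you claim each initialization branch does $O(1)$ work, but evaluating the guard $L(C,k_T,k_S)=0$ requires computing $\sum_{c\in C}\xspan{c}$, which naively costs $O(\gamma)$ per entry; the paper handles this explicitly by precomputing all $L(C,k_T,k_S)$ (and hence $\E(C,k_T,k_S)$, which you also implicitly need to filter $\Mc{}$ from $\D{}$) in $O(d_T d_S \gamma 2^{\gamma})$ time and $O(d_T d_S 2^{\gamma})$ space. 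Since that cost is dominated by the $O(d_T^2 d_S^2 \gamma 2^{\gamma})$ main-loop term, your final bound is unaffected, but the step as written understates the initialization cost.
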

\begin{proof}\label{proof:p-time} 
The most space consuming part of the algorithm is the 3-dimensional DP table. The first dimension, $C$, can be any subset of the set $\children{x}$, and therefore it is of size $2^{|\children{x}|} = 2^\gamma$. The sizes of the second and third dimensions (i.e. $k_T$ and $k_S$) are bounded by $d_T+1$ and $d_S+1$, respectively. Hence, the space of the DP algorithm is $O(d_T d_S 2^\gamma)$.

The algorithm has three parts: initialization, filling the DP table, and constructing the solution. 
The most time consuming calculation required in the initialization is the calculation of $L(C,k_T,k_S)$ in the first rule. It requires summing the spans of all nodes in $C$. This calculation will also be required in the second part of the algorithm. To avoid the repetitive calculations, it preformed once for every $(C,k_T,k_S)$ tuple and save the results. This requires $O(d_T d_S 2^{|\children{x}|}) = O(d_t d_S 2^\gamma)$ space (for this is the number of such tuples). Each value is calculated in $O(|\children{x}|) = O(\gamma)$ time. Hence, the calculation of all the $L(C,k_T,k_S)$ values (and thus all the $\E(C,k_T,k_S)$ values) takes $O(d_T d_S \gamma 2^\gamma)$ time and $O(d_T d_S 2^\gamma)$ space.
The second step is done by calculating the value of every entry in the $O(d_T d_S 2^\gamma)$ entries of $\PP{}$, using the recursion rule in \cref{eq:p-recursion}. The first line among the rule takes $O(1)$ time, since it involves looking in another entry of $\PP{}$ and basic computations. The second line of the rule involves going over all derivations $\mu \in \Mc{}$. Namely, going over all derivations with a specific end point, which derives a node in $C$ and has no more than a specific number of deletions from the tree and string (i.e. $\mu.e=\E(C,k_T,k_S)$, $\mu.v \in C$, $\mu.\dt\leq k_T$ and $\mu.\ds\leq k_S$). The number of deletions from the tree and string are bounded by $d_T$ and $d_S$, respectively, and the number of nodes in $C$ is bounded by the number of children of $x$, $\gamma$.  Hence, the time to calculate one entry of $\PP{}$ is $O(d_T d_S \gamma)$. In total, the second part of the algorithm takes $O({d_T}^2 {d_S}^2 \gamma 2^\gamma)$ time.
Finally, to construct the solution the algorithm goes over every deletion combination $k_T,k_S$ once, i.e. it takes $O(d_T d_S)$ time.
In total, the algorithm takes $O({d_T}^2 {d_S}^2 \gamma 2^\gamma) + O(d_T d_S \gamma 2^\gamma) + O(d_T d_S) = O({d_T}^2 {d_S}^2 \gamma 2^\gamma)$.
\end{proof}

\section{Figures}\label{sec:appendix-figs}

\begin{figure}[ht]
    \centering
    \begin{subfigure}[b]{0.3\textwidth}
        \centering
        \includegraphics[width=0.8\textwidth]{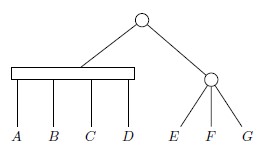}
        \caption{$T_1$}
        \label{fig:pqt1}
    \end{subfigure}
    \hfill
    \begin{subfigure}[b]{0.3\textwidth}
        \centering
        \includegraphics[width=0.8\textwidth]{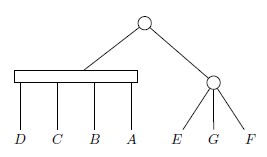}
        \caption{$T_2$}
        \label{fig:pqt2}
    \end{subfigure}
    \hfill
    \begin{subfigure}[b]{0.3\textwidth}
        \centering
        \includegraphics[width=0.8\textwidth]{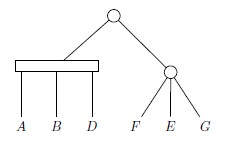}
        \caption{$T_3$}
        \label{fig:pqt3}
    \end{subfigure}
    \caption{Three different \pqt{s}. $T_2$ can be obtained from $T_1$ by reversing the children of a Q-node (the left child of the root) and by reordering the children of a P-node (the right child of the root), so $T_2 \equiv T_1$. $T_3$ can be obtained from $T_1$ by deleting one leaf and permuting the children of the right child of the root, so $T_1 \succeq_1 T_3$. Now, $T_2 \succeq_1 T_3$ can be inferred, because the $\equiv$ is an equivalence relation.
    By the definition of frontier, $F(T_1)=ABCDEFG$; $F(T_2)=DCBAEGF$; $F(T_3)=ABDFEG$.}
    \label{equiv-pqts}
\end{figure}

\begin{figure}[ht!]
    \centering
    \begin{subfigure}[b]{1\textwidth}
        \includegraphics[width=\textwidth]{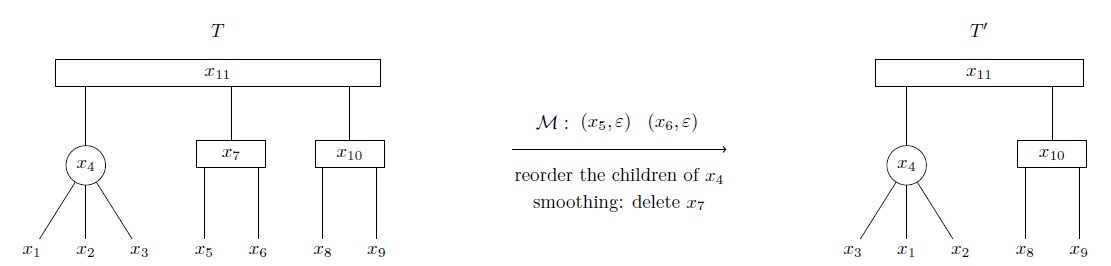}
        \caption{The derivation $\mu$ applied on $T$ resulting in $T'$: reorder the children of $x_4$, delete leaves according to $\M{}$ (delete $x_5$ and $x_6$) and preform smoothing (delete $x_7$, the parent node of $x_5$ and $x_6$). The root of $T$, $x_{11}$, is the node that $\mu$ derives, denoted $\mu.v$. Also, $\mu$ is a derivation of $x_{11}$. The nodes $x_5$, $x_6$ and $x_7$ are deleted under $\mu$. The leaves $x_1,x_2,x_3,x_8,x_9$ are mapped under $\mu$. The nodes $x_4, x_{10}, x_{11}$ are kept under $\mu$.}
        \label{fig:derivation-tree}
    \end{subfigure}
    
    \begin{subfigure}[b]{1\textwidth}
        \centering
        \includestandalone[width=\textwidth]{figures/derivation-string}
        \caption{The derivation $\mu$ on $S'$ resulting in $S_\M{}$: apply substitutions and deletions according to $\M{}$. The substring $S'=S[3:8]$ is the string that $\mu$ derives. The character $S[4]$ is deleted under $\mu$. The characters $S[3],S[5],S[6],S[7],S[8]$ are mapped under $\mu$.}
        \label{fig:derivation-string}
    \end{subfigure}
    \caption{An illustration of the derivation $\mu$ from the \pqt{} $T$ to the substring $S'$ under the \otom{} $\M{}$ ($\mu.o$) with $\mu.\dt=\delt{\M{}}=2$ deletions from the tree and $\mu.\ds=\dels{\M{}}=1$ deletions from the string.
    The start point of the derivation ($\mu.s$) is $3$. The end point of the derivation ($\mu.e$) is $8$. Notice that that $S_\M{}=F(T')$ and $T \succeq_2 T'$ which means that $S_\M{}\in C_{d_T}(T)$}
    \label{fig:derivation}
\end{figure}

\FloatBarrier

\begin{figure}[ht!]
	\centering
	\includegraphics[width=\linewidth]{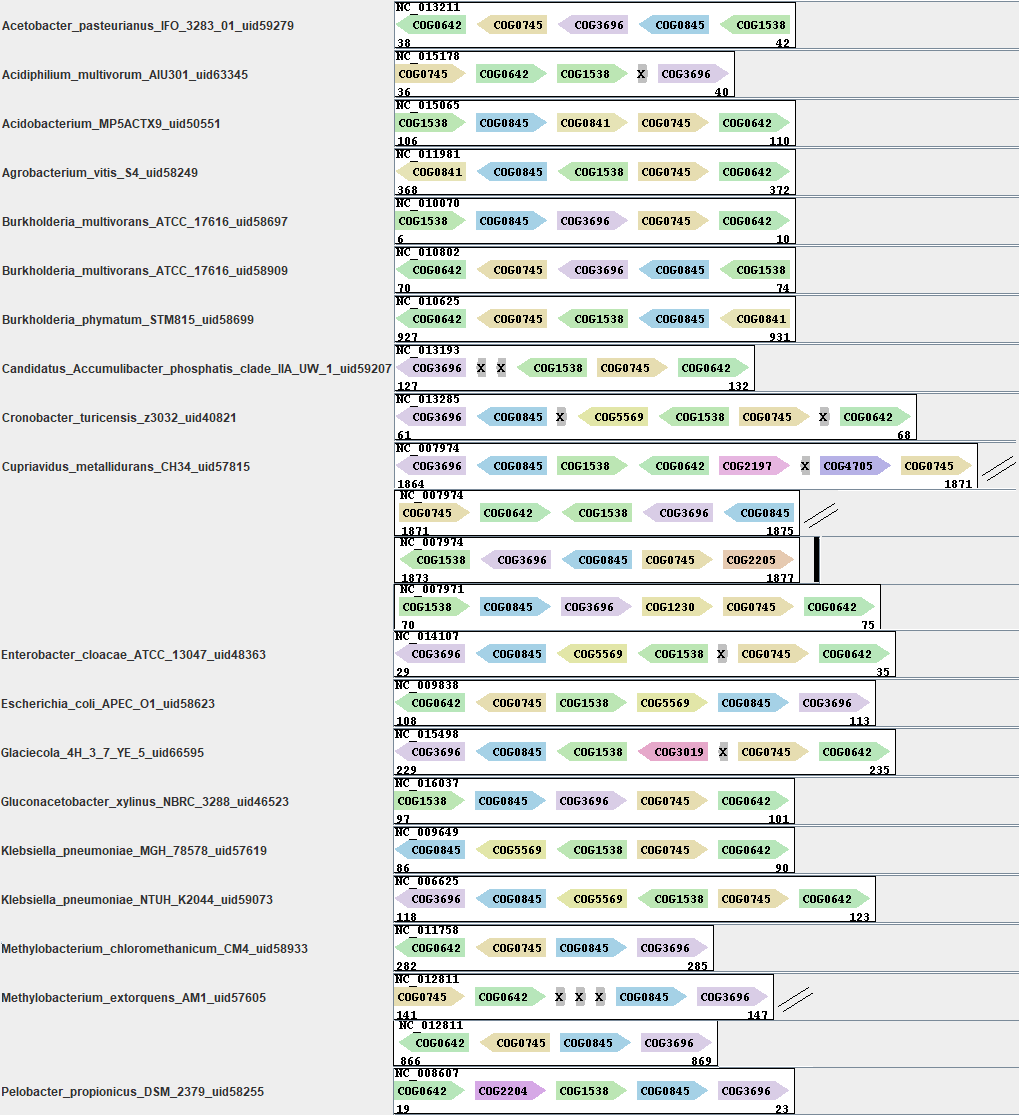}
	\caption{This figure in continued in the next page.}
\label{fig:pump_instances}
\end{figure}

\renewcommand{\thefigure}{S\arabic{figure} (Cont.)}
\addtocounter{figure}{-1}

\begin{figure}[ht]
	\centering
	\captionsetup[subfigure]{justification=centering}
	\begin{subfigure}[b]{1\textwidth}
	\centering
	\includegraphics[width=\linewidth]{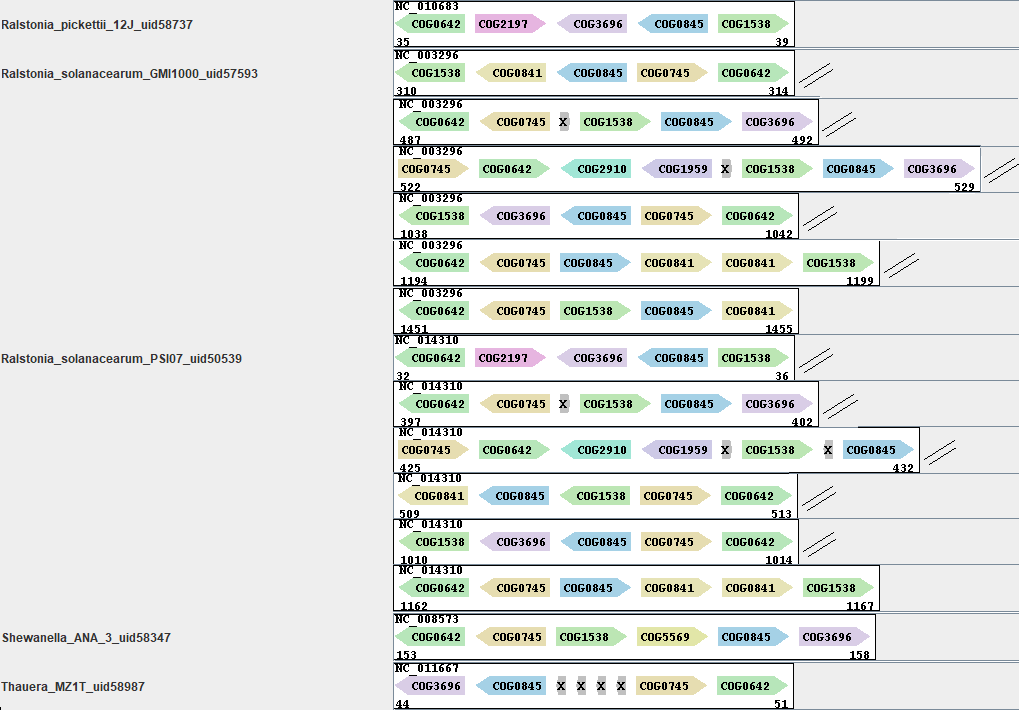}
    \caption{}
    \end{subfigure}
    \\
    \begin{subfigure}[b]{1\textwidth}
    \centering
	\includegraphics[scale=0.5]{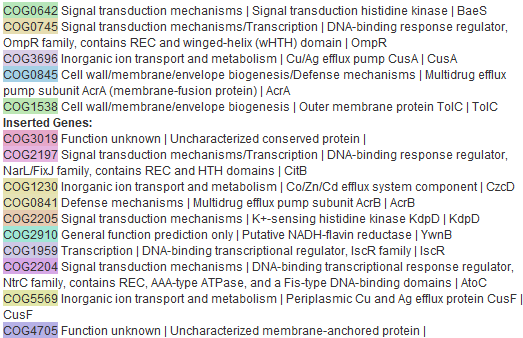}
	\caption{}
    \end{subfigure}

	\caption{ \textbf{(a)} The plasmid instances of the heavy metal efflux pump gene cluster discussed in \cref{sec:results_rnd}. The COGs of the query gene cluster are: COG0642, COG0745, COG3639, COG0845, COG1538. The instances were identified using \alg{} and displayed using the graphical interface of the tool CSBFinder-S \cite{csbfinder-s}. X indicates a gene with no COG annotation. The image was edited to display instances of the same genome in separate lines. \textbf{(b)} The functional description of the COGs shown in (a).  }
	
\label{fig:pump_instances2}
\end{figure}

\FloatBarrier
\section{Tables}

\begin{table}[h!]
\centering
\begin{tabular}{lllll}
\hline
{} &                                  PQ-tree$^1$ &   S-score & \# Genomes$^2$ &                                                                              Functional Category \\
\hline
1  &  [[0683 [[0411 0410] [0559 4177]]] 0583] &      22.5 &  5 (2) &  Amino acid transport \\
2  &  (1609 [1653 1175 0395] 3839) &      10.0 &  10 (2) &  Carbohydrate transport \\
3  &  [[1538 [3696 0845]] [0642 0745]] &       7.5 &  7 (1) &  Heavy metal efflux\\
4  &  [[2115 1070] [4213 [1129 4214]]] &       7.5 &  1 (1) &  Carbohydrate transport \\
5  &  [1960 [[2011 1135] [2141 1464]]] &       7.5 &  3 (1) &  Amino acid transport \\
6  &  [[0596 0599] [[3485 3485] 0015]] &       7.5 &  9 (1) &  Metabolism \\
7  &  [[[1129 1172 1172] 1879] 3254] &       7.5 &  6 (1) &  Carbohydrate transport \\
8  &  (1609 1869 [[1129 1172] 1879] 0524) &       7.5 &  1 (1) &  Carbohydrate transport \\
9  &  (0683 [0559 4177] [0411 0410] 0318) &       7.5 &  1 (1) &  Amino acid transport \\
10 &  (3839 0673 [[0395 1175] 1653]) &       5.0 &  10 (1) &  Carbohydrate transport \\
11 &  [0583 (0687 3842 [1176 1177])] &       5.0 &  9 (3) &  Amino acid transport \\
12 &  [1012 (0687 3842 [1176 1177])] &       5.0 &  8 (1) &  Amino acid transport \\
13 &  (0284 0461 [0540 1781] 0543 0044 0167) &       3.5 &  1 (1) &  Metabolism \\
14 &  ((2080 1319 1529) 1975 2068) &       3.3 &  6 (1) &  Energy production and conversion \\
15 &  [0044 [[0543 0167] 0284]] &       3.0 &  1 (1) &  Metabolism \\
16 &  [1802 [1638 [3090 1593]]] &       3.0 &  7 (1) &  Carbohydrate transport \\
17 &  [0410 [[4177 0559] 0683]] &       3.0 &  7 (3) &  Amino acid transport \\
18 &  [[4770 0511] [1984 2049]] &       3.0 &  4 (2) &  Metabolism \\
19 &  [[2875 [1010 2073]] 2243] &       3.0 &  9 (2) &  Metabolism \\
20 &  ([1175 0395] 1409 3839 1653) &       2.5 &  5 (2) &  Carbohydrate transport \\
21 &  [(2141 0431 0600 0715) 1116] &       2.5 &  2 (2) &  Inorganic ion transport \\
22 &  ([0601 1173] 0444 0444 0747) &       2.5 &  10 (1) &  Amino acid transport \\
23 &  [0583 (3842 1840 1178)] &       2.0 &  1 (1) &  Inorganic ion transport \\
24 &  (1464 2141 [1135 2011]) &       2.0 &  7 (3) &  Amino acid transport \\
25 &  ([2009 2142] 0479 1053) &       2.0 &  2 (1) &  Energy production and conversion \\
26 &  ([1622 0843] 0109 1845) &       2.0 &  1 (1) &  Energy production and conversion \\
27 &  (1024 1960 4770 4799) &       1.0 &  4 (1) &  Lipid transport \\
28 &  (1120 0609 0614 1629) &       1.0 &  4 (1) &  Inorganic ion transport \\
29 &  (0411 0559 4177 0683 0410 1022) &       1.0 &  3 (1) &  Amino acid transport \\
\hline

\end{tabular}
\caption{\pqt{s} for which tree-guided rearrangements were found in plasmids. $^1$Square brackets represent a Q-node; round brackets represent a P-node. Numbers indicate the respective COG IDs. $^2$This column indicates the number of genomes harboring plasmid instances of the respective \pqt. The number in brackets indicates the number of genomes harboring a tree-guided gene rearrangement of the corresponding gene cluster.}
\label{table:shuffling_full}
\end{table}

\FloatBarrier

\end{document}